\documentclass{article}
\usepackage[utf8]{inputenc}

\usepackage{graphicx}	
\usepackage{epstopdf}
\usepackage[english,vlined,ruled]{algorithm2e}	
\usepackage{amsthm}			
\usepackage{amssymb}		
\usepackage{amsmath}		
\newtheorem{theorem}{Theorem}[section]
\newtheorem{lemma}[theorem]{Lemma}
\newtheorem{proposition}[theorem]{Proposition}
\newtheorem{fact}{Fact}
\newtheorem{claim}{Claim}
\newtheorem{property}{Property}

\newcommand{\comment}[1]{}

\title{Approximation of the Double Travelling Salesman Problem with Multiple Stacks}
	\author{Laurent Alfandari, Sophie Toulouse}
\date{July 2020}

\begin{document}

\maketitle

\begin{abstract}
	The Double Travelling Salesman Problem with Multiple Stacks, $\mathsf{DTSPMS}$, deals with the collect and delivery of $n$ commodities in two distinct cities, where the pickup and the delivery tours are related by LIFO constraints. During the pickup tour, commodities are loaded into a container of $k$ rows, or stacks, with capacity $c$. This paper focuses on computational aspects of the $\mathsf{DTSPMS}$, which is NP-hard. 

	We first review the complexity of two critical subproblems: deciding whether a given pair of pickup and delivery tours is feasible and, given a loading plan, finding an optimal pair of pickup and delivery tours, are both polynomial under some conditions on $k$ and $c$. 
	
	We then prove a $(3k)/2$ standard approximation for the $\mathsf{Min\,Metric\,k\,DTSPMS}$, where $k$ is a universal constant, and other approximation results for various versions of the problem. 

	We finally present a matching-based heuristic for the $\mathsf{2\,DTSPMS}$, which is a special case with $k=2$ rows, when the distances are symmetric. 
	This yields a $1/2-o(1)$, $3/4-o(1)$ and $3/2+o(1)$ standard approximation for respectively $\mathsf{Max\,2\,DTSPMS}$, its restriction $\mathsf{Max\,2\,DTSPMS-(1,2)}$ with distances $1$ and $2$, and $\mathsf{Min\,2\,DTSPMS-(1,2)}$, and a $1/2-o(1)$ differential approximation for $\mathsf{Min\,2\,DTSPMS}$ and $\mathsf{Max\,2\,DTSPMS}$.
\end{abstract}

\section{Introduction and problem statement}\label{sec-intro}

The {\em Double Travelling Salesman Problem with Multiple Stacks}, $\mathsf{DTSPMS}$, was introduced in \cite{PM09}. It was initially motivated 
by the request of a transportation company that owns a fleet of vehicles and a depot in $m$ distinct cities. For any pair $(C,C')$ of cities, the company handles orders from the suppliers of city $C$ to the customers of city $C'$, proceeding as follows. In city $C$, a single vehicle picks up all the orders to be delivered to a customer in $C'$, storing them into a single container. Then the container is sent to $C'$, where a local vehicle delivers the commodities. The operator is thus faced with two levels of transportation: the local routing inside the cities, and the long-haul transportation between the cities. 

The $\mathsf{DTSPMS}$ addresses the local routing problem: given a set of orders from suppliers of $C$ to customers of $C'$, it aims at finding an optimal pair of tours $(T_P,T_D)$, where $T_P$ is a pickup tour on $C$ and $T_D$ is a delivery tour on $C'$. The value of a solution $(T_P,T_D)$ is the sum of the costs of tours $T_P$ and $T_D$. The problem specificity relies on the way the containers are managed: a container consists of a given number of rows that can be accessed only by their front side, and no reloading plan is allowed. The rows of the container are thus subject to {\em Last In First Out} (LIFO) rules that constrain the delivery tour. Namely, for a given row, the latter has no other choice but to handle the goods stored in this row in the precise reverse order than the one performed by the pickup tour.

\smallskip
Note that the {\em Double Travelling Salesman Problem with Multiple Stacks} is also known as the {\em Multiple Stacks Travelling Salesman Problem} ($\mathsf{STSP}$) in the litterature, \cite{TW09,T10,T12}. 


\bigskip
The $\mathsf{DTSPMS}$ has strong connections with the {\em Traveling Salesman Problem} ($\mathsf{TSP}$).
Given a node set $V$ and a distance function $d:V^2\rightarrow\mathbb{Q}^+$, the $\mathsf{TSP}$ consists in finding a tour $T$ that visits every node $i\in V$ exactly once, and whose total distance $d(T)=\sum_{e\in T}d(e)$ is minimized, or maximized in some cases. 

Formally, an instance $I=(n,k,c,d_P,d_D,\mathrm{opt})$ of the $\mathsf{DTSPMS}$ consists of a number $n$ of orders, a number $k$ of rows of the container, a capacity $c\geq \lceil n/k\rceil$ which is the same for all rows of the container, two distance functions $d_P,d_D: V^2\rightarrow\mathbb{Q}^+$ where $V =\{0,1\,\ldots,n\}$, and an optimization goal $\mathrm{opt}\in\{\min,\max\}$. Indices $i=1,\ldots,n$ refer to the orders (and thus to the location of the associated supplier in city $C$ and the associated customer in city $C'$), whereas index $0$ refers to the depot (and thus to the location of the depot in cities $C$ and $C'$). Functions $d_P,d_D$ satisfy $d_P(i, i) =d_D(i, i) =0, i\in V$.
 %
We consider instances $I_P=(V,d_P)$, $I_D=(V,d_D)$ and $I_\Sigma=(V,d_\Sigma)$ of the $\mathsf{TSP}$, where $d_\Sigma$ is defined as 
$$\begin{array}{rll}
	d_\Sigma(i,j)	&= d_P(i,j)+d_D(j,i),		&i,j\in V
\end{array}$$
and the goal on $I_P,I_D,I_\Sigma$ is to minimize {\em iff} the goal is to minimize on $I$.

On the one hand, a pair $(T_P,T_D)$ of pickup and delivery tours is feasible for the $\mathsf{DTSPMS}$ on $I$ {\em iff} there exists a loading plan of the commodities into the rows of the container that satisfies the following property: 
\begin{property}\label{pty-feasible}
For every pair of orders $(i,j)$ such that $T_P$ handles $i$ before $j$ in the pickup tour, then either $T_D$ handles $j$ before $i$, or commodities $i$ and $j$ are loaded in two distinct rows.
\end{property}
Note that such a loading plan always exists when $k=n$ and in that case the problem consists of solving two independent $\mathsf{TSP}$. 
On the other hand, given a tour $T=(0,i_1,\ldots,i_n,0)$, let $T^-$ denotes its reverse tour, {\em i.e.}, $T^-=(0,i_n,\ldots,i_1,0)$.  
Then the pair $(T,T^-)$ of pickup and delivery tours is a feasible solution for the $\mathsf{DTSPMS}$ on $I$ for all $k,c$. 
For example, one feasible loading plan consists of putting commodities $i_1,\ldots,i_c$ in the first row, then commodities $i_{c+1},\ldots,i_{2c}$ in the second row, and so on.

Hence, solving the $\mathsf{TSP}$ independently on $I_D$ and $I_P$ can be seen as a relaxation of the $\mathsf{DTSPMS}$ on $I$, and solving $\mathsf{DTSPMS}$ on $I$ provides a relaxation of the $\mathsf{TSP}$ on $I_\Sigma$. 
In particular, when $k=n$, the $\mathsf{DTSPMS}$ on $I$ reduces to solving independently the $\mathsf{TSP}$ on $I_D$ and $I_P$, and when $k=1$, the $\mathsf{DTSPMS}$ on $I$ reduces to the $\mathsf{TSP}$ on $I_\Sigma$. Therefore the  $\mathsf{DTSPMS}$ is $\mathbf{NP-hard}$. 

\smallskip
Notice that $\mathsf{DTSPMS}$ is different from the {\em Pickup and Delivery Travelling Salesman Problem} ($\mathsf{PD\,TSP}$), where pickups and deliveries are operated during the same tour. Indeed, the $\mathsf{PD\,TSP}$ is a {\em Travelling Salesman Problem with Precedence Constraints} ( $\mathsf{PC\,TSP}$) where the set of precedence constraints consists of a perfect matching on $V\backslash\{0\}$, while in the $\mathsf{DTSPMS}$, the computation of an optimal pickup tour (or of an optimal delivery tour) when the loading plan is given is a $\mathsf{PC\,TSP}$ where the precedence constraints partition $V=\{1,\ldots,n\}$ into $k$ strict orders (see section \ref{sec-cpx-P}).

\bigskip
Before closing this introduction by the organization of the paper, we now define some variants of the $\mathsf{DTSPMS}$. When dealing with routing problems, various assumptions can be made on the distance functions. They may be symmetric, or satisfy the triangular inequalities (metric case), or take values in $\{a,b\}$ for two reals $a\neq b$. 
A distance $d$ on a vertex set $V$ is symmetric when $d(u, v) =d(v, u), u, v\in V$.
It is metric provided that $d(u, v) \leq d(u, w) +d(w, v), u, v, w\in V$.
The corresponding restrictions of the $\mathsf{DTSPMS}$ are denoted by Symmetric $\mathsf{DTSPMS}$, $\mathsf{Metric\,DTSPMS}$ and $\mathsf{DTSPMS-(a,b)}$, respectively. 

We denote by $\mathsf{k\,DTSPMS}$ the restriction of the $\mathsf{DTSPMS}$ where the number $k$ of rows is a universal constant. We call the instances when $c\geq n$ as uncapacitated, and tight when $c=\lceil n/k\rceil$.

\bigskip
Many algorithms have already been proposed for the $\mathsf{DTSPMS}$: see \cite{PM09,FOT09,felipe2009double,urrutia2015dynamic} for metaheuristic approaches, \cite{LLER10} (where the authors solve the $\mathsf{2\,DTSPMS}$ by means of the $k$ best pickup tours and the $k$ best delivery tours), \cite{PAS10,AMCDI13,alba2013branch, barbato2016polyhedral} (branch and cut algorithms for the $\mathsf{DTSPMS}$) or \cite{CCS13} (a branch and bound algorithm for the $\mathsf{2\,DTSPMS}$) for exact approaches. Several methods have also been proposed for some generalizations of the problem such as the Pick-up and delivery TSP with multiple stacks \cite{sampaio2017new} and VRP (multiple tours) with multiple stacks \cite{iori2015exact, chagas2020variable}. In this paper we focus on the computational aspects of the $\mathsf{DTSPMS}$.
The paper is organized as follows: 
\begin{itemize}
	
	\item Section \ref{sec-cpx} reviews complexity aspects of two subproblems. Solutions of the $\mathsf{DTSPMS}$ consist of a pair of pickup and delivery tours on $V$ and a loading plan on $\{1,\ldots,n\}$. If one fixes the former, then deciding whether it admits a feasible loading plan or not consists of an instance of the {\em bounded coloring problem}, $\mathsf{BC}$ in permutation graphs (section \ref{sec-cpx-tour}). If one fixes the latter, then optimizing the pickup or delivery tour with respect to the considered loading plan consists of an instance of the {\em $\mathsf{TSP}$ with precedence constraints}, $\mathsf{PC\,TSP}$ where the precedence constraints partition $\{1,\ldots,n\}$ into at most $k$ linear orders (section \ref{sec-cpx-P}). These two subproblems of the $\mathsf{DTSPMS}$ may turn to be tractable, depending on $k$ and $c$.
	\item Section \ref{sec-tsp} provides approximation results with the standard ratio for the $\mathsf{DTSPMS}$ with a general number $k$ of rows, exploring connections to the $\mathsf{TSP}$. 
	We first compare the optimal value of the $\mathsf{DTSPMS}$ to optimal values of the $\mathsf{TSP}$ with distances $d_P$, $d_D$ and $d_P+d_D$, in the general and metric cases. We then derive both positive and negative approximability results for the $\mathsf{DTSPMS}$, in particular we show that Hamiltonian cycles that are $3/2$--standard approximate for the $\mathsf{Min\,Metric\,TSP}$ and the distance $d_P+d_D$ are $3$--standard approximate for the $\mathsf{Min\,Metric\,DTSPMS}$.
	\item Section \ref{sec-apx} provides standard approximation ratios for the symmetric $\mathsf{2\,DTSPMS}$, {\em i.e.} $k=2$ and distances are symmetric. We present a matching-based heuristic which yields a standard approximation ratio of $1/2-o(1)$, $3/4-o(1)$ and $3/2+o(1)$ for respectively $\mathsf{Max\,2\,DTSPMS}$, $\mathsf{Max\,2\,DTSPMS-(1,2)}$ and $\mathsf{Min\,2\,DTSPMS-(1,2)}$.
	
	\item Section \ref{sec-apx-dapx} is strictly devoted to differential approximation. The main result is a differential approximation ratio of $1/2-o(1)$ for the $\mathsf{2\,DTSPMS}$, obtained by adapting the matching heuristic of the previous section.
\end{itemize}
Section \ref{sec-conc} finally concludes the paper with some perspectives. 


\section{Complexity of $\mathsf{DTSPMS}$ subproblems} \label{sec-cpx}


A solution of the $\mathsf{DTSPMS}$ consists of two parts: the loading plan, and the pair of pickup and delivery tours. We here investigate the subproblems obtained when fixing either the loading plan, or the pair of pickup and delivery tours.

\subsection{Deciding feasibility of a pair of tours}\label{sec-cpx-tour}

A pair of tours $T_P$ and $T_D$ is a feasible pair of pickup and delivery tours {\em iff} every pair of nodes $(i,j)$ that are visited in the same order in $T_P$ and $T_D$ can be loaded in two distinct rows (Property \ref{pty-feasible}). 
Equivalently, let $G'=(V\backslash\{0\},E')$ be the graph where $E'$ consists of the pairs of commodities $i,j$ such that $T_P$ and $T_D$ visit $i,j$ in the same order; then $(T_P,T_D)$ is a feasible pair of pickup and delivery tours {\em iff} $V\backslash\{0\}$ can be partitioned into $k$ independent sets of size at most $c$ in $G'$. 
Indeed, assume that there exists such a partition $W_1,\ldots,W_k$ of $V\backslash\{0\}$.
We note the pick-up tour $T_P=(0,i_1,\ldots,i_n,0)$. One can build a feasible loading plan the following way: 
iteratively for $p=1,\ldots,n$, load the node $i_p$ at the end of the row $r$ such that $i_p \in W_r$.  
By definition of $E'$, such a loading plan admits $T_D$ as a delivery tour compatible with $T_P$: when two nodes $i\neq j\in V\backslash\{0\}$ are loaded in a same row $r$, either $T_P$ visits $i$ before $j$ and $T_D$ visits $j$ before $i$, or $T_P$ visits $j$ before $i$ and $T_D$ visits $i$ before $j$. 
Hence, every non-empty row $r\in\{1,\ldots,k\}$ consists of a subsequence $(i'_1,\ldots,i'_\ell)$ of $(i_1,\ldots,i_n)$ that is visited in the reverse order $(i'_l,\ldots,i'_\ell)$ in $T_D$.

Now $G'$ is a permutation graph (consider the ordering either $i_1,\ldots,i_n$ or $j_1,\ldots,j_n$ of $V\backslash\{0\}$) and thus, finding a coloring of $G'$ using at most $k$ colors, if such a coloring exists, requires a $O(n\log n)$ time (see \cite{PLE71}). 
Consequently, deciding the feasibility of a pair of tours for the uncapacited $\mathsf{DTSPMS}$ is in $\mathbf{P}$. 
Furthermore, the {\em Bounded Coloring problem}, $\mathsf{BC}$ consists, given a graph $G$ and two integers $k,c$ such that $kc\geq n$, in assigning to every node $i\in V(G)$ a color $f(v)\in \{1,\ldots,k\}$ in such a way that:
\begin{enumerate}
	\item $f$ is a coloring, that is, two nodes $i,j\in V(G)$ such that $(i,j)\in E(G)$ are assigned a different color;
	\item every color $r\in \{1,\ldots,k\}$ is assigned to at most $c$ vertices of $V(G)$.
\end{enumerate} 
The $\mathsf{BC}$ problem is also known as the {\em Mutual Exclusion Scheduling} problem in the literature. This problem was proved to be $\mathbf{NP-complete}$ in permutation graphs given any universal constant $c\geq 6$, \cite{J03}.
Bonomo et al. extended the $\mathsf{BC}$ problem to the {\em ``Capacitated Coloring problem''}, $\mathsf{CC}$ where a maximum size $c_r$ is specified for each color $r$. They proposed a $O(n^{k^2+k+1}k^3)$-time algorithm to solve the $\mathsf{CC}$ problem in co-comparability graphs. Consequently, deciding the feasibility of a pair of tours for the $\mathsf{k\,DTSPMS}$ is in $\mathbf{P}$.
Proposition \ref{prop-coloration} summarizes the former discussion.
\begin{proposition}[\cite{CCN12,TW09,BMO11}]\label{prop-coloration}
Given a pair $(T_P,T_D)$ of tours, deciding whether it admits or not a consistent loading plan and designing such a loading plan if the answer is YES is:
\begin{enumerate}
	\item in $\mathbf{P}$ when $c=n$ (regardless of $k$) \cite{CCN12,TW09} or $k$ is a universal constant \cite{BMO11};
	\item $\mathbf{NP-complete}$ when $c$ is a universal constant greater than $5$  \cite{BMO11}.
\end{enumerate}
\end{proposition}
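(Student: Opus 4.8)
The plan is to leverage the equivalence established in the discussion above: the pair $(T_P,T_D)$ admits a consistent loading plan if and only if the permutation graph $G'=(V\setminus\{0\},E')$, whose edges are the pairs of commodities visited in the same order by both tours, can be partitioned into at most $k$ independent sets each of size at most $c$. Since any such partition yields a loading plan by assigning the $r$th color class to row $r$ (and conversely any consistent plan yields such a partition), the whole question reduces to a bounded coloring of $G'$, and each item of the statement matches a coloring problem whose complexity is already known. I would therefore treat the three regimes separately and invoke the corresponding results.

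For the two tractable cases I would simply apply the appropriate coloring algorithm to $G'$. When $c=n$ the cardinality constraint is vacuous, so feasibility amounts to testing whether the chromatic number of $G'$ is at most $k$; as $G'$ is a permutation graph, a minimum coloring is computed in $O(n\log n)$ time by \cite{PLE71}, and comparing its number of colors to $k$ settles the decision while producing the coloring, hence the loading plan. When $k$ is a universal constant the problem is exactly the Capacitated Coloring problem $\mathsf{CC}$ on $G'$ with $k$ classes of capacity $c$; since permutation graphs are co-comparability graphs, the $O(n^{k^2+k+1}k^3)$-time algorithm of Bonomo et al.\ \cite{BMO11} runs in polynomial time for fixed $k$ and is again constructive.

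For the hardness case, membership in $\mathbf{NP}$ is clear, a loading plan being a polynomial-size certificate that is checked against Property \ref{pty-feasible} over all pairs of commodities. For the lower bound I would reduce from the Bounded Coloring problem $\mathsf{BC}$ restricted to permutation graphs, shown $\mathbf{NP}$-complete for every fixed $c\geq 6$ in \cite{J03}. The heart of the reduction is to realize an arbitrary permutation graph $G$ as the conflict graph $G'$ of some pair of tours. This is natural because building $G'$ from $(T_P,T_D)$ is, up to complementation, the very definition of a permutation graph from two linear orders: given $G$ and one of its defining permutations, I would lay $T_P$ and $T_D$ along the two associated orderings of $\{1,\ldots,n\}$, with the depot $0$ prepended and appended so that it creates no constraint. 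The class of permutation graphs being closed under complementation, the discrepancy between ``visited in the same order'' and ``inverted by the permutation'' is absorbed by selecting the right orientation of the two orders, yielding $G'\cong G$ with the unchanged parameters $(k,c)$. A consistent loading plan for $(T_P,T_D)$ then exists if and only if $G$ admits a bounded $k$-coloring with classes of size at most $c$, which is the desired equivalence.

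The main obstacle is precisely this realization step: one must ensure that $G'$ equals the input graph edge for edge, not merely that it shares its colorability, since the cardinality bound $c$ is sensitive to the exact independent-set structure and not just to the chromatic number. Keeping the two tour orders in bijection with the two linear orders defining $G$ guarantees this, so that a same-order pair in the tours corresponds to an edge of $G$ and to nothing else. Everything else is bookkeeping: the depot vertex is inert, and the threshold $c\geq 6$ is inherited verbatim from \cite{J03}, so this approach cannot push it lower.
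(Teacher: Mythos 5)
Your proof is correct and follows essentially the same route as the paper: the equivalence between loading-plan feasibility and bounded/capacitated coloring of the conflict permutation graph $G'$, with \cite{PLE71} for $c=n$, the algorithm of Bonomo et al.\ \cite{BMO11} for fixed $k$, and the $\mathsf{BC}$ hardness result of \cite{J03} for fixed $c\geq 6$. In fact you are slightly more careful than the paper, which leaves implicit the reverse realization step (that every permutation graph arises as the conflict graph of some pair of tours, modulo complementation) needed for the $\mathbf{NP}$-hardness direction.
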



\subsection{Optimizing the tours for a given loading plan}\label{sec-cpx-P}

Let $\mathcal{P}$ be a loading plan of $\{1,\ldots,n\}$. 
A tour $T$ on $V$ is a consistent pickup tour with respect to $\mathcal{P}$ {\em iff} for any pair of nodes $(i,j)$ such that $\mathcal{P}$ loads $i$ at a lower position than $j$ in a same row $r$, $T$ visits $i$ before $j$. Symmetrically, $T$ on $V$ is a delivery tour consistent with $\mathcal{P}$ {\em iff} for any two distinct nodes $i,j$ such that, in some row $r$, $\mathcal{P}$ loads $i$ before $j$, $T$ visits $j$ before $i$. Hence, the problem of computing an optimal tour with respect to $\mathrm{opt}$ and $d_P$ ({\em resp.}, $d_D$), among the pickup and delivery tours that are consistent with $\mathcal{P}$, is an instance of the {\em Travelling Salesman Problem with Precedence Constraints}, $\mathsf{PC\,TSP}$, where the precedence constraints partition $\{1,\ldots,n\}$ into $k$ strict orders. 

The $\mathsf{PC\,TSP}$ consists, given a binary relation $B$ on $\{1,\ldots,n\}$ encoding precedence constraints between pairs of nodes, in finding an optimal tour among those tours $T$ that satisfy: 
\begin{align}\label{eq-PCTSP}
	T \textrm{ visits }i\textrm{ before }j,	&&(i,j)\in B
\end{align}
This problem also is know as {\em ``the Minimum Setup Scheduling''} problem in the literature. Colbourn and Pulleyblank proposed in \cite{CP85} a dynamic programming procedure which runs within polynomial time when the precedence constraints define a partial order of bounded width, {\em i.e.}, when the maximum number of pairwise incomparable nodes is bounded above by a universal constant. 

Here, the set of pairs of nodes subject to precedence constraints for the pickup ({\em resp.}, delivery) tour instance induced by a loading plan $\mathcal{P}$, is denoted by $B_P$ ({\em resp.}, $B_D$):
\begin{align}
	B_P	&=\left\{(i,j)\,|\,\textrm{$\mathcal{P}$ loads $i$ in the same row as $j$ at a lower position than $j$}\right\}\nonumber\\
	B_D	&=\left\{(j,i)\,|\,(i,j)\in B_P\right\}\nonumber
\end{align}
 %
$B_P$ and $B_D$ encode two partial orders of same width at most $k$. Consequently,
\begin{proposition}[\cite{CCN12,TW09}]\label{prop-TSP}
	The computation of a best compatible pair $(T_P,T_D)$ of pickup and delivery tours with respect to a given loading plan can be done within a $O(k^2|V|^k)$ time. 
	This problem therefore is in $\mathbf{P}$ for all constant positive integer $k$ when considering the $\mathsf{k\,DTSPMS}$.
\end{proposition}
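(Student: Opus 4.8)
The plan is to instantiate the bounded-width dynamic program of Colbourn and Pulleyblank~\cite{CP85} for this precise setting and to bound its running time by $O(k^2|V|^k)$. First I would observe that, for a fixed loading plan $\mathcal{P}$, the value $d_P(T_P)+d_D(T_D)$ of a feasible pair splits into two terms subject to two \emph{independent} families of precedence constraints: $T_P$ must respect $B_P$ while $T_D$ must respect $B_D$. Hence one may optimize the pickup tour and the delivery tour separately and then combine the two optima, so it suffices to solve a single $\mathsf{PC\,TSP}$ instance whose precedence relation, say $B_P$, partitions $\{1,\ldots,n\}$ into the $k$ rows. Each row is a chain, so the induced partial order has width at most $k$, which is exactly the regime covered by~\cite{CP85}.

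Next I would set up the dynamic program over the \emph{order ideals} (downward-closed subsets) of this partial order. Writing $\ell_1,\ldots,\ell_k$ for the lengths of the $k$ chains, any ideal $S$ is uniquely described by a \emph{frontier} vector $(a_1,\ldots,a_k)$ with $0\le a_r\le \ell_r$, where $a_r$ is the number of elements of row $r$ already placed; the nodes eligible to be visited last are precisely the top included element of each nonempty row, hence at most $k$ of them. For such an ideal $S$ and such a maximal element $v\in S$, let $f(S,v)$ denote the optimal cost, with respect to $\mathrm{opt}$ and $d_P$, of a path that starts at the depot $0$, visits exactly the nodes of $S$ in an order consistent with $B_P$, and ends at $v$. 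With base case $f(\emptyset,0)=0$, the recurrence is
\begin{align}
	f(S,v)	&=\mathrm{opt}_{u}\bigl\{f(S\setminus\{v\},u)+d_P(u,v)\bigr\},\nonumber
\end{align}
where $u$ ranges over the at most $k$ maximal elements of $S\setminus\{v\}$, or over $\{0\}$ when $S\setminus\{v\}=\emptyset$. The optimal pickup tour is then recovered as $\mathrm{opt}_v\{f(V\setminus\{0\},v)+d_P(v,0)\}$ over the tops of the $k$ chains, which closes the path into a tour through the depot.

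For the complexity I would count the states. The number of order ideals equals $\prod_{r=1}^k(\ell_r+1)\le(n+1)^k=|V|^k$, each ideal carries at most $k$ admissible last nodes, and each value $f(S,v)$ is computed from at most $k$ predecessor values by a single distance lookup and comparison. This yields $O(k)$ work per state and $O(k^2|V|^k)$ in total. Running the same program on $B_D$ with $d_D$ treats the delivery tour within the same bound, and summing the two optima returns the best compatible pair; since $k$ is a universal constant for the $\mathsf{k\,DTSPMS}$, the exponent $k$ is fixed and the procedure is polynomial, placing the problem in $\mathbf{P}$.

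I expect the main difficulty to lie in the accounting rather than in any genuine obstacle: one must verify that the frontier representation enumerates exactly the ideals respecting the $k$ chains (so that the width-$k$ structure of the loading plan is used in full), that the last visited node is always one of the at most $k$ maximal elements, and that $\prod_r(\ell_r+1)$ is correctly bounded by $|V|^k$ so as to match the stated running time. Treating the depot $0$ as a forced endpoint lying outside the precedence partial order is the only modeling point that requires a little care.
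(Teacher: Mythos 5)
Your proof is correct and follows essentially the same route as the paper: both decompose the problem into two independent $\mathsf{PC\,TSP}$ instances (one for $B_P$ with $d_P$, one for $B_D$ with $d_D$) and run the Colbourn--Pulleyblank/Held--Karp dynamic program restricted to order ideals of the width-$k$ poset with a maximal element as endpoint, giving $O(k|V|^k)$ states and $O(k)$ work per state. Your frontier-vector encoding of ideals is just a concrete rewriting of the paper's antichain-based state space (the paper itself notes that antichains here correspond to choosing at most one element per row), so the two arguments coincide.
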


Let us comment a bit further this fact. The dynamic procedure proposed in \cite{CP85} can be seen as some application of the dynamic programming procedure for the $\mathsf{TSP}$ \cite{HK62}. Let $T^*$ refer to an optimal tour on $V$. Moreover, given a subset $W\subseteq V\backslash\{0\}$ together with a node $i\in W$, let $P^*_{W,i}$ refer to an optimal Hamiltonian path from $0$ to $i$ on $W\cup\{0\}$.
Then the optimal tour expresses as
\begin{align}\label{eq-prog_dyn-opt}
	T^*			&=\arg\mathrm{opt}_{i\in V\backslash\{0\}}\left\{d(P^*_{V\backslash\{0\},i}) +d(i,0)\right\}
\end{align}
for a collection $\{P^*_{W,i}\,|\,W\subseteq V\backslash\{0\},i\in W\}$ of paths that satisfies the system:
\begin{align}
	P^*_{\{i\},i}	&=(0,i),		&i\in V\backslash\{0\}\label{eq-prog_dyn-init}\\
	P^*_{W,i}		&=\arg\mathrm{opt}_{j\in W\backslash\{i\}}\left\{d(P^*_{W\backslash\{i\},j}) +d(j,i)\right\},
								&i\in W\subset V\backslash\{0\}\label{eq-prog_dyn-rule}
\end{align}
Any algorithm that implements this principle will generate $\sum_{p=2}^n p\binom{n}{p}=\Theta(n 2^n)$ paths $P^*_{W,i}$, while the computation of a given path $P^*_{W,i}$ requires the comparizon of $|W|-1$ values. 

Now, when considering a set $B$ of precedence constraints, not any pair $(W\subseteq V\backslash\{0\},i\in W)$ is relevant, as not any elementary path with starting vertex $0$ can be completed into a tour on $V$ that is consistent with $B$. 
Let $\pi(W)$ denote the set of vertices $i$ such that $(i,j)\in B$ for some vertex $j\in W$; then, on the one hand, $W$ must satisfy $\pi(W)\subseteq W$. On the other hand, $i$ must satisfy $i\in W\backslash\pi(W)$. 
\comment{
	On the one hand, if $\pi(W)$ denotes the set of vertices $i$ such that $(i,j)\in B$ for some vertex $j\in W$, then $W$ must satisfy $\pi(W)\subseteq W$: otherwise, not any Hamiltonian path on $W\cup\{0\}$ with starting node $0$ can be completed into a Hamiltonian path on $V$ that is consistent with $B$. 
On the other hand, any hamiltonian path on $W\cup\{0\}$ with starting node $0$ must end at some node $i$ such that $i\in W\backslash\pi(W)$. 
}%
Equivalently, we shall restrict to the pairs $(W\cup\pi(W),i)$ where $W$ is an antichain of the considered partial order and $i\in W$. 
If $k$ refers to the width of the partial order, then the number of such pairs is bounded above by $\sum_{p=1}^kp\binom{n}{p}\leq kn^k$. 
In particular when $B=B_D$ ({\em resp.}, $B=B_P$), the antichains to consider correspond to the choice of at most one element per row and the dynamic procedure runs within $O(k^2 n^k)$-time. 


\section{Standard approximation of the $\mathsf{DTSPMS}$ using reductions to the $\mathsf{TSP}$}\label{sec-tsp}

We study the relative complexity of $\mathsf{DTSPMS}$ in regards to $\mathsf{TSP}$. The question is to know in what extent $\mathsf{DTSPMS}$ is harder to solve  than the $\mathsf{TSP}$, and how far the optimal value of $\mathsf{DTSPMS}$ is from the optimal value of related $\mathsf{TSP}$. 


\subsection{Comparison of optimal values}
In what follows, we describe a loading plan as a collection $\mathcal{P}=(P_1,\ldots,P_k)$ of node-disjoint paths on $V\backslash\{0\}=\{1,\ldots,n\}$. 
Such a collection actually corresponds to a loading plan {\em iff} $(i)$ it partitions $V\backslash\{0\}$, {\em i.e.}, $\cup_{r=1}^k V(P_r)=V\backslash\{0\}$ and $(ii)$ each path $P_r$ connects at most $c$ nodes. 
The loading plan loads item $i$ in position $p$ of row $r$ {\em iff} node $i$ is in position $p$ in path $P_r$.

\smallskip
Let $I=(n,k,c,d_P,d_D,\mathrm{opt})$ be an instance of the $\mathsf{DTSPMS}$. We consider the three instances $I_P,I_D,I_\Sigma$ of the $\mathsf{TSP}$ that were introduced in section \ref{sec-intro-apx_pty}.
The following relations are trivially satisfied: 
\begin{align}\label{eq-TSP-STSP-opt}
	OPT(I_P)+OPT(I_D)\ \succeq OPT(I)\ \succeq OPT(I_\Sigma)
\end{align}

\smallskip
Now assume that distance functions $d_P$ and $d_D$ are metric, and the goal on $I$ is to minimize. Since both $d_D$ and $d_P$ satisfy the triangular inequalities, so does $d_\Sigma$, and thus $I_\Sigma$ is metric. 
The metric case provides a more accurate comparison of the optimal values on $I$ and $I_\Sigma$ than in the general case:

\begin{figure}[t]
\begin{center}
	\includegraphics{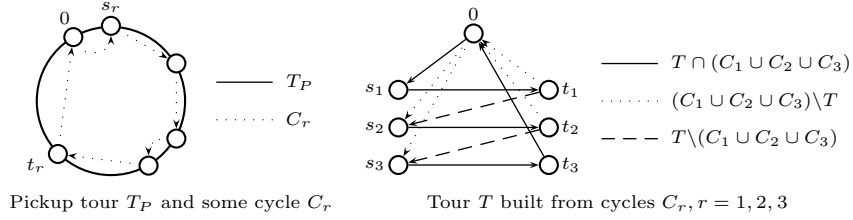}
	\caption{\label{fig-min-metric-T} The tour $T$ of Lemma \ref{lem-metric}: illustration when $h=3$}
\end{center}
\end{figure}

\begin{lemma}\label{lem-metric} For all instances $I =(n,k,c,d_P,d_D)$ of the $\mathsf{Min\,Metric\,DTSPMS}$, $I$ and its related instance $I_\Sigma$ of the $\mathsf{Min\,Metric\,TSP}$ satisfy
	\begin{align}\label{eq-metric-opt}
		OPT(I_\Sigma)	\leq k\,OPT(I) 
	\end{align}
Furthermore, relation $(\ref{eq-metric-opt})$ is asymptotically tight.
\end{lemma}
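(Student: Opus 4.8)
The plan is to prove the two assertions separately: the inequality $OPT(I_\Sigma)\le k\,OPT(I)$ by turning an optimal $\mathsf{DTSPMS}$ solution into a single $\mathsf{TSP}$ tour on $I_\Sigma$, and the asymptotic tightness by exhibiting a family of instances on which no single tour can do better than $k$ times the $\mathsf{DTSPMS}$ optimum.

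For the inequality, first fix an optimal pair $(T_P,T_D)$ for $I$ together with a feasible loading plan, and let $W_1,\dots,W_k$ be the node sets of its $k$ rows (Property \ref{pty-feasible}). For each row $r$ I would \emph{shortcut} the pickup tour $T_P$ to the sub-instance $\{0\}\cup W_r$: deleting the nodes outside $\{0\}\cup W_r$ and reconnecting consecutive survivors yields a tour $\tilde T_r$ that visits $W_r$ in the same order as $T_P$ and, by the triangle inequality for $d_P$, satisfies $d_P(\tilde T_r)\le d_P(T_P)$. Since within a row the delivery tour reverses the pickup order, the analogous shortcut of $T_D$ to $\{0\}\cup W_r$ is exactly the reverse tour $\tilde T_r^-$, and $d_D(\tilde T_r^-)\le d_D(T_D)$. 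Hence, reading $\tilde T_r$ as a tour for $I_\Sigma$, we get $d_\Sigma(\tilde T_r)=d_P(\tilde T_r)+d_D(\tilde T_r^-)\le d_P(T_P)+d_D(T_D)=OPT(I)$ for every $r$. This is the tour drawn in Figure \ref{fig-min-metric-T} (there for $h=3$ rows).

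It then remains to glue the $k$ row-tours into a single tour on all of $V$. Concatenating $\tilde T_1,\dots,\tilde T_k$ at the common node $0$ produces a closed walk that visits every commodity once and the depot $k$ times; repeatedly shortcutting the surplus occurrences of $0$ (again via the triangle inequality, which holds for $d_\Sigma$ since both $d_P$ and $d_D$ are metric) gives a genuine tour $T$ with $d_\Sigma(T)\le\sum_{r=1}^k d_\Sigma(\tilde T_r)\le k\,OPT(I)$, whence $OPT(I_\Sigma)\le d_\Sigma(T)\le k\,OPT(I)$. I would stress that $d_\Sigma$ is in general \emph{asymmetric}, but the shortcutting uses only the triangle inequality, so this causes no trouble.

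The delicate part is the asymptotic tightness. Here I would look for instances on which the $k$ rows let the $\mathsf{DTSPMS}$ keep inter-row pairs in their pickup order while the single tour of $I_\Sigma$ is forced by the complete reversal $T_D=T_P^-$ to invert all of them at a high price. Concretely I would use \emph{asymmetric} distances arranged almost cyclically, so that the pickup and the delivery tour each wind once cheaply (with the within-row reversal essentially free), whereas any single $I_\Sigma$-tour must, after reversal, wind about $k$ times. The main obstacle is precisely to realise the full factor $k$: a naive single-scale cyclic gadget balances $OPT(I)=OPT(I_P)+OPT(I_D)$ against a reversal penalty that reaches only about $k^2$, giving a ratio tending to $k/2$, and averaging the winding number over all tours shows that no purely cyclic single-node instance can beat $k/2$. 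Overcoming this requires a construction in which each row is individually expensive to visit yet no single tour can cheaply merge the rows; proving the matching lower bound $OPT(I_\Sigma)\ge(k-o(1))\,OPT(I)$ over \emph{all} tours, rather than just the reverse of the optimal pickup tour, is where I expect the real work to lie.
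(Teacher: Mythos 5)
Your proof of inequality (\ref{eq-metric-opt}) is correct and is essentially the paper's own argument: your shortcut tours $\tilde T_r$ are exactly the cycles $C_r=\{(0,s_r)\}\cup P_r\cup\{(t_r,0)\}$ that the paper builds from the rows of the loading plan (Property \ref{pty-feasible} guarantees that the row order agrees with $T_P$ and is reversed by $T_D$), your bounds $d_P(\tilde T_r)\leq d_P(T_P)$ and $d_D(\tilde T_r^-)\leq d_D(T_D)$ are the paper's relations (\ref{eq-metric-TP-C}) and (\ref{eq-metric-TD-C}), and your gluing of the row tours at the depot, justified by the triangle inequality for $d_\Sigma$, is precisely the paper's final tour $T$ (the paper phrases it with the number $h\leq k$ of non-empty rows, a cosmetic difference). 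Your remark that only the triangle inequality, not symmetry, is used is also consistent with the paper.

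The tightness half, however, is a genuine gap: it is part of the lemma's statement, and you explicitly leave both the construction and the lower bound ``over all tours'' unproven. The paper resolves it with a symmetric, two-scale cyclic construction $I(\lambda,k,c)$ on $kc+1$ vertices: $d_P$ equals $\lambda$ along the consecutive cycle $T_P=\{(i,i+1)\}$, $d_D$ equals $1$ along the step-$c$ cycle $T_D=\{(i,i+c)\}$ (indices modulo $kc+1$), and both are completed by metric closure. The pair $(T_P,T_D)$ is feasible (load row $r$ with $(r-1)c+1,\ldots,rc$) and each tour is edge-wise optimal for its own distance, so $OPT(I(\lambda,k,c))=(kc+1)(\lambda+1)$; the lower bound over all tours of $I_\Sigma$ needs no global averaging argument, only a per-edge one: two nodes at cycle-distance less than $c$ in $T_P$ are at cycle-distance at least $k$ in $T_D$, hence every edge satisfies $d_\Sigma(i,j)\geq\min\{\lambda+k,\,c\lambda+1\}$, giving $OPT(I_\Sigma(\lambda,k,c))=(kc+1)\min\{\lambda+k,\,c\lambda+1\}$. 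Taking $\lambda=(k-1)/(c-1)$ and $c\rightarrow+\infty$ makes the ratio tend to $k$. Note that this defeats exactly the $k/2$ barrier you predicted for cyclic gadgets, and it does so not with asymmetric distances but with \emph{unbalanced} ones: since $\lambda\rightarrow 0$, the pickup share of $OPT(I)$ is negligible, so the factor-$k$ penalty incurred in $I_\Sigma$ applies to essentially the whole of $OPT(I)$ rather than to half of it; the ``orthogonality'' that forces the penalty lives in the two different cyclic orders (step $1$ versus step $c$), not in the orientation of the edges.
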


\begin{proof}
Let $(\mathcal{P},T_P,T_D)$ be any feasible solution on $I$, and let $h$ refer to the number of rows $\mathcal{P}$ actually uses. We establish inequality (\ref{eq-metric-opt}) by deriving from $(\mathcal{P},T_P,T_D)$ a solution $T$ of the $\mathsf{TSP}$ on $I_\Sigma$ with value:
\begin{align}
	d_\Sigma(T)	&\leq h\left(d_P(T_P)+d_D(T_D)\right)\label{eq-metric-reduc}
\end{align}

We assume {\em w.l.o.g} that in $\mathcal{P}$, rows $1$ to $h$ are the non-empty ones. Each row $r \in \{1,\ldots,h\}$ consists of a path $P_r$ from some node $s_r$ to some node $t_r$. We define the following cycles: 
\begin{align}\nonumber
	C_r		&=\{(0,s_r)\}\cup P_r \cup\{(t_r,0)\}, 	&r\in\{1,\ldots,h\}
\end{align} 
For any row $r\in \{1,\ldots,h\}$, $T_P$ and $P_r$ visit the nodes of $V(P_r)$ in the same order, whereas $T_D$ visits these nodes in the reverse order 
(see Figure \ref{fig-min-metric-T}). We deduce from this observation and the fact that $d_P,d_D$ satisfy the triangular inequality:
\begin{align}
	d_P(C_r) 	&\leq d_P(T_P),	&r\in \{1,\ldots,h\} \label{eq-metric-TP-C}\\
	d_D(C_r^-)	&\leq d_D(T_D),	&r \in \{1,\ldots,h\} \label{eq-metric-TD-C}
\end{align}
where $C^-$ denotes cycle $C$ in the reverse order. 
Summing relations $(\ref{eq-metric-TP-C})$ and $(\ref{eq-metric-TD-C})$ over $r=1,\ldots,h$, we obtain:
\begin{align}\label{eq-metric-TP-TD-C}
	\sum_{r=1}^{h} d_\Sigma(C_r)		&\leq h\left(d_P(T_P)+d_D(T_D)\right) 
\end{align}

Now consider the tour $T$ defined as (see Figure \ref{fig-min-metric-T}):
\begin{align}\nonumber
	T	&=\left(\bigcup_{r=1}^{h} C_r \backslash \bigcup_{r=1}^{h-1} \left\{(t_r,0),(0,s_{r+1})\right\}\right)\bigcup_{r=1}^{h-1}\left\{(t_r,s_{r+1})\right\}
\end{align}
By construction, we have:
\begin{align}\label{eq-def-T}
d_\Sigma(T) &=	
     \sum_{r=1}^h d_\Sigma(C_r)
	-\sum_{r=1}^{h-1} \left( d_\Sigma(t_r,0) +d_\Sigma(0,s_{r+1}) -d_\Sigma(t_r,s_{r+1})\right)
\end{align}
Since $d_\Sigma$ is metric, for all $r\in\{1 ,\ldots, h -1\}$, quantity $d_\Sigma(t_r,0) +d_\Sigma(0,s_{r+1}) -d_\Sigma(t_r,s_{r+1})$ is non-negative. It thus follows from relations (\ref{eq-def-T}) and  (\ref{eq-metric-TP-TD-C}) that the proposed tour $T$ indeed satisfied inequality (\ref{eq-metric-reduc}).
Relation (\ref{eq-metric-opt}) then is a straightforward consequence of inequalities 
    $OPT(I_\Sigma) \leq d_\Sigma(T)$,
    $d_\Sigma(T) \leq 	h (d_P(T_P)+d_D(T_D))$ and $h \leq k$.

\begin{figure}[t]
\begin{center}
	\includegraphics{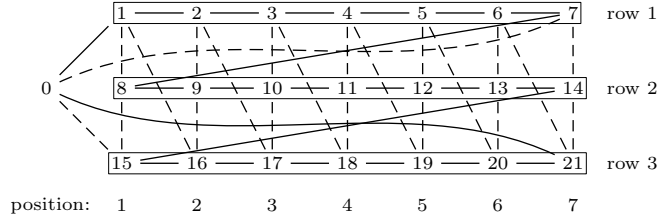}
	\caption{\label{fig-min-metric-tight} Tighness of relation (\ref{eq-metric-opt}): illustration when $k=3$ and $c=7$. Plain lines indicate edges of distance $\lambda$ for $d_P$ whereas dashed lines indicate edges of distance $1$ for $d_D$.}
\end{center}
\end{figure}
It remains us to establish that this relation is asymptotically tight.
To do so, we associate with any real number $\lambda>0$ and any two integers $k, c$ such that $c\geq k\geq 2$ an instance $I(\lambda,k,c)$ of the $\mathsf{Min\,Metric\,DTSPMS}$ on $kc +1$ vertices (including the depot vertex) with $k$ rows, each with capacity $c$. 
Let $T_P$ and $T_D$ be the tours defined by (indices are taken modulo $kc$):
$$\begin{array}{rlrl}
    T_P &=\{(i,i+1)\,|\,i\in\{0,\ldots,kc\}\}   
    &T_D &=\{(i,i+c)\,|\,i\in\{0,\ldots,kc\}\}
\end{array}$$
Then on $I(\lambda,k,c)$, 
    $d_P$ takes value $\lambda$ over $T_P$,
    $d_D$ takes value 1 $T_D$,
    $d_P$ and $d_D$ are defined by metric closure anywhere else.

By construction, $T_P$ and $T_D$ are optimal on respectively $I_P$ and $I_D$. Moreover, the pair $(T_P,T_D)$ is feasible for the $\mathsf{DTSPMS}$ on $I(\lambda,k,c)$, considering the loading plan $\mathcal{P}=(P_1,\ldots,P_k)$ defined as (see Figure \ref{fig-min-metric-tight} for some illustration):
$$\begin{array}{rll}
    P_r &=\left((r-1)c+1,(r-1)c+2,\ldots,rc\right),    &r\in\{1,\ldots,k\}
\end{array}$$
With respect to $\mathcal{P}$, $T_P$ follows paths $P_1,P_2,\ldots,P_k$ in that order, whereas $T_D$ visits the nodes according to their position in the container, from position $c$ to position $1$. 
The pair $(T_P,T_D)$ therefore defines an optimal solution on $I(\lambda,k,c)$, with value:
\begin{align}\label{eq-min-metric-tight-I}
	OPT(I(\lambda,k,c))	&=(kc+1)\times (\lambda+1)
\end{align}

Now consider instance $I_\Sigma(\lambda,k,c)$ of the $\mathsf{Min\,Metric\,TSP}$. Let $i, j$ be two distinct nodes from $\{0 ,\ldots, kc\}$. By definition of $T_P$ and $T_D$, if $i$ and $j$ are at distance lesser than $c$ in $T_P$, then they are at distance at least $k$ in $T_D$ and thus, $d_\Sigma(i, j) \geq 1 \times \lambda +k \times 1$. Otherwise, we have $d_\Sigma(i, j) \geq c \times \lambda +1 \times 1$. Furthermore, if $i$ and $j$ are adjacent in $T_P$ ({\em resp.}, in $T_D$), then they are at distance exactly $k$ ({\em resp.}, $c$) in $T_D$ ({\em resp.}, in $T_P$). We deduce that either $T_P$ or $T_D$ is optimal on $I_\Sigma(\lambda,k,c)$, depending on $c \lambda +1$ {\em versus} $\lambda +k$. The optimal value on $I_\Sigma(\lambda,k,c)$ therefore satisfies 
\begin{align}\label{eq-min-metric-tight-ISigma}
	OPT(I_\Sigma(\lambda,k,c))	
	    &= (kc+1) \times \min\{\lambda +k,\lambda c+1\}
\end{align}

In particular if $\lambda +k \leq \lambda c+1$ {\em iff} $\lambda \geq (k -1)/(c -1)$, then we have:
$$\frac{OPT(I_\Sigma(\lambda,k,c))}{OPT(I(\lambda,k,c))}
	=\frac{\lambda +k}{\lambda+1}$$
This ratio tends to $k$ when $\lambda =(k -1)/(c -1)$ and $c$ tends to $+\infty$, what concludes the proof.
\end{proof}

\subsection{Approximation results}


Approximation theory aims at providing approximate solutions of good quality for optimization problems that are hard to solve. Although we recall some definitions, we assume that the reader is familiar with the main concepts of approximation theory. If not, we refer to, {\em eg.}, \cite{ACGKMP99,DP96} for standard and differential approximation, respectively. In what follows, since one manipulates both maximization and minimization goals, we use notations $\succeq$, $\succ$, $\mathop{opt}$, $\mathop{\overline{opt}}$ instead of $\geq$, $>$, $\max$, $\min$ ({\em resp.}, $\leq$, $<$, $\min$, $\max$) if the goal is to maximize ({\em resp.}, to minimize).

Let $\Pi\in\mathbf{NPO}$ be an optimization problem and $\mathcal{I}_\Pi$ its set of instances. Given an instance $I\in \mathcal{I}_\Pi$, the value of an optimal solution on $I$ is denoted by $OPT(I)$. Finally, let $\mathcal{A}$ be an algorithm that provides feasible solutions for $\Pi$; then $APX(I)$ refers to the value of the solution output by $\mathcal{A}$ on $I\in \mathcal{I}_\Pi$. 

The standard approximation ratio compares the value $APX(I)$ of the approximate solution to the optimal value $OPT(I)$. If $\Pi$ is a maximation problem, then $\mathcal{A}$ is said to be $\rho$--approximate for some function $\rho:\mathcal{I}_\Pi\rightarrow ]0,1]$ {\em iff}
\begin{align}\nonumber
	APX(I)  &\geq \rho(I) \times OPT(I),	&I\in \mathcal{I}_\Pi
\end{align}
If the goal on $\Pi$ is to minimize, $\mathcal{A}$ is said to be $\rho$--approximate for some function $\rho:\mathcal{I}_\Pi\rightarrow [1,+\infty[$ {\em iff}
\begin{align}\nonumber
	APX(I)  &\leq \rho(I) \times OPT(I),	&I\in \mathcal{I}_\Pi
\end{align}
 %
$\Pi$ is said to be approximable within factor $\rho$ {\em iff} it admits a polynomial time $\rho$--approximation algorithm. 

For instance, within the standard approximation framework, the Symmetric $\mathsf{Max\,TSP}$ is approximable within factor $61/81-o(1)$, \cite{COW05}.
By contrast, the Symmetric $\mathsf{Min\,TSP}$ is not approximable within any constant factor. However, when restricting to metric instances, the Symmetric $\mathsf{Min\,Metric\,TSP}$ is approximable within a factor of $3/2$ (by means of the famous Christofides algorithm \cite{C76}).

Preliminary observe that the $\mathsf{TSP}$ naturally reduces to the $\mathsf{DTSPMS}$. Namely, let $I$ be an instance of the $\mathsf{TSP}$ on vertex set $V$, and pick any $v\in V$. Given any two non-negative integers $k, c$ such that $kc \geq |V| -1$, we can associate with $I$ an instance $I'$ of $\mathsf{DTSPMS}$ with $k$ rows of capacity $c$, on which vertex $v$ represents the depot vertex, and a tour $T$ on $V$ takes distances $d_P(T) =d(T)/2$ and $d_D(T) =d(T^-)/2$. Given any tour $T$ on $V$, $(T,T^-)$ is a feasible pair of pickup and delivery tours on $I'$, with value $d(T)$. Conversely, any feasible pair $(T_P,T_D)$ of pickup and delivery tours on $I'$ brings two feasible solutions $T_P$ and $T_D^-$ of $I$ whose value satisfy:
$$\begin{array}{rll}
	\min\left(d(T_P), d(T_D^-)\right)	&\leq \left(d_P(T_P) +d_D(T_D)\right)/2    
	                                    &\leq \max\left(d(T_P),d(T_D^-)\right)
\end{array}$$
Instance $I'$ of the $\mathsf{DTSPMS}$ can be seen as some generalization of instance $I$ of the $\mathsf{TSP}$ in that $I$ admits more solutions and possibly more solution values than $I$ does. However, one can with each such solution $(T_P, T_D)$ associate a solution $T$ of $I$ -- and thus, a solution $(T, T^-)$ of $I'$ -- with value at least $d_P(T_P) +d_D(T_D)$ if the goal is to maximize, at most $d_P(T_P) +d_D(T_D)$ if the goal is to minimize. We conclude that the two instances $I$ and $I'$ are equivalent to approximate. 

Thereby, known inapproximability bounds for the $\mathsf{TSP}$ also hold for the $\mathsf{DTSPMS}$. In particular, the symmetric $\mathsf{Min\,TSP}$ is not approximable within ratio $2^{-p(|V|)}$ for all polynomials $p$ unless $\mathbf{P}=\mathbf{NP}$ (folklore, but see {\em e.g.}, \cite{WS11}). The Symmetric $\mathsf{Min\,DTSPMS}$ therefore is $\mathbf{NP}-hard$ to approximate within a ratio of $O\left(2^{|V|}\right)$. 
 %
We similarly deduce from \cite{BJWW98} that $\mathsf{Max\,Metric\,DTSPMS}$ is $\mathbf{Max\,SNP}$-hard. 

$\mathsf{DTSPMS}$ reduces to $\mathsf{TSP}$ by means of a polynomial-time reduction that preserves standard approximation up to some factor for the maximization case, as well as for the bivalued case, as described in the proposition below:
\begin{proposition}\label{prop-reduc-stsp2tsp}
	The $\mathsf{DTSPMS}$ reduces to the $\mathsf{TSP}$ by means of a polynomial time reduction that maps $\rho$--standard approximate solutions of the $\mathsf{TSP}$ onto solutions of the $\mathsf{DTSPMS}$ with a standard approximation guarantee of
	\begin{enumerate} 
		\item[(i)] \label{it-apx_from_tsp-max} $\rho/2$ for $\mathsf{Max\,DTSPMS}$,
		\item[(ii)] \label{it-apx_from_tsp-max_ab} $(\rho +a/b)/2$ for $\mathsf{Max\,DTSPMS-(a,b)}$,
		\item[(iii)] \label{it-apx_from_tsp-mim_ab} $(\rho +b/a)/2$ for $\mathsf{Min\,DTSPMS-(a,b)}$.
	\end{enumerate}
	The reduction preserves the distance properties of the input instance. It thus notably maps
	symmetric, metric, 
	    $(a,b)$-valued instances of the $\mathsf{DTSPMS}$ to respectively 
    symmetric, metric, 
        $(a,b)$-valuated instances of the $\mathsf{TSP}$.
\end{proposition}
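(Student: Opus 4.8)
The plan is to realize the reduction by solving the $\mathsf{TSP}$ separately on the two instances $I_P=(V,d_P)$ and $I_D=(V,d_D)$, which manifestly inherit the symmetry, metricity or $(a,b)$-valuation of $d_P,d_D$; this already secures the distance-preservation claim, since both target instances are built verbatim from $d_P$ and $d_D$ (whereas $I_\Sigma$ would destroy the $(a,b)$ property). Given a $\rho$--approximation algorithm $\mathcal{A}$ for the $\mathsf{TSP}$, I would run it on $I_P$ and on $I_D$, obtaining tours $T_P'$ and $T_D'$, and then return the better, with respect to $\mathrm{opt}$, of the two $\mathsf{DTSPMS}$ solutions $(T_P',(T_P')^-)$ and $((T_D')^-,T_D')$. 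Both are feasible for every $k,c$ by the observation in Section~\ref{sec-intro} that $(T,T^-)$ always admits a consistent loading plan.

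The first routine step is to record that the $\mathsf{DTSPMS}$ value of a solution of the form $(T,T^-)$ equals $d_P(T)+d_D(T^-)$. For item~(i), in the maximization case this value is at least $\rho\,OPT(I_P)$ when $T=T_P'$ (and at least $\rho\,OPT(I_D)$ when $T=(T_D')^-$, swapping the roles of $P$ and $D$), simply because $d_P,d_D\geq 0$ and $\mathcal{A}$ is $\rho$--approximate. Since the returned solution is the larger of the two, its value is at least the average $\tfrac\rho2\bigl(OPT(I_P)+OPT(I_D)\bigr)$; relation~(\ref{eq-TSP-STSP-opt}), which for maximization reads $OPT(I)\leq OPT(I_P)+OPT(I_D)$, then yields $APX(I)\geq\tfrac\rho2\,OPT(I)$, i.e. item~(i).

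The same averaging idea drives items~(ii) and~(iii); the extra ingredient is the elementary bivalued estimate that, since every Hamiltonian cycle on $V$ has exactly $|V|$ edges, any tour $H$ satisfies $|V|\,a\leq d(H)\leq|V|\,b$ when distances lie in $\{a,b\}$ with $a\leq b$. For maximization I would replace the crude bound $d_D((T_P')^-)\geq 0$ by $d_D((T_P')^-)\geq |V|\,a\geq\tfrac ab\,OPT(I_D)$ (using $OPT(I_D)\leq|V|\,b$), and symmetrically for the other candidate; each candidate value is then at least $\rho\,OPT$ on one index plus $\tfrac ab\,OPT$ on the other, and averaging against $OPT(I)\leq OPT(I_P)+OPT(I_D)$ gives the factor $\tfrac{\rho+a/b}{2}$. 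The minimization $(a,b)$ case is dual: one bounds the uncontrolled tour from above, $d_D((T_P')^-)\leq|V|\,b\leq\tfrac ba\,OPT(I_D)$ (using $OPT(I_D)\geq|V|\,a$), notes that the better of the two returned solutions now lies below their average, and invokes $OPT(I)\geq OPT(I_P)+OPT(I_D)$ to obtain $\tfrac{\rho+b/a}{2}$.

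The only genuinely delicate point, rather than any calculation, is orchestrating the inequalities in the correct direction for each goal: relation~(\ref{eq-TSP-STSP-opt}) must be read as $OPT(I)\leq OPT(I_P)+OPT(I_D)$ for maximization but as $OPT(I)\geq OPT(I_P)+OPT(I_D)$ for minimization, while ``taking the better of two solutions'' dominates their average from above or from below accordingly. Keeping this $\succeq/\preceq$ bookkeeping consistent—in particular checking that the bivalued slack appears as $a/b\leq 1$ in the maximization bound and as $b/a\geq 1$ in the minimization bound—is where I expect the main risk of sign errors; but no step requires more than the two structural facts already in hand, namely the feasibility of $(T,T^-)$ and the sandwich~(\ref{eq-TSP-STSP-opt}).
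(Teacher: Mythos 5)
Your proposal is correct and follows essentially the same route as the paper: the same reduction (approximate tours on $I_P$ and $I_D$, then the better of $(T_P,T_P^-)$ and $(T_D^-,T_D)$), the same better-than-average argument combined with relation~(\ref{eq-TSP-STSP-opt}), and the same bivalued bound on the uncontrolled tours. The only cosmetic difference is that you compare the cross terms to $OPT(I_P)$ and $OPT(I_D)$ individually before invoking the sandwich inequality, whereas the paper bounds their sum directly against $OPT(I)$; the two bookkeepings are equivalent.
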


\begin{proof}
Let $I=(n,k,c,d_P,d_D,\mathrm{opt})$ be an instance of $\mathsf{DTSPMS}$. Compute a tour $T_\alpha$ on $I_\alpha$, $\alpha\in\{P,D\}$. Pick the pair 
$(T_P,T_P^-)$ of pickup and delivery tours if $d_P(T_P)+d_D(T_P^-)\succeq d_P(T_D^-)+d_D(T_D)$ and the pair 
$(T_D^-,T_D)$ otherwise. 
 
Assume that $T_\alpha$ is $\rho$--approximate for the $\mathsf{TSP}$ on $I_\alpha$, $\alpha\in\{P,D\}$. By construction, the value $APX(I)$ of the approximate solution on $I$ satisfies:
$$\begin{array}{rl}
	APX(I)		&\succeq 	1/2 \times 		\left(d_P(T_P)+d_D(T_P^-)+d_P(T_D^-)+d_D(T_D)\right)\\
				&\succeq 	\rho/2 \times	\left(OPT(I_P)+OPT(I_D)\right)	+1/2 \times 		\left(d_D(T_P^-)+d_P(T_D^-)\right)
\\
				&\succeq 	\rho/2 \times	OPT(I) +1/2 \times 		\left(d_D(T_P^-)+d_P(T_D^-)\right)
\end{array}$$ 

The result is straightforward for (i), considering  $d_D(T_P^-)+d_P(T_D^-)\geq 0$. 
As for $(ii)$ and $(iii)$, simply observe that both quantities $d_D(T_P^-) +d_P(T_D^-)$ and $OPT(I)$ express as the sum of $2(n +1)$ edge distances, that all belong to $\{a, b\}$ where $0\leq a < b$. In the maximization case, we deduce that we have:
$$\begin{array}{rlll}
    d_D(T_P^-)+d_P(T_D^-)   &\geq 2(n +1) a   
                            &\geq a/b \times 2(n +1) b 
                            &\geq a/b \times OPT(I) 
\end{array}$$
The argument for the minimization case is symmetrical.
\end{proof}
 
As for the metric case, Lemma \ref{lem-metric} yields the following conditional approximation result:
\begin{proposition}\label{prop-reduc-stsp2tsp_MinMetric}
For all constant positive integer $k$, $\mathsf{Min\,Metric\,k\,DTSPMS}$ reduces to the $\mathsf{Min\,Metric\,TSP}$ by means of a polynomial time reduction that preserves the standard approximation ratio up to a multiplicative factor of $k$.
The reduction maps symmetric instances of the $\mathsf{DTSPMS}$ to symmetric instances of the $\mathsf{TSP}$.
\end{proposition}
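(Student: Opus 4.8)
The plan is to reuse verbatim the transformation already set up in Section~\ref{sec-tsp}: send an instance $I=(n,k,c,d_P,d_D,\min)$ of $\mathsf{Min\,Metric\,k\,DTSPMS}$ to the instance $I_\Sigma=(V,d_\Sigma)$ of $\mathsf{Min\,Metric\,TSP}$ with $d_\Sigma(i,j)=d_P(i,j)+d_D(j,i)$. Building $I_\Sigma$ only requires evaluating $d_\Sigma$ on the $O(n^2)$ pairs, so the reduction is polynomial. First I would record the distance-property claims: since $d_P$ and $d_D$ satisfy the triangular inequality so does $d_\Sigma$ (as already observed before Lemma~\ref{lem-metric}), hence $I_\Sigma$ is indeed metric; and if moreover $d_P,d_D$ are symmetric, then $d_\Sigma(j,i)=d_P(j,i)+d_D(i,j)=d_P(i,j)+d_D(j,i)=d_\Sigma(i,j)$, so $I_\Sigma$ is symmetric as well, which settles the second sentence of the statement.

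Next I would describe how a tour on $I_\Sigma$ is turned back into a $\mathsf{DTSPMS}$ solution. Given any tour $T=(0,i_1,\ldots,i_n,0)$ produced on $I_\Sigma$, output the pair $(T,T^-)$ of pickup and delivery tours. This pair admits a feasible loading plan for every $k$ and $c$, as recalled in Section~\ref{sec-intro} (load $i_1,\ldots,i_c$ in the first row, $i_{c+1},\ldots,i_{2c}$ in the second, and so on), so no capacity or row-count obstruction can arise. The key identity to establish is
\begin{align}\nonumber
	d_P(T)+d_D(T^-)=d_\Sigma(T).
\end{align}
This is purely a rebracketing: each edge $(i_p,i_{p+1})$ of $T$ contributes $d_P(i_p,i_{p+1})$ to $d_P(T)$, while its reverse $(i_{p+1},i_p)$ occurs in $T^-$ and contributes $d_D(i_{p+1},i_p)$ to $d_D(T^-)$; summing these over all edges of $T$ yields exactly $\sum_{(a,b)\in T}\left(d_P(a,b)+d_D(b,a)\right)=d_\Sigma(T)$. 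Hence the produced solution has value $APX(I)=d_\Sigma(T)$.

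Finally I would chain the two bounds. If $T$ is $\rho$--standard approximate for $\mathsf{Min\,Metric\,TSP}$ on $I_\Sigma$, then $d_\Sigma(T)\leq\rho\,OPT(I_\Sigma)$, and Lemma~\ref{lem-metric} gives $OPT(I_\Sigma)\leq k\,OPT(I)$; combining,
\begin{align}\nonumber
	APX(I)=d_\Sigma(T)\leq\rho\,OPT(I_\Sigma)\leq \rho k\,OPT(I),
\end{align}
which is precisely the announced $k$--fold loss. Since $k$ is a universal constant, the whole procedure runs in polynomial time and a constant ratio remains constant.

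As for the difficulty, there is essentially no hard step left, the real content having been front-loaded into Lemma~\ref{lem-metric}. The only point deserving care is the value identity $d_P(T)+d_D(T^-)=d_\Sigma(T)$, which must use the \emph{asymmetric} definition $d_\Sigma(i,j)=d_P(i,j)+d_D(j,i)$ correctly: the $d_D$ contribution is read along the reversed tour $T^-$, not along $T$, and it is exactly this reversal that is paid for free by the LIFO structure of $(T,T^-)$.
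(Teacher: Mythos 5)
Your proposal is correct and follows exactly the paper's own argument: map $I$ to $I_\Sigma$, return the pair $(T,T^-)$ for a $\rho$--approximate tour $T$, and chain $APX(I)=d_\Sigma(T)\leq\rho\,OPT(I_\Sigma)\leq k\rho\,OPT(I)$ via Lemma~\ref{lem-metric}. The only difference is that you spell out details the paper leaves implicit (metricity and symmetry of $d_\Sigma$, feasibility of $(T,T^-)$ for all $k,c$, and the identity $d_P(T)+d_D(T^-)=d_\Sigma(T)$), all of which are handled correctly.
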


\begin{proof}
	Given an instance $I$ of the $\mathsf{Min\,Metric\,DTSPMS}$, first associate with $I$ instance $I_\Sigma$ of $\mathsf{Min\,Metric\,TSP}$. Then associate with any tour $T$ on $I_\Sigma$ the pair $(T,T^-)$ of pickup and delivery tours on $I$. 
	If $T$ is $\rho$--approximate on $I_\Sigma$, then the value $APX(I)$ of the approximate solution on $I$ satisfies 
    $$\begin{array}{rlll}
			APX(I)	
				&= d_\Sigma(T)
				&\leq \rho \times OPT(I_\Sigma) 
				&\leq k\rho \times OPT(I)
		\end{array}$$
	where the right-hand side inequality follows from (\ref{eq-metric-opt}).
\end{proof}

We derive from Propositions \ref{prop-reduc-stsp2tsp} and \ref{prop-reduc-stsp2tsp_MinMetric} 
the following positive 
approximation results for the $\mathsf{DTSPMS}$:

\begin{theorem}\label{thm-apx-from-tsp}
	The following bounds hold for the standard approximability of $\mathsf{DTSPMS}$:\\
	\begin{tabular}{lll}
		\multicolumn{1}{c}{Restriction}	&\multicolumn{1}{c}{Ratio}		
			&\multicolumn{1}{c}{Reference}\\\hline
		$\mathsf{Max\,DTSPMS}$						&$3/8$	    
		    &(i) of Prop. \ref{prop-reduc-stsp2tsp} \& \cite{P14}\\
		Symmetric $\mathsf{Max\,DTSPMS}$			&$7/18$
		    &(i) of Prop. \ref{prop-reduc-stsp2tsp} \& \cite{PMM09}\\[3pt]
		$\mathsf{Max\,Metric\,DTSPMS}$				&$35/88$
		    &(i) of Prop. \ref{prop-reduc-stsp2tsp} \& \cite{KM11}\\
		Symmetric $\mathsf{Max\,Metric\,DTSPMS}$	&$7/16$
		    &(i) of Prop. \ref{prop-reduc-stsp2tsp} \& \cite{KM09}\\[3pt]
		Symmetric $\mathsf{Max\,DTSPMS-(0,1)}$		&$3/7$
		    &(i) of Prop. \ref{prop-reduc-stsp2tsp} \& \cite{BK06}\\[3pt]
		$\mathsf{Min\,Metric\,DTSPMS}$				&$(2k\log |V|)/3$
		    &Prop. \ref{prop-reduc-stsp2tsp_MinMetric} \& \cite{FS07}\\
		Symmetric $\mathsf{Min\,Metric\,DTSPMS}$	&$(3k)/2$
		    &Prop. \ref{prop-reduc-stsp2tsp_MinMetric} \& \cite{C76}\\[3pt]
		$\mathsf{Min\,DTSPMS-(1,2)}$				&$13/8$
		    &(iii) of Prop. \ref{prop-reduc-stsp2tsp} \& \cite{B04}\\
		Symmetric $\mathsf{Min\,DTSPMS-(1,2)}$		&$11/7$
		    &(iii) of Prop. \ref{prop-reduc-stsp2tsp} \& \cite{BK06, adamaszek2018new}
	\end{tabular}
\end{theorem}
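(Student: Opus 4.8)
The plan is to read each row of the table as a direct instantiation of one of the two reductions established above: the rows with a maximization goal or a bivalued distance rely on Proposition \ref{prop-reduc-stsp2tsp}, while the two metric minimization rows rely on Proposition \ref{prop-reduc-stsp2tsp_MinMetric}. For each row I would retrieve from the cited reference the best known standard-approximation ratio $\rho$ for the relevant $\mathsf{TSP}$ variant, check that the distance property of that variant matches the restriction appearing in the row (symmetric, metric, or $(a,b)$-valued), and then substitute $\rho$ into the formula provided by the appropriate proposition. Since both reductions preserve the distance properties of the input instance, the pairing of the cited $\mathsf{TSP}$ ratio with the correct $\mathsf{DTSPMS}$ restriction is automatic.

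For the five maximization rows I would apply part (i) of Proposition \ref{prop-reduc-stsp2tsp}, namely the bound $\rho/2$; for the $(0,1)$-valued row, part (ii) coincides with (i) since $a/b=0$. Plugging in the ratios $\rho=3/4$ of \cite{P14} for $\mathsf{Max\,TSP}$, $\rho=7/9$ of \cite{PMM09} for the symmetric case, $\rho=35/44$ of \cite{KM11} for the metric case, $\rho=7/8$ of \cite{KM09} for the symmetric metric case, and $\rho=6/7$ of \cite{BK06} for $\mathsf{Max\,TSP-(0,1)}$, then halving, yields respectively $3/8$, $7/18$, $35/88$, $7/16$ and $3/7$, as claimed.

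For the minimization rows I would split according to the restriction. The two metric rows follow from Proposition \ref{prop-reduc-stsp2tsp_MinMetric}, which for constant $k$ multiplies the underlying ratio by $k$: the $(2/3)\log|V|$ ratio of \cite{FS07} for the asymmetric $\mathsf{Min\,Metric\,TSP}$ gives $(2k\log|V|)/3$, and the $3/2$ ratio of Christofides \cite{C76} for the symmetric case gives $3k/2$. The two bivalued rows follow from part (iii) of Proposition \ref{prop-reduc-stsp2tsp} with $a=1$, $b=2$, so that the bound reads $(\rho+2)/2$; substituting the ratio $\rho=5/4$ of \cite{B04} for $\mathsf{Min\,TSP-(1,2)}$ and the symmetric ratio $\rho=8/7$ of \cite{BK06,adamaszek2018new} gives $(5/4+2)/2=13/8$ and $(8/7+2)/2=11/7$, respectively.

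The proof is therefore pure bookkeeping once the two propositions are in hand, and I do not expect a genuine obstacle. The only points requiring care are the conformity checks: verifying that each cited ratio is stated for exactly the distance class appearing in the row (that \cite{FS07} is invoked for the general metric while Christofides is used only in the symmetric metric case, and that the $(0,1)$ and $(1,2)$ references indeed address those valued variants rather than arbitrary distances), and recalling that the metric minimization bounds carry the factor $k$ and hence apply only to $\mathsf{k\,DTSPMS}$ with $k$ a universal constant.
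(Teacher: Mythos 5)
Your proposal is correct and follows exactly the paper's own (implicit) argument: the paper states the theorem as a direct consequence of Propositions \ref{prop-reduc-stsp2tsp} and \ref{prop-reduc-stsp2tsp_MinMetric}, with each table row obtained by halving (or, for the bivalued case, applying $(\rho+b/a)/2$ to) the cited $\mathsf{TSP}$ ratio, or multiplying it by $k$ in the metric minimization case. Your arithmetic for all nine rows matches the stated bounds, and your caveats (distance-class conformity of each cited result, and the constant-$k$ restriction for the metric rows) are precisely the conditions under which the two propositions apply.
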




\section{Standard approximation of the symmetric $\mathsf{2\,DTSPMS}$}\label{sec-apx}


Thereafter, we assume that the distance functions are symmetric, the container has two rows, each of which can receive (at least) $\lceil n/2\rceil$ commodities. 

When facing routing problems, it is rather natural to manipulate optimal matchings, as they somehow bring {\em ``one half''} of the optimum value. 
We already know from Proposition \ref{thm-apx-from-tsp} that the Minimum metric case of $\mathsf{2\,DTSPMS}$ is approximable within standard factor $3$. We here present a matching-based heuristic that provides standard approximation ratios of 
$1/2 -o(1)$ for the $\mathsf{Max\,2\,DTSPMS}$, 
$1/2 \times(1+a/b) -o(1)$ for the $\mathsf{Max\,2\,DTSPMS-(a,b)}$, and 
$1/2 \times(1+b/a) +o(1)$ for the $\mathsf{Min\,2\,DTSPMS-(a,b)}$. 

\begin{algorithm}\label{algo-apx}
\caption{APX\_2DTSPMS}
\KwIn{A vertex set $V =\{0, 1, \ldots, n\}$ where $n >0$, two symmetric distance functions $d_P,d_D: V^2 \rightarrow \mathbb{Q}^+$, an optimization goal $\mathrm{opt}$}
\KwOut{A balanced $2$-rows loading plan $\mathcal{P}$ of $V\backslash\{0\}$ and an optimal pair $\left(T_P^*(\mathcal{P}),T_D^*(\mathcal{P})\right)$ of pickup and delivery tours on $V$ with respect to $d_P,d_D,\mathrm{opt}$ and $\mathcal{P}$}

\BlankLine
\ForEach{$\alpha$ in $\{P,D\}$}
{
	Compute a (near-) perfect matching $M_\alpha$ on $V$ that is optimal with respect to $d_\alpha$ and $\mathrm{opt}$\;
}

\BlankLine
$\mathcal{P}\longleftarrow \left(\emptyset,\emptyset\right)$\;
\ForEach{Connected component $W$ of the multi-edge set $M_P\cup M_D$}
{
	\If{$W$ induces a cycle $(0,v_1,\ldots,v_{2m-1},0)$}
	{
		Insert $(v_1,\ldots,v_m)$ at the beginning of row $1$ in $\mathcal{P}$\;
		Insert $(v_{2m-1},\ldots,v_{m+1})$ at the beginning of row $2$ in $\mathcal{P}$\;
	}
	\ElseIf{$W$ induces a chain $(0,v_1,\ldots,v_{2m},0)\backslash\{(v_j,v_{j+1})\}$}	
	{
		Insert $(v_1,\ldots,v_m)$ at the beginning of row $1$ in $\mathcal{P}$\;
		Insert $(v_{2m},\ldots,v_{m+1})$ at the beginning of row $2$ in $\mathcal{P}$\;
	}
	\ElseIf{$W$ induces a cycle $(v_1,\ldots,v_{2m},v_1)$}
	{
		Add $(v_1,\ldots,v_m)$ at the end of row $1$ in $\mathcal{P}$\;
		Add $(v_{2m},\ldots,v_{m+1})$ at the end of row $2$ in $\mathcal{P}$\;
	}
	\ElseIf{$W$ induces a chain $(v_1,\ldots,v_{2m+1})$}	
	{
		Add $(v_1,\ldots,v_m)$ at the end of row $1$ in $\mathcal{P}$\;
		Add $(v_{2m+1},\ldots,v_{m+1})$ at the end of row $2$ in $\mathcal{P}$\;
	}
}

\BlankLine
\ForEach{$\alpha$ in $\{P,D\}$}
{
	Compute a tour $T_\alpha^*(\mathcal{P})$ on $V$ that is consistent with $\mathcal{P}$ and optimal with respect to $d_\alpha$ and $\mathrm{opt}$\;
}

\BlankLine
\Return $(\mathcal{P},T_P^*(\mathcal{P}),T_D^*(\mathcal{P}))$;
\end{algorithm}

\subsection{The matching-based algorithm}\label{sec-apx-algo_base}

Let $I$ be an instance of the $\mathsf{2\,DTSPMS}$ and let $V$ denote the vertex set that is considered in $I$. Algorithm \ref{algo-apx} runs in three steps. 
First, it computes for $\alpha\in\{P,D\}$ a (near-) perfect matching $M_\alpha$ on $V$ that is optimal with respect to $d_\alpha$ and opt, which is well known to require a low ($\mathcal{O}({|V|}^3)$) polynomial time.

\begin{figure}[t]
\begin{center}
	\includegraphics{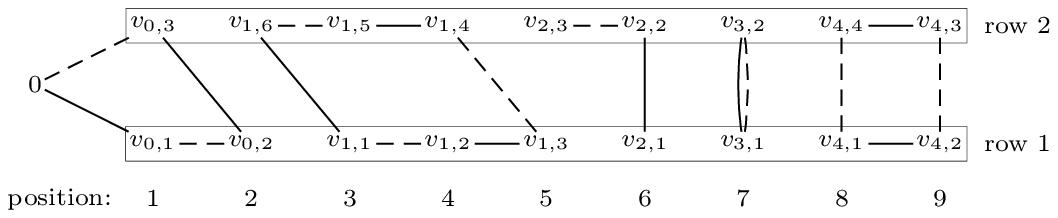}
	\caption{\label{fig-apx-P}\small{The approximate loading plan given the two (near-)perfect matchings $M_P$ (in plain lines) and $M_D$ (in dashed lines)} on $V$.}
\end{center}
\end{figure}

Second, it builds a loading plan $\mathcal{P}=(P_1,P_2)$, considering the connected components of the multigraph $(V,M_P\cup M_D)$ one after each other, starting with the connected component that contains the depot vertex $0$. 
Let $W_0,\ldots,W_h$ denote these components. If $|V|$ is even, then every component $W_s$ induces on $M_P\cup M_D$ an elementary cycle of even length. Otherwise, $W_s$ induces an elementary chain on an odd number of vertices for a single index $s\in\{0,\ldots,h\}$.
We assume {\em w.l.o.g} that $0\in W_0$; thus vertex set $W_0$ induces either a cycle $(v_{0,0}=0,v_{0,1},\ldots,v_{0,m_0},0)$, or a chain which consists of a cycle $(v_{0,0}=0,v_{0,1},\ldots,v_{0,m_0},0)$ minus some edge $(v_{0,j},v_{0,j+1})$ for a single index $j\in \{0,\ldots,m_0\}$ (index $j+1$ is taken modulo $m_0+1$). %
For $W_0$, the heuristic inserts the sequences
$$\begin{array}{lll}
					&(v_{0,1},\ldots,v_{0,\lceil m_0/2\rceil}) 			&\textrm{at the beginning of }P_1\\
	\textrm{and}		&(v_{0,m_0},\ldots,v_{0,\lceil m_0/2\rceil+1})		&\textrm{at the beginning of }P_2.
\end{array}$$
Any other component $W_s$ induces either a cycle $(v_{s,1},\ldots,v_{s,m_s},v_{s,1})$ or a chain $(v_{s,1},\ldots,v_{s,m_s})$, where the nodes $v_{s,1},\ldots,v_{s,m_s}$ all belong to $V\backslash\{0\}$. The heuristic inserts the sequences
$$\begin{array}{lll}
					&(v_{s,1},\ldots,v_{s,\lfloor m_s/2\rfloor}) 		&\textrm{at the end of }P_1\\
	\textrm{and}		&(v_{s,m_s},\ldots,v_{s,\lfloor m_s/2\rfloor+1})		&\textrm{at the end of }P_2.
\end{array}$$
Figure \ref{fig-apx-P} provides an illustration of the obtained loading plan. 
In any case, the obtained approximate packing $\mathcal{P}$ satisfies the capacity constraints: row $1$ receives plus one vertex {\em vs.} row $2$ when loading vertices from $W_0$ if $W_0$ induces a cycle, whereas it receives minus one vertex {\em vs.} row $2$ when loading vertices from $W_s$ for some index $s\in\{1,\ldots,h\}$ if $|V|$ is odd and $W_s$ is the single component that induces a chain.

Third, it computes the best pair $(T_P^*(\mathcal{P}),T_D^*(\mathcal{P}))$ of pickup and delivery tours with respect to $\mathcal{P}$ and opt. According to Proposition \ref{prop-TSP}, this last step requires a $\mathcal{O}({|V|}^2)$ computation time. 
The overall complexity of Algorithm \ref{algo-apx} is therefore polynomial.

\subsection{Approximation analysis}\label{sec-apx-apx}

\begin{theorem}\label{thm-apx}
Algorithm \ref{algo-apx} provides within polynomial time a standard approximation guarantee of
\begin{enumerate}
	\item[(i)] \label{it-apx-max} $1/2 -1/(2|V|)$ for $\mathsf{Max\,DTSPMS}$,
	\item[(ii)] \label{it-apx-max_ab} $1/2 \times (1 +a/b) -1/(2|V|) \times (1 -a/b)$ for $\mathsf{Max\,DTSPMS-(a,b)}$, 
	\item[(iii)] \label{it-apx-min_ab} $1/2 \times(1 +b/a) +1/(2|V|) \times (b/a -1)$ for $\mathsf{Min\,DTSPMS-(a,b)}$.
\end{enumerate}
Moreover, all these approximation ratios are tight.
\end{theorem}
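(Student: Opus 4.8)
The backbone of the proof is a single structural observation: the loading plan $\mathcal{P}$ returned by Algorithm~\ref{algo-apx} is built so that each of the two optimal matchings can be completed into a \emph{consistent} tour. Precisely, I would first establish the key lemma that there is a pickup tour consistent with $\mathcal{P}$ whose edge set contains $M_P$, and symmetrically a delivery tour consistent with $\mathcal{P}$ whose edge set contains $M_D$. Granting this, since $T_P^*(\mathcal{P})$ and $T_D^*(\mathcal{P})$ are computed to be optimal among all consistent tours (Proposition~\ref{prop-TSP}) and $d_P,d_D\geq 0$, one gets for the maximization goal
\begin{align}\nonumber
	d_P(T_P^*(\mathcal{P}))\geq d_P(M_P)\quad\text{and}\quad d_D(T_D^*(\mathcal{P}))\geq d_D(M_D),
\end{align}
whence $APX(I)\geq d_P(M_P)+d_D(M_D)$.

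To prove the lemma I would argue component by component on the multigraph $(V,M_P\cup M_D)$, whose connected components are (because $M_P,M_D$ are matchings) alternating paths and even cycles. Consider a non-depot even cycle $(v_1,\ldots,v_{2m},v_1)$ with, say, $M_P=\{(v_{2i-1},v_{2i})\}$ and $M_D=\{(v_{2i},v_{2i+1})\}$ (indices mod $2m$); the algorithm loads $(v_1,\ldots,v_m)$ into row~$1$ and $(v_{2m},\ldots,v_{m+1})$ into row~$2$. When $m$ is even every $M_P$ edge lies inside a single row, and visiting each row contiguously in loading order yields a consistent pickup tour realizing all of them. When $m$ is odd exactly one $M_P$ edge, $(v_m,v_{m+1})$, crosses the two rows; here I would exhibit the explicit consistent order $v_1,\ldots,v_{m-1},v_{2m},v_{2m-1},\ldots,v_{m+1},v_m$, which respects the loading order of both rows and makes every $M_P$ pair (including the crossing one) consecutive. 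The depot cycle and the single odd chain arising when $|V|$ is odd are handled by the same ``split in the middle'' argument, and the delivery tour containing $M_D$ is obtained symmetrically, the reversal of the row order matching the shift by one position between the $M_P$- and $M_D$-edges along each cycle. Concatenating the per-component orders in the order the components are loaded respects the global row orders, giving the two required global tours.

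The second ingredient is the standard estimate relating an optimal matching to an optimal tour. Two-colouring the edges of an optimal tour $T_\alpha^*$ on $I_\alpha$ splits them into two matchings when $|V|$ is even, and (after discarding the lightest edge) into two near-perfect matchings when $|V|$ is odd; picking the heavier one shows $d_\alpha(M_\alpha)\geq\left(\tfrac12-\tfrac1{2|V|}\right)OPT(I_\alpha)$ for $\alpha\in\{P,D\}$. Combined with $OPT(I_P)+OPT(I_D)\geq OPT(I)$ from~(\ref{eq-TSP-STSP-opt}) this yields ratio~(i). For~(ii) I would sharpen the first step: the consistent tours of the lemma have $|V|$ edges each while $M_P,M_D$ have only $\lfloor|V|/2\rfloor$ edges, so their $2\lceil|V|/2\rceil$ surplus edges each contribute at least $a$; using $OPT(I)\leq 2|V|b$ converts this surplus into the additional $a/b$ term. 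The minimization bound~(iii) is symmetric: the realizability lemma now gives the \emph{upper} bound $d_\alpha(T_\alpha^*(\mathcal{P}))\leq d_\alpha(M_\alpha)+\lceil|V|/2\rceil\,b$, the matching/tour estimate reverses, $OPT(I_P)+OPT(I_D)\leq OPT(I)$ by~(\ref{eq-TSP-STSP-opt}), and $OPT(I)\geq 2|V|a$ turns the surplus into the $b/a$ term.

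Finally, tightness would be obtained by instances on which all of the above inequalities are simultaneously (asymptotically) equalities: instances where each optimal matching is exactly half of the optimal tour, where $OPT(I_P)+OPT(I_D)=OPT(I)$, and, in the bivalued cases, where the surplus edges forced into the heuristic tours all take the unfavourable value. The clear main obstacle is the structural lemma of the second paragraph: one must check that the particular ``cut each component in its middle, reverse the second half into the other row'' rule of Algorithm~\ref{algo-apx} always leaves room for a consistent tour capturing the whole of $M_P$ (resp.\ $M_D$), across all four component shapes and both parities of the component length. Everything after that is bookkeeping with non-negative edge weights and the comparison~(\ref{eq-TSP-STSP-opt}).
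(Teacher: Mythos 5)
Your analysis of the approximation guarantees (i)--(iii) is sound and is essentially the paper's own proof: the paper likewise rests on the observation that, by construction of $\mathcal{P}$, there exist completions $A_P,A_D$ such that $(M_P\cup A_P,\,M_D\cup A_D)$ is a feasible pair of tours consistent with $\mathcal{P}$ (it asserts this ``by construction'' with a figure, so your component-by-component verification is, if anything, more detailed), then combines the matching bounds $d_\alpha(M_\alpha)\succeq \frac12 OPT(I_\alpha)$ (even case) and $d_\alpha(M_\alpha)\succeq \frac12\left(1-\frac{1}{|V|}\right)OPT(I_\alpha)$ (odd case) with relation (\ref{eq-TSP-STSP-opt}), and settles the bivalued cases by counting the $|V|$ (even) or $|V|+1$ (odd) completion edges, each worth at least $a$ (at most $b$), against the $2|V|$ edges of an optimal solution, exactly as you do.

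The genuine gap is the tightness claim, which is part of the statement of Theorem \ref{thm-apx}. What you offer --- instances on which all the inequalities are simultaneously asymptotic equalities --- is a restatement of what tightness means, not a construction, and the construction is not automatic. The subtle point your sketch glosses over is that Algorithm \ref{algo-apx} returns the \emph{optimal} pair of tours consistent with $\mathcal{P}$, not the particular completions $M_\alpha\cup A_\alpha$; a tight family must therefore exhibit optimal matchings $M_P,M_D$ and a resulting loading plan $\mathcal{P}$ under which \emph{no} consistent tour beats the bad completion, i.e.\ $\mathcal{P}$ must exclude every favourable edge other than the matching edges themselves. The paper achieves this explicitly: on $I(\lambda,\mu,n)$ with vertex set $V_n=\{0,1,\ldots,4n\}$, distance $\lambda$ on the cycle $(0,1,\ldots,4n,0)$ and $\mu$ elsewhere, it assumes the (legitimately optimal, adversarially tie-broken) matchings $M_P=M_D=\{(2i-1,2i)\}$ and the interleaved loading plan $P_1=(4,8,\ldots,4n,1,5,\ldots,4n-3)$, $P_2=(3,7,\ldots,4n-1,2,6,\ldots,4n-2)$, for which the only edges of the optimal cycle consistent with $\mathcal{P}$ are precisely the matching edges; hence $APX=(|V_n|-1)\lambda+(|V_n|+1)\mu$ against $OPT=2|V_n|\lambda$, which matches all three ratios for $(\lambda,\mu)\in\{(1,0),(b,a),(a,b)\}$. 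Without such a family (and the verification that consistency with $\mathcal{P}$ forbids all other good edges), the ``Moreover, all these approximation ratios are tight'' part of the theorem remains unproven.
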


\begin{figure}[t]
\begin{center}
	\includegraphics{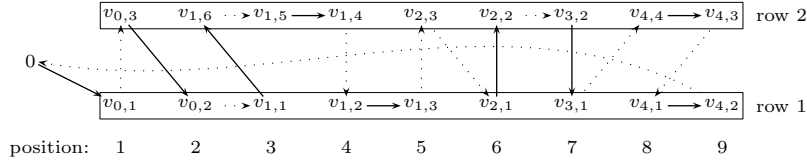}
	\caption{\label{fig-apx-T_P}\small{A feasible pickup tour that uses all the edges in $M_P$. Edges of $M_P$ are depicted in plain lines.}}
\end{center}
\end{figure}

\begin{proof}
Let $APX=d_P(T_P^*(\mathcal{P}))+d_D(T_D^*(\mathcal{P}))$ denote the value of the approximate solution.
By construction, given $\alpha\in\{P,D\}$, $M_\alpha$ is consistent with the approximate loading plan $\mathcal{P}$. Said equivalently, there exist two edge sets $A_P,A_D$ such that $(M_P\cup A_P,M_D\cup A_D)$ defines a feasible pair of pickup and delivery tours with respect to $\mathcal{P}$ (see Figure \ref{fig-apx-T_P} for some illustration).
Since Algorithm \ref{algo-apx} returns the best pair of such tours, the approximate value satisfies:
\begin{align}\label{eq-solref}
	APX	&\succeq d_P(M_P)+d_D(M_D) +d_P(A_P)+d_D(A_D)
\end{align}

If $|V|$ is even, then any tour $T$ on $V$ is the union of two perfect matchings on $V$. Since $M_\alpha, \alpha\in\{P,D\}$ are optimal-weight matchings, we deduce:
\begin{align}\label{eq-opt-odd}
	d_\alpha(M_\alpha)	&\succeq \frac{1}{2}OPT(I_\alpha),	&\alpha\in\{P,D\}
\end{align}
We deduce from relations (\ref{eq-solref}),(\ref{eq-opt-odd}) and (\ref{eq-TSP-STSP-opt}):
\begin{align}\label{eq-pvapx-odd}
    APX	&\succeq \frac{1}{2}OPT(I) +d_P(A_P)+d_D(A_D)
\end{align}
This enables to conclude result (i) for the $\mathsf{Max\,2\,DTSPMS}$, considering $d_P(A_P)\geq 0$ and $d_D(A_D)\geq 0$. For the bivalued case given two real numbers $0<a<b$, considering that $d_P(A_P)+d_D(A_D)$ and $OPT(I)$ express as the sum of respectively $|V|$ and $2|V|$ edge distances, we have:
\begin{align}\nonumber
 	\frac{d_P(A_P)+d_D(A_D)}{OPT(I)}	 \geq \frac{|V|a}{2|V|b} = \frac{a}{2b},
 	&&\frac{d_P(A_P)+d_D(A_D)}{OPT(I)}	 \leq \frac{|V|b}{2|V|a} = \frac{b}{2a}
\end{align}
This leads to (ii) and (iii) for respectively the maximimization and the minimization cases.

\medskip
When $|V|$ is odd, given any Hamiltonian cycle $T$ on $V$ and any edge $e\in T$, the edge set $T\backslash\{e\}$ consists of the union of two near-perfect matchings on $V$. Given $\alpha\in\{P,D\}$ and a tour $T_\alpha^*$ on $V$ that is of optimal with respect to $d_\alpha$ and opt, we denote by $e_\alpha$ an arc of $T_\alpha^*$ having maximum distance $d_\alpha$ if the goal is to minimize, and minimum distance $d_\alpha$ otherwise. $M_\alpha$ and $e_\alpha$ therefore satisfy $2 d_\alpha(M_\alpha) \succeq (d_\alpha(T_\alpha^*)-d_\alpha(e_\alpha))$ and $d_\alpha(e_\alpha) \preceq d_\alpha(T_\alpha^*)/|V|$. As a consequence,
\begin{align}\label{eq-facts-opt-even}
	d_\alpha(M_\alpha)	&\succeq \frac{1}{2}\left(1-\frac{1}{|V|}\right)OPT(I_\alpha),	&\alpha\in\{P,D\}
\end{align}
We derive (i) for the $\mathsf{Max\,2\,DTSPMS}$, considering again $d_\alpha(A_\alpha)\geq 0,\alpha\in\{P,D\}$. As for results (ii) and (iii) for the $\mathsf{2\,DTSPMS-(a,b)}$, similarly to the even case, we observe:
\begin{align}
 	\frac{d_P(A_P)+d_D(A_D)}{OPT(I)}		&\geq \frac{(|V|+1)a}{2|V|b} = \frac{a}{2b}\left(1+\frac{1}{|V|}\right)\nonumber\\
 	\frac{d_P(A_P)+d_D(A_D)}{OPT(I)}		&\leq \frac{(|V|+1)b}{2|V|a} = \frac{b}{2a}\left(1+\frac{1}{|V|}\right)\nonumber
\end{align}
We deduce from the above relations together with relations (\ref{eq-solref}) and (\ref{eq-facts-opt-even}) a standard approximation ratio of 
$1/2\times (1-1/|V|) +a/(2b)\times (1+1/|V|)=1/2 +a/(2b) -o(1)$ when the goal is to maximize, of 
$1/2\times (1-1/|V|) +b/(2a)\times (1+1/|V|)=1/2 +b/(2a)+o(1)$ when the goal is to minimize.

\medskip
In order to establish the tightness of the analysis, we consider bivaluated instances $I(\lambda,\mu,n)$, $n\in\mathbb{N}^*$, $\lambda\neq \mu\in \mathbb{Q}^+$ of the Symmetric $\mathsf{2\,DTSPMS}$. Given an integer $n$ and two reals $\lambda,\mu$, $I(\lambda,\mu,n)=(4n,2n,d_P,d_D,\mathrm{opt})$ where $\mathrm{opt}=\max$ if $\lambda>\mu$ and $\min$ otherwise, and distances $d_P,d_D$ take value $\mu$ on all edges, but along the cycle $(0,1,\ldots,4n,0)$.
We denote by $V_n =\{0, 1 ,\ldots, 4n\}$ the vertex set in $I(\lambda,\mu,n)$. For this instance, the pair $(T_P=(0,1,\ldots,4n,0),T_D=T_P^-)$ of pickup and delivery tours is optimal, with value 
\begin{align}\label{eq-2_tight-OPT}
	OPT(I(\lambda,\mu,n))	&= 2|V_n|\lambda
\end{align}
Now assume that when running Algorithm \ref{algo-apx} on $I(\lambda,\mu,n)$, both $M_P$ and $M_D$ pick edges $\{2i-1,2i\}$, $i\in\{1,\ldots,2n\}$. Additionnally assume that the loading plan $\mathcal{P}=(P_1,P_2)$ built from $M_P,M_D$ is the following:
$$\begin{array}{rclr}
	P_1	&= &(4,8,\ldots,4n,		&1,5,\ldots,4n-3)\\
	P_2	&= &(3,7,\ldots,4n-1,	&2,6,\ldots,4n-2)
\end{array}$$
Observe that the edges of the cycle $(0,1,\ldots,4n,0)$ that are consistent with $\mathcal{P}$ precisely are the edges of $M_P =M_D$. Accordingly, Algorithm \ref{algo-apx} returns a solution with value 
\begin{align}\label{eq-2_tight-APX}
	APX(I(\lambda,\mu,n))	
	    &= (|V_n| -1)\lambda +(|V_n| +1)\mu 
\end{align}
Combining (\ref{eq-2_tight-OPT}) and (\ref{eq-2_tight-APX}), one gets:
$$\begin{array}{rl}\displaystyle
    \frac{APX(I(\lambda,\mu,n))}{OPT(I(\lambda,\mu,n))} 
		&=\frac{1}{2}\left(1 -\frac{1}{|V_n|}\right) +\frac{1}{2}\left(1 +\frac{1}{|V_n|}\right)\frac{\mu}{\lambda}\\
		&=\left\{\begin{array}{rl}
			\frac{1}{2}\left(1 -\frac{1}{|V_n|}\right)		&\textrm{if $(\lambda,\mu)=(1,0)$}\\
		    \frac{1}{2} \left(1 +\frac{a}{b}\right) -\frac{1}{2|V_n|}\left(1 -\frac{a}{b}\right)	&\textrm{if $(\lambda,\mu)=(b,a)$}\\
	    \frac{1}{2} \left(1 +\frac{b}{a}\right) +\frac{1}{2|V_n|}\left(\frac{b}{a} -1\right)		&\textrm{if $(\lambda,\mu)=(a,b)$}
		\end{array}\right.
\end{array}$$
Families $I(1,0,n),n\in\mathbb{N}^*$, $I(b,a,n),n\in\mathbb{N}^*$ and $I(a,b,n),n\in\mathbb{N}^*$ thus establish the tightness of the analysis for respectively $\mathsf{Max\,DTSPMS}$, $\mathsf{Max\,DTSPMS-(a,b)}$ and $\mathsf{Min\,DTSPMS-(a,b)}$.
\end{proof}


\section{Differential approximation results}\label{sec-apx-dapx} 

In this section, we provide approximation results for the differential approximation ratio, which offers a complementary view of approximation vs the standard ratio, as we shall see further.
The differential ratio is the ratio of $|APX(I)-WOR(I)|$ by the instance diameter $|OPT(I)-WOR(I)|$, where $WOR(I)$ is the value of a worst solution. In that differential framework, $\mathcal{A}$ is said to be $\rho$--approximate for some $\rho:\mathcal{I}_\Pi\rightarrow ]0,1]$ {\em iff} 
\begin{align}\nonumber
	\frac{APX(I)-WOR(I)}{OPT(I)-WOR(I)}	&\geq \rho(I),	&I\in \mathcal{I}_\Pi
\end{align}
{\em i.e.,} $APX(I) \geq \rho(I) OPT(I) +(1-\rho(I)) WOR(I)$, $I\in \mathcal{I}_\Pi$ if the goal is to maximize, 
$APX(I) \leq \rho(I) OPT(I) + (1-\rho(I)) WOR(I)$, $I\in \mathcal{I}_\Pi$ otherwise.

As for standard approximation, $\Pi$ is said to be approximable within factor $\rho$ with the differential ratio {\em iff} it admits a polynomial time $\rho$--approximation algorithm. 
For more insights about the differential approximation measure, we invite the reader to refer to \cite{DP96}.

Many differential approximation results have been provided for routing and $\mathsf{TSP}$ related problems \cite{M02,MPT03a,MPT03b,N09,MT08,BHM05,EM08,HK01}. 
For example the symmetric $\mathsf{TSP}$ is approximable within differential factor $3/4-\varepsilon$ \cite{EM08}. 


\subsection{Properties of differential vs standard ratio for the TSP }\label{sec-intro-apx_pty}

 The $\mathsf{TSP}$ has the interesting property that the minimization, maximization and metric cases are all equivalent as regards to differential approximation, which is illustrated in what follows.

The restriction of the $\mathsf{TSP}$ to metric instances is denoted by $\mathsf{Metric\,TSP}$. Furthermore, $\mathsf{Min\,TSP}$ refers to the $\mathsf{TSP}$ where the goal is to minimize, whereas $\mathsf{Max\,TSP}$ refers to the $\mathsf{TSP}$ where the goal is to maximize. 
Let $I=(V,d)$ be an instance of the Symmetric $\mathsf{Min\,TSP}$, characterized by:
$$\begin{array}{lrl}
	 &OPT(I)		&=\min\{d(T)\,|\,T \in {\cal T}_V\}
\end{array}$$
where ${\cal T}_V$ denotes the set of Hamiltonian tours on $V$. Let us note $d_{max}= \max_{i,j\in V:i\neq j} \{d(i,j)\}$ and $d_{min}= \min_{i,j\in V:i\neq j} \{d(i,j)\}$ the maximum and the minimum distances between any pair of nodes, and consider the two instances $I_1,I_2$ defined as:
$$\begin{array}{ll}
	OPT(I_1) =\max\{d_1(T)\,|\,T\in {\cal T}_V\}		&\textrm{where }d_1 =d_{max} -d\\
	OPT(I_2)	 =\min\{d_2(T)\,|\,T\in {\cal T}_V\}	&\textrm{where }d_2	=d +d_{max} -2d_{min}
\end{array}$$
Distances $d_1,d_2$ are non-negative. Furthermore, one can easily check that $d_2$ satisfies the triangle inequalities.
$I_1$ and $I_2$ therefore are instances of the $\mathsf{Max\,TSP}$ and of the $\mathsf{Min\,Metric\,TSP}$, respectively, that can be equivalently expressed as:
$$\begin{array}{rll}
	OPT(I_1)		&=|V|d_{max} 			-\min\{d(T)\,|\,T \in {\cal T}_V\}\\
	OPT(I_2)		&=|V|(d_{max} -2d_{min})	+\min\{d(T)\,|\,T\in {\cal T}_V\}
\end{array}$$
Hence, the three instances $I,I_1,I_2$ of the $\mathsf{TSP}$ correspond to the same optimization problem, up to an affine transformation of their objective function. Accordingly, these  instances are equivalent to differentially approximate. Indeed, observe that for all $T \in {\cal T}_V$, we have:
$$\begin{array}{rll}
	\displaystyle\frac{d(T)- WOR(I)}{OPT(I)-WOR(I)}
		&\displaystyle=\frac{d_1(T)- WOR(I_1)}{OPT(I_1)- WOR(I_1)}
		&\displaystyle=\frac{d_2(T)- WOR(I_2)}{OPT(I_2)- WOR(I_2)}
\end{array}$$

Hence, and in contrast with the standard approximation framework, the $\mathsf{Min\,TSP}$, the $\mathsf{Max\,TSP}$ and their restriction to the metric case are strictly equivalent to differentially approximate. 
The symmetric case of these problems notably all are approximable within a differential factor of $3/4-\varepsilon, \varepsilon>0$, \cite{EM08}. 

Using similar arguments, $\mathsf{Min\,DTSPMS}$, $\mathsf{Max\,DTSPMS}$, $\mathsf{Min\,Metric\,DTSPMS}$ and $\mathsf{Max\,Metric\,DTSPMS}$ are equivalent with respect to their differential approximability. 

\subsection{Differential approximation of the general $\mathsf{DTSPMS}$ }  \label{sec:DIFF}

In Section \ref{sec-tsp}, we derived standard approximation results for the $\mathsf{DTSPMS}$ from connections between the optimal values of a given instance $I$ of the $\mathsf{DTSPMS}$ and of instances $I_P$, $I_D$ and $I_\Sigma$ of the $\mathsf{TSP}$. Such connections between the extremal values on $I$ and $I_\Sigma$ similarly allow to derive differential approximation results for $\mathsf{DTSPMS}$ from differential approximation results for $\mathsf{TSP}$. 

First, symmetrically to (\ref{eq-TSP-STSP-opt}), the worst solution values on instances $I_P, I_D, I_\Sigma$ of the $\mathsf{TSP}$ and $I$ of the $\mathsf{DTSPMS}$ obviously satisfy:
\begin{align}\label{eq-TSP-STSP-wor}
	WOR(I_P) +WOR(I_D)\ \preceq WOR(I)\ \preceq WOR(I_\Sigma)
\end{align}

Now let $(T_P, T_D)$ refer to an optimal solution of $I$. On the one hand, $T_P$ and $T_D^-$ both are feasible solutions of $I_\Sigma$. Therefore, we have:
$$\begin{array}{rl}
	OPT(I_\Sigma)
	   &\succeq 1/2 \times \left(d_P(T_P) +d_D(T_P^-) +d_P(T_D^-) +d_D(T_D)\right)\\
	   &\succeq 1/2 \times \left(OPT(I) +d_P(T_D^-) +d_D(T_P^-) \right)    
\end{array}$$ 
On the other hand, $(T_D^-, T_P^-)$ is a feasible pair of pickup and delivery tour on $I$. Accordingly, we have $d_P(T_D^-) +d_D(T_P^-) \succeq WOR(I)$. 
The following Proposition thus holds:
\begin{lemma}\label{lem-dapx} Given any instance $I$ of the $\mathsf{DTSPMS}$, $I$ and its related instance $I_\Sigma$ of the $\mathsf{TSP}$ satisfy:
	\begin{align}\label{eq-TSP-STSP-wor+opt}
	    OPT(I_\Sigma) &\succeq \left(WOR(I)) +OPT(I)\right)/2
	\end{align}
\end{lemma}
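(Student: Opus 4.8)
The plan is to start from an optimal pair $(T_P,T_D)$ of the $\mathsf{DTSPMS}$ on $I$ and to exhibit two Hamiltonian tours on $V$ that are feasible for $I_\Sigma$, then average their $d_\Sigma$-values. The natural candidates are $T_P$ itself and the reverse $T_D^-$ of the delivery tour, since every Hamiltonian tour on $V$ is a feasible solution of the $\mathsf{TSP}$ on $I_\Sigma$. The first step is to rewrite the $d_\Sigma$-cost of an arbitrary tour in terms of $d_P$ and $d_D$: from $d_\Sigma(i,j)=d_P(i,j)+d_D(j,i)$, summing over the arcs of a tour $T$ yields $d_\Sigma(T)=d_P(T)+d_D(T^-)$. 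Applied to $T_P$ and to $T_D^-$, this gives $d_\Sigma(T_P)=d_P(T_P)+d_D(T_P^-)$ and $d_\Sigma(T_D^-)=d_P(T_D^-)+d_D(T_D)$.

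Since both $T_P$ and $T_D^-$ are feasible on $I_\Sigma$, we have $OPT(I_\Sigma)\succeq d_\Sigma(T_P)$ and $OPT(I_\Sigma)\succeq d_\Sigma(T_D^-)$, hence $OPT(I_\Sigma)\succeq \left(d_\Sigma(T_P)+d_\Sigma(T_D^-)\right)/2$. The second step is to regroup the four resulting terms as $\left(d_P(T_P)+d_D(T_D)\right)+\left(d_P(T_D^-)+d_D(T_P^-)\right)$. The first bracket is exactly the value of the optimal solution $(T_P,T_D)$, namely $OPT(I)$. It then remains to bound the second bracket from the appropriate side by $WOR(I)$: I claim that $d_P(T_D^-)+d_D(T_P^-)$ is itself the value of a feasible $\mathsf{DTSPMS}$ solution, the pair $(T_D^-,T_P^-)$, so that it is $\succeq WOR(I)$ by definition of the worst feasible value.

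The crux, and the step I expect to be the main obstacle, is justifying that $(T_D^-,T_P^-)$ is indeed a feasible pair of pickup and delivery tours. The key observation is that feasibility of a pair with respect to a fixed loading plan is equivalent to the LIFO condition of Property \ref{pty-feasible}, which states that inside every row the pickup order is the exact reverse of the delivery order. If $\mathcal{P}$ is a loading plan witnessing the feasibility of $(T_P,T_D)$, then in each row the order in which $T_P$ visits the items is the reverse of the order in which $T_D$ visits them. Reversing both tours simultaneously, the $T_D^-$-order of the row coincides with the former $T_P$-order, the $T_P^-$-order coincides with the former $T_D$-order, and these two remain reverses of one another; hence the same $\mathcal{P}$ witnesses the feasibility of $(T_D^-,T_P^-)$. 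Once this symmetry is established, combining $OPT(I_\Sigma)\succeq\left(OPT(I)+d_P(T_D^-)+d_D(T_P^-)\right)/2$ with $d_P(T_D^-)+d_D(T_P^-)\succeq WOR(I)$ yields the claimed inequality $OPT(I_\Sigma)\succeq\left(WOR(I)+OPT(I)\right)/2$.
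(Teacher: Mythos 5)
Your proof is correct and follows essentially the same route as the paper: bound $OPT(I_\Sigma)$ below by the average of $d_\Sigma(T_P)$ and $d_\Sigma(T_D^-)$ for an optimal pair $(T_P,T_D)$, regroup the four terms into $OPT(I)$ plus $d_P(T_D^-)+d_D(T_P^-)$, and bound the latter by $WOR(I)$ via feasibility of the swapped-and-reversed pair $(T_D^-,T_P^-)$. The only difference is that you spell out why $(T_D^-,T_P^-)$ is feasible (the same loading plan works, since reversing both tours preserves the LIFO property row by row), a step the paper simply asserts.
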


Relation (\ref{eq-TSP-STSP-wor+opt}) indicates that the optimal value of $I_\Sigma$ provides a 1/2-differential approximation of $OPT(I)$. 
It also yields a rather simple differential approximation preserving reduction from $\mathsf{DTSPMS}$ to $\mathsf{TSP}$.

\begin{proposition}\label{prop-reduc-stsp2tsp-DIFF}
	The (Symmetric) $\mathsf{DTSPMS}$ reduces to the (Symmetric) $\mathsf{TSP}$ by means of a polynomial time reduction that maps $\rho$--differential approximate solutions of the $\mathsf{TSP}$ onto solutions of the $\mathsf{DTSPMS}$ with a differential approximation guarantee of $\rho/2$.
\end{proposition}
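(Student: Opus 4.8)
The plan is to reuse, in the differential setting, the elementary reduction that Lemma~\ref{lem-dapx} already prepares, tracking the worst value alongside the optimum. Given an instance $I$ of the $\mathsf{DTSPMS}$, I associate its related instance $I_\Sigma$ of the $\mathsf{TSP}$; this construction is polynomial, since $d_\Sigma$ is obtained by one addition per pair of nodes, and it preserves symmetry because $d_\Sigma(i,j) = d_P(i,j) + d_D(j,i) = d_P(j,i) + d_D(i,j) = d_\Sigma(j,i)$ whenever $d_P,d_D$ are symmetric. In the reverse direction I send any tour $T$ on $I_\Sigma$ to the pair $(T,T^-)$ of pickup and delivery tours, which is feasible for the $\mathsf{DTSPMS}$ for every $k,c$ as observed in Section~\ref{sec-intro}.

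The first step I would record is that this map is value-preserving between the two instances. For any tour $T$ on $V$, each edge $(u,v)$ traversed by $T$ contributes $d_P(u,v)$ to $d_P(T)$ and, being reversed into $(v,u)$ in $T^-$, contributes $d_D(v,u)$ to $d_D(T^-)$; summing over the edges of $T$ yields $d_P(T) + d_D(T^-) = d_\Sigma(T)$, independently of any symmetry assumption. Hence the value $APX(I)$ of the solution produced on $I$ equals $d_\Sigma(T)$, the value of $T$ on $I_\Sigma$.

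Next, suppose $T$ is $\rho$--differential approximate on $I_\Sigma$, i.e.
$$d_\Sigma(T) \succeq \rho\, OPT(I_\Sigma) + (1-\rho)\, WOR(I_\Sigma).$$
Both coefficients are non-negative because $\rho \in \left]0,1\right]$, so I may replace $OPT(I_\Sigma)$ and $WOR(I_\Sigma)$ by the lower bounds supplied by Lemma~\ref{lem-dapx}, namely $OPT(I_\Sigma) \succeq (WOR(I) + OPT(I))/2$, and by relation~(\ref{eq-TSP-STSP-wor}), namely $WOR(I_\Sigma) \succeq WOR(I)$. Substituting and regrouping the $WOR(I)$ terms gives
$$APX(I) = d_\Sigma(T) \succeq \frac{\rho}{2}\, OPT(I) + \left(1 - \frac{\rho}{2}\right) WOR(I),$$
which is exactly the $\rho/2$--differential guarantee.

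The only real subtlety is the sign bookkeeping hidden inside the $\succeq$ notation: one must check that substituting lower bounds into $\rho\,OPT(I_\Sigma) + (1-\rho)\,WOR(I_\Sigma)$ is legitimate, which relies on both $\rho \geq 0$ and $1 - \rho \geq 0$, the latter being precisely why the differential ratio is capped at $1$. Once this is granted the minimization case needs no separate treatment, as it follows by reading every $\succeq$ as $\leq$ throughout.
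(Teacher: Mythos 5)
Your proposal is correct and follows essentially the same route as the paper's own proof: map a $\rho$--differential approximate tour $T$ on $I_\Sigma$ to the pair $(T,T^-)$ of value $d_\Sigma(T)$, then substitute the bounds $OPT(I_\Sigma)\succeq\left(OPT(I)+WOR(I)\right)/2$ from Lemma~\ref{lem-dapx} and $WOR(I_\Sigma)\succeq WOR(I)$ from relation~(\ref{eq-TSP-STSP-wor}), using non-negativity of $\rho$ and $1-\rho$. Your additional checks (value preservation $d_P(T)+d_D(T^-)=d_\Sigma(T)$, symmetry preservation, and the sign bookkeeping behind $\succeq$) are details the paper leaves implicit, but the argument is the same.
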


\begin{proof}
Let $I$ be an instance of $\mathsf{DTSPMS}$. Given any tour $T$ on $V$, $(T, T^-)$ is a feasible pair of pickup and delivery tours on $I$, with value $d_\Sigma(T)$. In particular if $T$ is $\rho$--approximate for the $\mathsf{TSP}$ on $I_\Sigma$, then we have:
$$\begin{array}{rll}
	d_\Sigma(T)
	   &\succeq     \rho\, OPT(I_\Sigma) +(1 -\rho) WOR(I_\Sigma)        &\\
	   &\succeq     \rho \left(OPT(I) +WOR(I)\right)/2 	+(1 -\rho) WOR(I)
	           &\text{using (\ref{eq-TSP-STSP-wor}) \& (\ref{eq-TSP-STSP-wor+opt})}\\
	   &=           \rho/2 \times OPT(I) + (1 -\rho/2)  WOR(I)    
\end{array}$$ 
Solution $(T, T^-)$ therefore is $\rho/2$--approximate on $I$. 
\end{proof}

The theorem below is a straightforward consequence of Proposition \ref{prop-reduc-stsp2tsp-DIFF} and the result of \cite{EM08}.
\begin{theorem}\label{thm-dapx-from-tsp}
	The Symmetric $\mathsf{DTSPMS}$ is approximable within differential ratio $3/4 -\varepsilon$, $\varepsilon >0$.
\end{theorem}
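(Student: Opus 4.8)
The plan is to derive the bound by composing the $(3/4-\varepsilon)$--differential algorithm of \cite{EM08} for the Symmetric $\mathsf{TSP}$ with the approximation--preserving reduction of Proposition \ref{prop-reduc-stsp2tsp-DIFF}, exactly along the lines announced just before the statement. Given an instance $I$ of the Symmetric $\mathsf{DTSPMS}$, I would first form the associated $\mathsf{TSP}$ instance $I_\Sigma=(V,d_\Sigma)$ with $d_\Sigma(i,j)=d_P(i,j)+d_D(j,i)$. Since $d_P$ and $d_D$ are symmetric, so is $d_\Sigma$, so $I_\Sigma$ is a legitimate input for the \cite{EM08} algorithm; this matches the parenthetical ``Symmetric'' recorded in Proposition \ref{prop-reduc-stsp2tsp-DIFF}, which certifies that the reduction preserves symmetry.

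Second, I would run the algorithm of \cite{EM08} on $I_\Sigma$ to obtain, in polynomial time, a Hamiltonian tour $T$ that is $\rho$--differential approximate for $\rho=3/4-\varepsilon$, that is, $d_\Sigma(T)\succeq \rho\, OPT(I_\Sigma)+(1-\rho)\, WOR(I_\Sigma)$. I would then return the reverse pair $(T,T^-)$; by the discussion preceding Proposition \ref{prop-reduc-stsp2tsp-DIFF}, this is a feasible solution of the $\mathsf{DTSPMS}$ on $I$ for every $k$ and $c$, and its value is exactly $d_\Sigma(T)$.

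Third, I would transfer the guarantee from $I_\Sigma$ to $I$ through Proposition \ref{prop-reduc-stsp2tsp-DIFF}, whose engine is the pair of relations (\ref{eq-TSP-STSP-wor}) and (\ref{eq-TSP-STSP-wor+opt}). Substituting $WOR(I_\Sigma)\succeq WOR(I)$ and $OPT(I_\Sigma)\succeq (OPT(I)+WOR(I))/2$ into the differential inequality for $T$ rewrites the bound on $d_\Sigma(T)$ as a convex combination of $OPT(I)$ and $WOR(I)$, which is precisely a differential guarantee for the returned solution on $I$. Chaining the \cite{EM08} constant through this substitution yields the announced ratio for the Symmetric $\mathsf{DTSPMS}$, and the whole procedure is polynomial because the patching steps of \cite{EM08} and the formation of $(T,T^-)$ both are.

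The step I expect to be the crux is this last transfer, and specifically the constant it produces. Relation (\ref{eq-TSP-STSP-wor+opt}) only certifies that $OPT(I_\Sigma)$ reaches the midpoint $(OPT(I)+WOR(I))/2$ of the $\mathsf{DTSPMS}$ diameter, so I would scrutinise whether feeding $T$ through this single inequality preserves the full strength of the \cite{EM08} guarantee, or whether one must additionally exploit that $T$ lies near $OPT(I_\Sigma)$ rather than merely near the midpoint. I would therefore recompute the resulting coefficient of $OPT(I)-WOR(I)$ with particular care at exactly the point where (\ref{eq-TSP-STSP-wor+opt}) is invoked, since that is where the differential ratio for $I$ is actually pinned down.
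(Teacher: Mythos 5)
Your route is exactly the one the paper takes: its entire proof of Theorem \ref{thm-dapx-from-tsp} is the sentence that the result is ``a straightforward consequence of Proposition \ref{prop-reduc-stsp2tsp-DIFF} and the result of \cite{EM08}'', i.e., run the algorithm of \cite{EM08} on $I_\Sigma$ and return $(T,T^-)$. But the suspicion you voice in your last paragraph is exactly on target, and it is fatal to the stated constant. Carrying out the substitution you describe gives
\begin{align*}
d_\Sigma(T) &\succeq \rho\, OPT(I_\Sigma) +(1-\rho)\, WOR(I_\Sigma)
\succeq \rho \left(OPT(I)+WOR(I)\right)/2 +(1-\rho)\, WOR(I)\\
&= \frac{\rho}{2}\, OPT(I) +\left(1-\frac{\rho}{2}\right) WOR(I),
\end{align*}
a differential guarantee of $\rho/2$ --- which is precisely what Proposition \ref{prop-reduc-stsp2tsp-DIFF} itself announces. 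With $\rho =3/4-\varepsilon$ this yields $3/8 -\varepsilon/2$, not $3/4-\varepsilon$. So the chaining does not ``yield the announced ratio'': your argument, recomputed as you propose at the point where (\ref{eq-TSP-STSP-wor+opt}) is invoked, proves a $3/8$-type bound and cannot recover the theorem's constant.

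Moreover, the factor-$2$ loss is intrinsic to any algorithm that outputs a pair of the form $(T,T^-)$: such a solution has value $d_\Sigma(T)\preceq OPT(I_\Sigma)$, and Lemma \ref{lem-dapx} only pins $OPT(I_\Sigma)$ to the \emph{midpoint} of the $\mathsf{DTSPMS}$ diameter. When that relation is (near-)tight --- for instance when $k=n$, so that the pickup and delivery tours decouple and the optimal pair $(T_P^*,T_D^*)$ need not satisfy $T_D^*\approx T_P^{*-}$ --- no reversed-pair solution can guarantee a differential ratio better than $1/2$, however strong the $\mathsf{TSP}$ oracle. This is consistent with the paper itself, which needs the dedicated matching-based algorithm of Theorem \ref{thm-dapx-even} to reach even $1/2-o(1)$ for two rows; a $3/4-\varepsilon$ bound for the general problem via the $I_\Sigma$ reduction would make that section pointless. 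In short, the gap is not yours alone: the paper's one-line proof has the same mismatch, and the constant in Theorem \ref{thm-dapx-from-tsp} appears to be an error for $3/8-\varepsilon$ (or would require an entirely different algorithm that does not factor through $I_\Sigma$). Your instinct to distrust the transfer step, rather than the feasibility of $(T,T^-)$ or the symmetry-preservation of the reduction (both of which you handle correctly), identified precisely the step that fails.
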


\subsection{Differential approximation of the $\mathsf{2\,DTSPMS}$}

Some adaptation of the heuristic of Section \ref{sec-apx} enables to reach a differential approximation ratio of $1/2-o(1)$ for the $\mathsf{2\,DTSPMS}$.
In the proposed heuristic, the computation of optimal matchings brings {\em ``one half''} of the optimal value $OPT(I)$, which allows to establish a standard  approximation guarantee of $1/2 -o(1)$ for the maximization case. Obtaining such a guarantee with respect to the differential approximation measure additionally requires the comparison of the remaining part of the approximate solution -- namely, completions $A_P$ and $A_D$ of matchings $M_P$ and $M_D$ -- to the worst solution value $WOR(I)$. 
This comparison to the worst solution value captures the specificity of differential approximation, and may make it hard to establish differential approximation guarantees.  

\begin{figure}[t]
\begin{center}
	\includegraphics{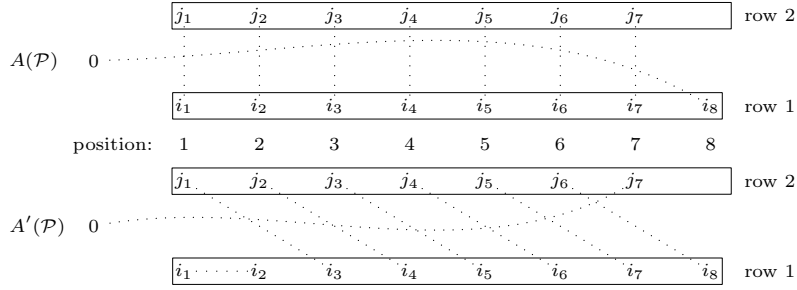}
	\caption{\label{fig-dapx-N}\small{The perfect matchings $A(\mathcal{P})$ and $A'(\mathcal{P})$ given a loading plan $\mathcal{P}$ (both matchings are drawn in dotted lines).}}
\end{center}
\end{figure}

\begin{figure}[t]
\begin{center}
	\includegraphics{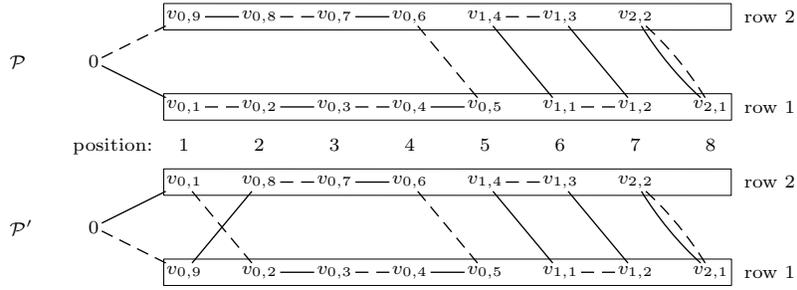}
	\caption{\label{fig-dapx-P}\small{Loading plans $\mathcal{P} =\mathcal{P}(M_P, M_D)$ and $\mathcal{P}' =\mathcal{P}'(M_P, M_D)$ given two  perfect matchings $M_P$ (in plain lines) and $M_D$ (in dashed lines).}}
\end{center}
\end{figure}

Consider an instance $I$ of the $\mathsf{2\,DTSPMS}$ where $|V| =2\nu +2$ is even (we will speak later of the case when $|V|$ is odd). 
We seek perfect matchings that complement the matchings $M_P$ and $M_D$. 
With a given balanced loading plan $\mathcal{P}=((i_1,\ldots,i_{\nu +1}),(j_1,\ldots,j_\nu))$ of $V\backslash\{0\}$, we associate the two perfect matchings
$$\begin{array}{rl}	
	A(\mathcal{P})	&= \left\{(i_p,j_p)\,|\,p=1,\ldots,\nu\right\} \cup\left\{(i_{\nu +1},0)\right\}\\
	A'(\mathcal{P})	&= \left\{(i_1,i_2)\right\} \cup\left\{(i_p,j_{p-2})\,|\,p=3,\ldots,\nu +1\right\} \cup\left\{(j_\nu,0)\right\}
\end{array}$$
on $V$. These matching are depicted in Figure \ref{fig-dapx-N} in case when $\nu =7$. Furthermore, we denote by $\mathcal{P}'$ the loading plan obtained from $\mathcal{P}$ when exchanging the storage of the two nodes that are loaded at position $1$ of rows $1$ and $2$.

Thereafter, we consider a pair $(\mathcal{P},\mathcal{P}')$ of loading plans where $\mathcal{P} =\mathcal{P}(M_P, M_D)$ refers to the approximate loading plan of Section \ref{sec-apx}. 
Figure \ref{fig-dapx-P} depicts the loading plans $\mathcal{P}$ and $\mathcal{P}'$ given two perfect matchings $M_P$ and $M_D$. 
The following Lemma holds:

\begin{lemma}\label{lem-dapx-even}
Let $A=A(\mathcal{P})=A(\mathcal{P}')$. 
Then,
\begin{enumerate} 
	\item[(i)] $M_P \cup A$ is a feasible pickup tour with respect to $\mathcal{P}$ and $\mathcal{P}'$, 
			and $M_D \cup A$ is a feasible delivery tour with respect to $\mathcal{P}$ and $\mathcal{P}'$.
	\item[(ii)] For $\alpha\in\{P,D\}$, if $M_\alpha$ links the depot to the vertex which is loaded in $\mathcal{P}$ at position 1 of row 1, then $M_\alpha\cup A'(\mathcal{P})$ is a feasible tour with respect to $\mathcal{P}$; symmetrically, if $M_\alpha$ links vertex 0 to the first vertex in row 1 of $\mathcal{P}'$, then $M_\alpha\cup A'(\mathcal{P}')$ is a feasible tour with respect to $\mathcal{P}'$. 
	\item[(iii)] $(A\cup A'(\mathcal{P}), A\cup A'(\mathcal{P}'))$ is a feasible pair of pickup and delivery tours on $V$ for the $\mathsf{2\,DTSPMS}$ with tight capacity.
\end{enumerate} 
\end{lemma}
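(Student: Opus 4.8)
The plan is to read each item as the assertion that a union of two perfect matchings is a single Hamiltonian cycle that respects the relevant precedence constraints, and to exploit throughout that $\mathcal{P}$ and $\mathcal{P}'$ differ only by the transposition of the two vertices $i_1,j_1$ stored at position~$1$ of the two rows. First I would dispose of the identity $A=A(\mathcal{P})=A(\mathcal{P}')$: the only edge of $A(\mathcal{P})$ involving $i_1$ or $j_1$ is $(i_1,j_1)$, and swapping the labels of $i_1$ and $j_1$ merely swaps the two endpoints of that single edge, leaving the edge set unchanged. I would also extract from the algorithm of Section~\ref{sec-apx} how $M_\alpha$ sits inside $\mathcal{P}=\mathcal{P}(M_P,M_D)$: each connected component of $M_P\cup M_D$ is split at its middle and its two halves are laid, in opposite directions, into the two rows, so that the edges of $M_\alpha$ run along the rows (plus the middle ``fold'' edges and the depot edges), whereas every edge of $A$ is a ``rung'' joining equal positions of the two rows (together with $(i_{\nu+1},0)$).

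For the first two items I would, in each case, (a) trace the alternation between $M_\alpha$-edges and $A$- (resp.\ $A'$-) edges to check that the union is a single Hamiltonian cycle, and (b) read off that the induced traversal visits every row in increasing position order, so precedence holds. Point (b) is immediate for $A$ because matching equal positions never inserts a within-row ``back edge''; for $A'(\mathcal{P})$ the shift by two positions is compatible with the loading order precisely when the cycle closes correctly, which is exactly what the hypothesis of the second item — that $M_\alpha$ links the depot to the vertex at position~$1$ of row~$1$ — guarantees. The simultaneous feasibility for both $\mathcal{P}$ and $\mathcal{P}'$ in the first item follows from the observation used above: $i_1$ and $j_1$ are adjacent in $M_\alpha\cup A$ through the edge $(i_1,j_1)\in A$ and are visited first among the two rows, so whichever of them is declared to occupy position~$1$ of row~$1$, the traversal still respects the loading order.

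The heart of the lemma is the third item. I would first trace each of $A\cup A'(\mathcal{P})$ and $A\cup A'(\mathcal{P}')$ to confirm it is a single Hamiltonian cycle: following the pattern that $A$ joins $i_p$ to $j_p$ while $A'(\mathcal{P})$ shifts the row-$2$ index down by two, the walk descends the rows in steps of two and climbs back up, the two passes being stitched together by the turn-around edges $(i_1,i_2)$ and $(j_\nu,0)$. Next I would note that $A'(\mathcal{P})$ and $A'(\mathcal{P}')$ differ only on the alternating $4$-cycle $i_1-i_2-j_1-i_3-i_1$ — the edges $(i_1,i_2),(i_3,j_1)$ being replaced by $(i_2,j_1),(i_1,i_3)$ — so the two Hamiltonian cycles coincide except for a local reroute. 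Because $(i_1,j_1)\in A$, the vertices $i_1$ and $j_1$ are tour-adjacent in both cycles, and the reroute simply transposes this adjacent pair; hence, orienting the delivery cycle opposite to the pickup cycle makes the two tours mutual reverses except for the interchange of $i_1$ and $j_1$, and the only pair of commodities visited in the same order by both tours is $\{i_1,j_1\}$. The conflict graph $G'$ of Section~\ref{sec-cpx-tour} is therefore a single edge, which admits a $2$-colouring with both colour classes of size at most $\nu+1=\lceil n/2\rceil$; by the colouring characterisation of feasibility this yields a consistent loading plan of tight capacity.

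The main obstacle is not the conflict-graph computation but the Hamiltonicity bookkeeping. For the first two items one must check that the matching unions stay connected across the \emph{concatenation} of several component-halves in $\mathcal{P}(M_P,M_D)$ — the rungs of $A$ may join vertices coming from distinct components of $M_P\cup M_D$ — and one must keep the two orientations and the position-$1$ swap perfectly aligned so that precedence holds for $\mathcal{P}$ and $\mathcal{P}'$ at once. Recognising that the symmetric difference $A'(\mathcal{P})\,\triangle\,A'(\mathcal{P}')$ is exactly one alternating $4$-cycle is the insight that collapses the third item to a one-line argument; the remainder is careful but routine verification.
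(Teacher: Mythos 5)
Your proposal is correct and follows essentially the same route as the paper's proof: items (i) and (ii) rest on the identical structural fact that, inside $\mathcal{P}(M_P,M_D)$, every edge of $M_\alpha$ joins consecutive positions of the loading plan (consecutive \emph{shifted} position-pairs $\{i_p,j_{p-2}\}$ under the depot-edge hypothesis of (ii)), so the unions alternate matching edges with rungs and sweep positions monotonically, while item (iii) rests on the same key observation that $A\cup A'(\mathcal{P}')$ is $A\cup A'(\mathcal{P})$ rerouted along the alternating $4$-cycle on $i_1,i_2,j_1,i_3$, i.e.\ the two cycles are mutual reverses up to transposing the adjacent pair $i_1,j_1$. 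The only divergence is cosmetic: in (iii) you certify feasibility through the conflict-graph/bounded-coloring characterisation of Section \ref{sec-cpx-tour} (one conflict edge, balanced $2$-colouring), whereas the paper writes down the two explicit loading plans --- the same certificate in different clothes.
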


\begin{figure}[t]
\begin{center}
	\includegraphics{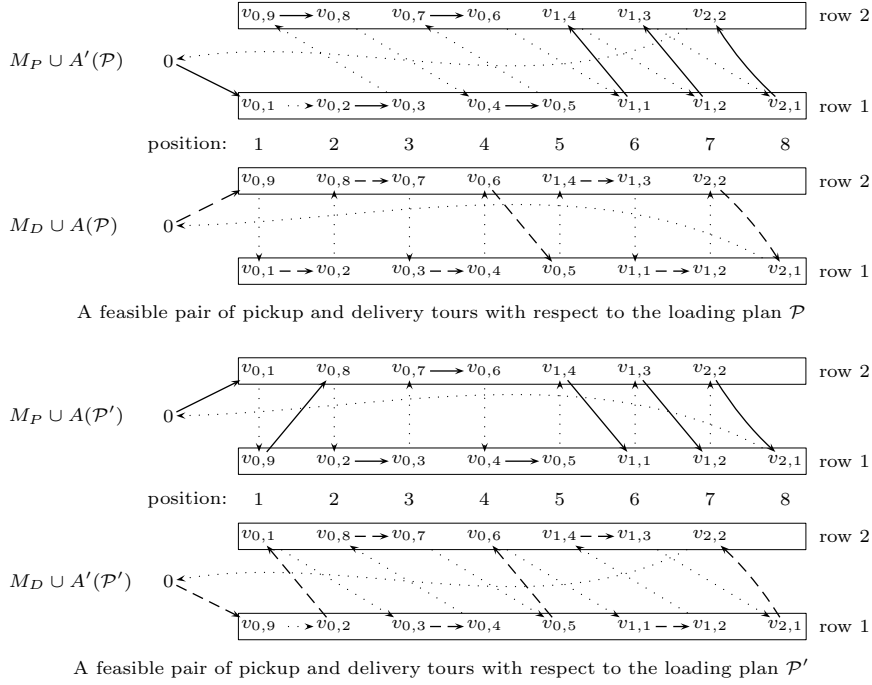}
	\caption{\label{fig-dapx-N+P}\small{The approximate solutions $(\mathcal{P}, M_P\cup A'(\mathcal{P}), M_D\cup A)$ and $(\mathcal{P}', M_P\cup A), M_D\cup A'(\mathcal{P}'))$: $M_P$ is depicted in plain lines, $M_D$ is depicted in dashed lines, $A$, $A'(\mathcal{P})$ and $A'(\mathcal{P}')$ are depicted in dotted lines.}}
\end{center}
\end{figure}

\begin{figure}[t]
\begin{center}
	\includegraphics{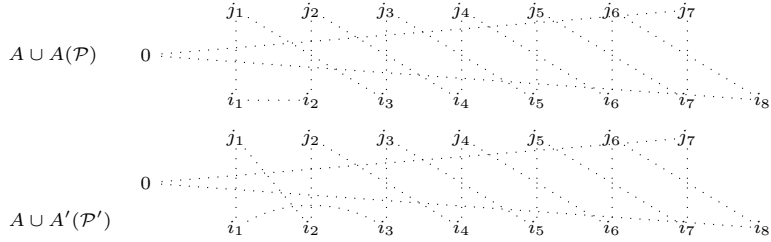}
	\caption{\label{fig-dapx-N+Np}\small{The two tours $A\cup A'(\mathcal{P})$ and $A\cup A'(\mathcal{P}')$ given a loading plan $\mathcal{P}=\left((i_1,\ldots,i_{\nu +1}),(j_1,\ldots,j_\nu)\right)$}}
\end{center}
\end{figure}

Figures \ref{fig-dapx-N+P} and \ref{fig-dapx-N+Np} propose some illustration of these facts.

\begin{proof}
(i) and (ii) Let $\mathcal{Q}=\left((i_1,\ldots,i_{\nu +1}),(j_1,\ldots,j_\nu)\right)\in\{\mathcal{P},\mathcal{P}'\}$.
 %
First consider completion $A$. By construction of $\mathcal{P}$ and $\mathcal{P}'$, each of the two perfect matchings $M_P,M_D$ connects 
	vertex $0$ to either $i_1$ or $j_1$, 
	a single vertex in $\{i_p,j_p\}$ to a single vertex in $\{i_{p+1},j_{p+1}\}$ for each position $p\in\{1,\ldots,\nu -1\}$, 
	and either $i_\nu$ or $j_\nu$ to vertex $i_{\nu +1}$.
$M_P\cup A$ and $M_D\cup A$ therefore both define Hamiltonian cycles on $V$, and these cycles induce feasible pickup and delivery tours with respect to $\mathcal{Q}$. 

Similarly for $A'(\mathcal{Q})$, for all $\alpha\in\{P,D\}$ such that $(0,i_1)\in M_\alpha$, $M_\alpha$ connects:
	$0$ to $i_1$;
	$i_2$ to $\{i_3,j_1\}$;  
	$\{i_p,j_{p-2}\}$ to $\{i_{p+1},j_{p-1}\}$ for every position $p\in\{3,\ldots,\nu\}$,
	and $\{i_{\nu +1},j_{\nu -1}\}$ to $j_\nu$.
We deduce that $M_\alpha\cup A'(\mathcal{Q})$ induces a feasible tour with respect to $\mathcal{Q}$ provided that $(0,i_1)\in M_\alpha$. 

(iii) Let $\mathcal{P} =\left((i_1,\ldots,i_{\nu +1}),(j_1,\ldots,j_\nu)\right)$. 
By definition of $A$ and $A'(\mathcal{P})$, $A\cup A'(\mathcal{P})$ can be viewed as the tour either 
$$\begin{array}{rl}
    &\left(0, j_\nu, i_\nu, j_{\nu -2}, i_{\nu -2} ,\ldots, j_2, i_2,
    i_1, j_1,
    i_3, j_3, i_5, j_5 ,\ldots, i_{\nu -1}, j_{\nu -1}, i_{\nu +1}, 0\right)
\\\text{or}
    &\left(0, j_\nu, i_\nu, j_{\nu -2}, i_{\nu -2} ,\ldots, j_3, i_3,
    j_1, i_1,
    i_2, j_2, i_4, j_4 ,\ldots, i_{\nu -1}, j_{\nu -1}, i_{\nu +1}, 0\right)
\end{array}$$
on $V$, depending on $\nu \mod 2$. Let $T$ refer to this tour.
Furthermore, by definition of $\mathcal{P}'$, $A\cup A'(\mathcal{P}')$ can be obtained from $A\cup A'(\mathcal{P})$ by substituing with the two edges $(i_1, i_2)$ and $(j_1, i_3)$ the edges $(j_1, i_2)$ and $(i_1, i_3)$. 
Therefore, $A\cup A'(\mathcal{P}')$ induces on $V$ a tour $T'$ which just the same as $T$, but swapping the two vertices $i_1$ and $j_1$. We deduce that the pair $(T', T^-)$ of tours defines a feasible pair of pickup and delivery tours on $V$, considering {\em e.g.} the loading plans 
$$\begin{array}{rl}
    &\left(
        (j_\nu, i_\nu, j_{\nu -2}, i_{\nu -2} ,\ldots, j_2, i_2, i_1), 
        (j_1, i_3, j_3, i_5, j_5 ,\ldots, i_{\nu -1}, j_{\nu -1}, i_{\nu +1})
    \right)
\\\text{and}
    &\left(
        (j_\nu, i_\nu, j_{\nu -2}, i_{\nu -2} ,\ldots, j_3, i_3, j_1), 
        (i_1, i_2, j_2, i_4, j_4 ,\ldots, i_{\nu -1}, j_{\nu -1}, i_{\nu +1})
    \right)
\end{array}$$
for respectively the even and the odd cases.
\end{proof} 

\begin{algorithm}\label{algo-dapx-even}
\caption{DAPX\_2DTSPMS\_EVEN}
\KwIn{A vertex set $V =\{0,\ldots,n\}$ where $n$ is odd, two symmetric distance functions $d_P,d_D: V^2\rightarrow\mathbb{Q}^+$, an optimization goal $\mathrm{opt}$}
\KwOut{A balanced $2$-rows loading plan $\mathcal{P}$ of $V\backslash\{0\}$ and an optimal pair $\left(T_P^*(\mathcal{P}),T_D^*(\mathcal{P})\right)$ of pickup and delivery tours on $V$ with respect to $d_P,d_D,\mathrm{opt}$ and $\mathcal{P}$}

\BlankLine
$\left(\mathcal{P},T_P,T_D\right)\longleftarrow$ APX\_2DTSPMS\_EVEN$(n,d_P,d_P,\mathrm{opt})$\;

\BlankLine
$\mathcal{P}'\longleftarrow\mathcal{P}$\;
Exchange in $\mathcal{P}'$ the nodes that are stored at position $1$ in rows $1$ and $2$\;
\ForEach{$\alpha$ in $\{P,D\}$}
{
	Compute a tour $T'_\alpha$ on $V$ that is consistent with $\mathcal{P}'$ and optimal with respect to $d_\alpha$ and $\mathrm{opt}$\;
}

\BlankLine
\If{$d_P(T_P)+d_D(T_D)\succeq d_P(T'_P)+d_D(T'_D)$}
	{\Return $\left(\mathcal{P},T_P,T_D\right)$\;}
\Else
	{\Return $\left(\mathcal{P}',T'_P,T'_D\right)$\;}
\end{algorithm}

\begin{theorem}\label{thm-dapx-even}
	The Symmetric $\mathsf{2\,DTSPMS}$ is approximable within a differential factor of $1/2 -o(1)$. 
\end{theorem}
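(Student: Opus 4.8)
The plan is to analyse Algorithm~\ref{algo-dapx-even} on the even case $|V| =2\nu +2$ and to show that it attains the differential ratio $1/2$ exactly there; the $o(1)$ loss will only surface in the odd case, where near-perfect matchings are needed. Write $APX$ for the value returned, which by construction is the better of the two optimal pairs of tours consistent with $\mathcal{P}$ and with $\mathcal{P}'$, namely $\mathrm{opt}\{d_P(T_P) +d_D(T_D),\,d_P(T'_P) +d_D(T'_D)\}$. The target inequality is
\begin{align}
	APX &\succeq \bigl(OPT(I) +WOR(I)\bigr)/2\nonumber
\end{align}
since, exactly as in the discussion following Lemma~\ref{lem-dapx}, this is equivalent to a differential guarantee of $1/2$ for both optimization goals.

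I would first extract two good feasible solutions from Lemma~\ref{lem-dapx-even}. The two position-$1$ vertices of $\mathcal{P}$ are the neighbours of the depot on the even cycle induced by $W_0$, so exactly one of $M_P,M_D$ joins $0$ to the position-$1$ vertex of row~$1$ in $\mathcal{P}$ while the other joins $0$ to the position-$1$ vertex of row~$1$ in $\mathcal{P}'$; assume w.l.o.g.\ that $M_P$ does the former. Parts (i)--(ii) of Lemma~\ref{lem-dapx-even} then guarantee that $\bigl(\mathcal{P},\,M_P\cup A'(\mathcal{P}),\,M_D\cup A\bigr)$ and $\bigl(\mathcal{P}',\,M_P\cup A,\,M_D\cup A'(\mathcal{P}')\bigr)$ are feasible solutions of $I$ (Figure~\ref{fig-dapx-N+P}). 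As $T_P,T_D$ (resp.\ $T'_P,T'_D$) are optimal tours consistent with $\mathcal{P}$ (resp.\ $\mathcal{P}'$), the value $APX$ dominates each of these, hence their average:
\begin{align}
	APX &\succeq \tfrac{1}{2}\bigl(S_1 +S_2\bigr)\nonumber
\end{align}
with $S_1 =d_P(M_P) +d_P(A'(\mathcal{P})) +d_D(M_D) +d_D(A)$ and $S_2 =d_P(M_P) +d_P(A) +d_D(M_D) +d_D(A'(\mathcal{P}'))$.

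The heart of the argument is to split $\tfrac12(S_1+S_2)$ into a matching part controlled by $OPT$ and a completion part controlled by $WOR$. The matching part contributes $d_P(M_P) +d_D(M_D)$, which by (\ref{eq-opt-odd}) and (\ref{eq-TSP-STSP-opt}) is $\succeq \tfrac12\bigl(OPT(I_P) +OPT(I_D)\bigr) \succeq \tfrac12 OPT(I)$. The four completion terms regroup as
\begin{align}
	d_P(A'(\mathcal{P})) +d_D(A) +d_P(A) +d_D(A'(\mathcal{P}')) &= d_P\bigl(A\cup A'(\mathcal{P})\bigr) +d_D\bigl(A\cup A'(\mathcal{P}')\bigr)\nonumber
\end{align}
and part (iii) of Lemma~\ref{lem-dapx-even} identifies the right-hand side as the value of a \emph{feasible} pair of pickup and delivery tours on $I$, hence $\succeq WOR(I)$. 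Adding the two contributions gives $APX \succeq \tfrac12 OPT(I) +\tfrac12 WOR(I)$, i.e.\ the differential ratio $1/2$ in the even case.

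It remains to treat odd $|V|$ with the analogue of Algorithm~\ref{algo-dapx-even} built on near-perfect matchings: the only change is that (\ref{eq-opt-odd}) is replaced by (\ref{eq-facts-opt-even}), so the matching part contributes $\tfrac12(1 -1/|V|)OPT(I)$ and one loses exactly the $o(1)$ term, yielding $1/2 -o(1)$. Polynomiality is clear, as only two optimal matchings and -- via Proposition~\ref{prop-TSP} -- four loading-plan-consistent optimal tours are computed. I expect the completion regrouping to be the main obstacle: one must pair the remaining edges of the two solutions so that their total is the value of a \emph{single} feasible solution certified by Lemma~\ref{lem-dapx-even}(iii); this is precisely what lets the ``remaining part'' of the heuristic be compared to $WOR(I)$, the comparison that -- as emphasised before the lemma -- embodies the characteristic difficulty of differential approximation.
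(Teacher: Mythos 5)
Your even-case argument matches the paper's strategy (average two certified solutions; matchings give $OPT(I)/2$ via (\ref{eq-opt-odd}) and (\ref{eq-TSP-STSP-opt}); completions regroup into a feasible pair certified by Lemma~\ref{lem-dapx-even}(iii), hence compared to $WOR(I)$), but it contains a genuine gap: your ``w.l.o.g.'' claim that one matching links $0$ to the position-$1$ vertex of row~$1$ in $\mathcal{P}$ and the other links $0$ to the position-$1$ vertex of row~$1$ in $\mathcal{P}'$ is false when the component $W_0$ of the depot in $M_P\cup M_D$ is a $2$-cycle, i.e.\ when $M_P$ and $M_D$ contain the \emph{same} edge $(0,i_1)$. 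In that case $j_1$ (the position-$1$ vertex of row~$1$ in $\mathcal{P}'$) comes from a different component and is adjacent to $0$ in neither matching, so Lemma~\ref{lem-dapx-even}(ii) does not certify $M_D\cup A'(\mathcal{P}')$, and your solution $(\mathcal{P}',M_P\cup A,M_D\cup A'(\mathcal{P}'))$ need not be feasible. The paper splits the proof on exactly this dichotomy ($m_0=2$ versus $m_0\geq 4$); in the $m_0=2$ case both $M_P\cup A'(\mathcal{P})$ and $M_D\cup A'(\mathcal{P})$ are feasible with respect to the \emph{same} plan $\mathcal{P}$, the two averaged solutions are $(\mathcal{P},M_P\cup A,M_D\cup A)$ and $(\mathcal{P},M_P\cup A'(\mathcal{P}),M_D\cup A'(\mathcal{P}))$, and the completion part is the single Hamiltonian tour $A\cup A'(\mathcal{P})$, compared to $WOR(I)$ via the reversal pair $\left(T,T^-\right)$ rather than via part (iii) of the lemma. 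Your argument as written covers only the $m_0\geq 4$ case.

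The second, larger gap is the odd case. Replacing perfect by near-perfect matchings and invoking (\ref{eq-facts-opt-even}) repairs only the \emph{matching} half of the bound; it does nothing for the \emph{completion} half, which is the crux of the differential analysis. With near-perfect matchings on odd $|V|$ the multigraph $M_P\cup M_D$ contains a chain component, the loading plan rows no longer have the sizes $\nu+1$ and $\nu$ assumed throughout, and the constructions $A$, $A'(\mathcal{P})$, $A'(\mathcal{P}')$ together with all three parts of Lemma~\ref{lem-dapx-even} are stated and proved only for even $|V|$; nothing guarantees that the leftover edges of the two solutions still aggregate into a feasible pair of tours comparable to $WOR(I)$. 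This is why the paper does not use that shortcut: its odd case (Algorithm~\ref{algo-dapx-odd}, Theorem~\ref{thm-dapx-odd}) computes, for every $x\in V$, optimal perfect matchings on $V\backslash\{x\}$, builds a loading plan per $x$, averages over all $x$, and needs Lemma~\ref{lem-odd-extr} and the technically heavy Lemma~\ref{lem-Fx} to relate the completions to $WOR(I)$, yielding the guarantee $\left(1/2-2/|V|\right)OPT(I)+\left(1/2+2/|V|\right)WOR(I)$. So the $o(1)$ term does not arise from (\ref{eq-facts-opt-even}) at all, and the odd case cannot be dismissed as a one-line modification.
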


\begin{proof}
In case when $|V|$ is even, we show that Algorithm \ref{algo-dapx-even} provides a $1/2$--differential approximation for the Symmetric $\mathsf{2\,DTSPMS}$, {\em i.e.} for any instance $I$ Algorithm \ref{algo-dapx-even} returns a solution with value $APX \succeq  OPT(I)/2 +WOR(I)/2$. 
We assume {\em without loss of generality} that the goal on $I$ is to maximize. 
Let $m_0$ denote the number of vertices that lie in $M_P\cup M_D$ on the cycle that contains 0. We separate the proof in two parts, depending on whether $m_0=2$ or $m_0 \geq 4$. 

Let $\mathcal{P} =\left((i_1,\ldots,i_{\nu +1}),(j_1,\ldots,j_\nu)\right)$. 
When $m_0=2$, the perfect matchings $M_P$ and $M_D$ both contain edge $(0, i_1)$. It thus follows from Lemma \ref{lem-dapx-even} that $M_P\cup A$ and $M_P\cup A'(\mathcal{P})$ on the one hand, $M_D\cup A$ and $M_D\cup A'(\mathcal{P})$ on the other hand, are feasible pickup and delivery tours with respect to $\mathcal{P}$. 
Since Algorithm \ref{algo-dapx-even} returns a best pair of pickup and delivery tours with respect to $\mathcal{P}$ or $\mathcal{P}'$, we deduce that the value $APX$ of the solution returned by the Algorithm satisfies:
$$\begin{array}{rl}
	APX &\geq \max\left\{
	        d_P(M_P \cup A) +d_D(M_D \cup A),\right.\\ 
	   &\left.\hspace*{2cm}     
	           d_P(M_P \cup A'(\mathcal{P})) +d_D(M_D \cup A'(\mathcal{P}))
	    \right\}\\
	    &\geq d_P(M_P) +d_D(M_D)
	            + d_\Sigma\left(A \cup A'(\mathcal{P})\right)/2
\end{array}$$
We already know that quantity $d_P(M_P) +d_D(M_D)$ is bounded below by $OPT(I)/2$. Now, since $A\cup A'(\mathcal{P})$ is a Hamiltonian tour on $V$, we also have $d_\Sigma(A\cup A'(\mathcal{P})) \geq WOR(I)$. This  concludes the proof for the case when $m_0 =2$.

When $m_0\geq 4$, either $(0,i_1)\in M_P$ and $(0,j_1)\in M_D$, or $(0,j_1)\in M_P$ and $(0,i_1)\in M_D$. Assume {\em w.l.o.g.} that the former occurs. Lemma \ref{lem-dapx-even} in this case ensures that 
    $(\mathcal{P}, M_P\cup A'(\mathcal{P}),M_D\cup A)$ and $(\mathcal{P}', M_P\cup A,M_D\cup A'(\mathcal{P}'))$ are feasible solutions on $I$.  
Similarly to the preceding case, we deduce from the fact that Algorithm \ref{algo-dapx-even} returns a best pair of tours with respect to $\mathcal{P}$ or $\mathcal{P}'$ that we have:
$$\begin{array}{rl}
	APX &\geq \max\left\{
	        d_P\left(M_P \cup A'(\mathcal{P})\right) +d_D(M_D \cup A),\right.\\ 
	   &\left.\hspace*{2cm}     
	        d_P(M_P \cup A) +d_D\left(M_D \cup A'(\mathcal{P}')\right)
	    \right\}\\
	    &\geq OPT(I)/2
	            +\left(
	                d_P\left(A \cup A'(\mathcal{P})\right)
	               +d_D\left(A \cup A'(\mathcal{P}')\right)
	           \right)/2
\end{array}$$
Now we know from Lemma \ref{lem-dapx-even} that $(A\cup A'(\mathcal{P}),A\cup A'(\mathcal{P}'))$ defines a feasible pair of pickup and delivery tours, which concludes the proof.

In case when $|V|$ is odd, the algorithm mostly consists in computing a loading plan for each $x\in V$, each based on the computation of a pair $(M^x_P,M^x_D)$ of optimal perfect matchings on $V\backslash\{x\}$. 
Since the proof is technical and brings no new insights on the problem, we put it in a separated appendix. 
\end{proof}


\section{Conclusion}\label{sec-conc}

We have provided many approximation results for the Double TSP with Multiple Stacks or its restriction with two stacks, for several kinds of distances. Among them, $\mathsf{Min\,Metric\,k\,DTSPMS}$ with tight capacities is approximable within standard factor $(3/2)k$, 
	whereas $\mathsf{2\,DTSPMS}$ is approximable within differential factor $1/2-o(1)$. Also,  $\mathsf{Max\,2\,DTSPMS}$, $\mathsf{Max\,2\,DTSPMS-(1,2)}$ and 
	$\mathsf{Min\,2\,DTSPMS-(1,2)}$ with tight capacities are approximable within standard factor $1/2-o(1)$, $3/4-o(1)$ and $3/2+o(1)$, respectively.
Most of our positive approximation results on the general problem are obtained from reductions from the TSP. For the problem with two stacks, we designed a dedicated algorithm based on optimal matchings and suitable completions that can be compared to the best and worst tours. The analysis is non trivial and provides interesting approximation results, in both cases of standard and differential approximation. An open problem is to design tailored algorithms for the case with more than two stacks, which could improve the approximation ratios found with TSP reductions. The VRP generalization is also interesting to study, although its complexity would make a real challenge to find approximation results.


\bibliographystyle{siam}
\bibliography{bib_stsp}

\begin{thebibliography}{10}

\bibitem{adamaszek2018new}
{\sc A.~Adamaszek, M.~Mnich, and K.~Paluch}, {\em New approximation algorithms
  for (1, 2)-tsp}, in 45th International Colloquium on Automata, Languages, and
  Programming (ICALP 2018), Schloss Dagstuhl-Leibniz-Zentrum fuer Informatik,
  2018.

\bibitem{alba2013branch}
{\sc M.~A. Alba~Mart{\'\i}nez, J.-F. Cordeau, M.~Dell'Amico, and M.~Iori}, {\em
  A branch-and-cut algorithm for the double traveling salesman problem with
  multiple stacks}, INFORMS Journal on Computing, 25 (2013), pp.~41--55.

\bibitem{ACGKMP99}
{\sc G.~Ausiello, P.~Crescenzi, G.~Gambosi, V.~Kann, A.~Marchetti-Spaccamela,
  and M.~Protasi}, {\em Complexity and Approximation (Combinatorial
  Optimization Problems and Their Approximability Properties)}, Springer,
  Berlin, 1999.

\bibitem{barbato2016polyhedral}
{\sc M.~Barbato, R.~Grappe, M.~Lacroix, and R.~W. Calvo}, {\em Polyhedral
  results and a branch-and-cut algorithm for the double traveling salesman
  problem with multiple stacks}, Discrete Optimization, 21 (2016), pp.~25--41.

\bibitem{BJWW98}
{\sc A.~I. Barvinok, D.~S. Johnson, G.~J. Woeginger, and R.~Woodroofe}, {\em
  The maximum traveling salesman problem under polyhedral norms}, in IPCO VI,
  vol.~1412 of Lecture Notes in Computer Science, Springer, 1998, pp.~195--201.

\bibitem{BHM05}
{\sc C.~Bazgan, R.~Hassin, and J.~Monnot}, {\em Approximation algorithms for
  some vehicle routing problems}, Discrete Applied Mathematics, 146 (2005),
  pp.~27--42.

\bibitem{BK06}
{\sc P.~Berman and M.~Karpinski}, {\em {8/7--approximation algorithm for
  (1,2)-TSP}}, in Proc. of the seventeenth annual ACM-SIAM symposium on
  Discrete algorithm -- SODA '06, 2006, pp.~641--648.

\bibitem{B04}
{\sc M.~Bl{\"a}ser}, {\em {A 3/4$-$approximation algorithm for maximum ATSP
  with weights zero and one}}, in Proc. of the 7th Int. Workshop on
  Approximation Algorithms for Combinatorial Optimization Problems -- APPROX
  04, 2004, pp.~61--71.

\bibitem{BMO11}
{\sc F.~Bonomo, S.~Mattia, and G.~Oriolo}, {\em Bounded coloring of
  co-comparability graphs and the pickup and delivery tour combination
  problem}, Theoretical Computer Science, 412 (2011), pp.~6261--6268.

\bibitem{CCS13}
{\sc F.~Carrabs, R.~Cerulli, and M.~G. Speranza}, {\em {A Branch-and-Bound
  Algorithm for the Double TSP with Two Stacks}}, Networks, 61 (2013),
  pp.~58--75.

\bibitem{CCN12}
{\sc M.~Casazza, A.~Ceselli, and M.~Nunkesser}, {\em Efficient algorithms for
  the double traveling salesman problem with multiple stacks}, Computers {\&}
  OR, 39 (2012), pp.~1044--1053.

\bibitem{chagas2020variable}
{\sc J.~B. Chagas, U.~E. Silveira, A.~G. Santos, and M.~J. Souza}, {\em A
  variable neighborhood search heuristic algorithm for the double vehicle
  routing problem with multiple stacks}, International Transactions in
  Operational Research, 27 (2020), pp.~112--137.

\bibitem{COW05}
{\sc Z.-Z. Chen, Y.~Okamoto, and L.~Wang}, {\em {Improved deterministic
  approximation algorithms for Max TSP}}, Information Processing Letters, 95
  (2005), pp.~333--342.

\bibitem{C76}
{\sc N.~Christofides}, {\em Worst-case analysis of a new heuristic for the
  travelling salesman problem}, Tech. Rep. Report 388, Graduate School of
  Industrial Administration, CMU, 1976.

\bibitem{CP85}
{\sc C.~J. Colbourn and W.~R. Pulleyblank}, {\em Minimizing setups in ordered
  sets of fixed width}, Order, 1 (1985), pp.~225--229.

\bibitem{DP96}
{\sc M.~Demange and V.~T. Paschos}, {\em On an approximation measure founded on
  the links between optimization and polynomial approximation theory},
  Theoretical Computer Science, 158 (1996), pp.~117--141.

\bibitem{EM08}
{\sc B.~Escoffier and J.~Monnot}, {\em {A better differential approximation
  ratio for symmetric TSP}}, Theoretical Computer Science, 396 (2008),
  pp.~63--70.

\bibitem{FS07}
{\sc U.~Feige and M.~Singh}, {\em Improved approximation ratios for traveling
  salesperson tours and paths in directed graphs}, in Approximation,
  Randomization, and Combinatorial Optimization. Algorithms and Techniques,
  vol.~4627 of Lecture Notes in Computer Science, Springer Berlin Heidelberg,
  2007, pp.~104--118.

\bibitem{FOT09}
{\sc A.~Felipe, M.~Ortu{\~n}o, and G.~Tirado}, {\em New neighborhood structures
  for the double traveling salesman problem with multiple stacks}, TOP: An
  Official Journal of the Spanish Society of Statistics and Operations
  Research, 17 (2009), pp.~190--213.

\bibitem{felipe2009double}
{\sc {\'A}.~Felipe, M.~T. Ortu{\~n}o, and G.~Tirado}, {\em The double traveling
  salesman problem with multiple stacks: A variable neighborhood search
  approach}, Computers \& Operations Research, 36 (2009), pp.~2983--2993.

\bibitem{HK01}
{\sc R.~Hassin and S.~Khuller}, {\em z-approximations}, Journal of Algorithms,
  41 (2001), pp.~429--442.

\bibitem{HK62}
{\sc M.~Held and R.~M. Karp}, {\em A dynamic programming approach to sequencing
  problems}, SIAM J. Appl. Math, 10 (1962), pp.~196--210.

\bibitem{iori2015exact}
{\sc M.~Iori and J.~Riera-Ledesma}, {\em Exact algorithms for the double
  vehicle routing problem with multiple stacks}, Computers \& Operations
  Research, 63 (2015), pp.~83--101.

\bibitem{J03}
{\sc K.~Jansen}, {\em The mutual exclusion scheduling problem for permutation
  and comparability graphs}, Information and Computation, 180 (2003),
  pp.~71--81.

\bibitem{KM09}
{\sc L.~Kowalik and M.~Mucha}, {\em {Deterministic 7/8-approximation for the
  metric maximum TSP}}, Theoretical Computer Science, 410 (2009),
  pp.~5000--5009.

\bibitem{KM11}
\leavevmode\vrule height 2pt depth -1.6pt width 23pt, {\em 35/44-approximation
  for asymmetric maximum tsp with triangle inequality}, Algorithmica, 59
  (2011), pp.~240--255.

\bibitem{LLER10}
{\sc R.~Lusby, J.~Larsen, M.~Ehrgott, and D.~Ryan}, {\em {An exact method for
  the double TSP with multiple stacks}}, Int. Trans. on OR, 17 (2010),
  pp.~637--652.

\bibitem{AMCDI13}
{\sc M.~A.~A. Mart\'{\i}nez, J.-F. Cordeau, M.~Dell'Amico, and M.~Iori}, {\em A
  branch-and-cut algorithm for the double traveling salesman problem with
  multiple stacks}, INFORMS Journal on Computing, 25 (2013), pp.~41--55.

\bibitem{M02}
{\sc J.~Monnot}, {\em Differential approximation results for the traveling
  salesman and related problems}, Information Processing Letters, 82 (2002),
  pp.~229--235.

\bibitem{MPT03a}
{\sc J.~Monnot, V.~T. Paschos, and S.~Toulouse}, {\em {Approximation
  polynomiale des probl{\`e}mes NP-difficiles - Optima locaux et rapport
  diff{\'e}rentiel}}, Hermes Science, Paris, 2003.

\bibitem{MPT03b}
{\sc J.~Monnot, V.~T. Paschos, and S.~Toulouse}, {\em Differential
  approximation results for the traveling salesman problem with distances 1 and
  2}, European Journal of Operational Research, 145 (2003), pp.~557--568.

\bibitem{MT08}
{\sc J.~Monnot and S.~Toulouse}, {\em Approximation results for the weighted
  p$_{\mbox{4}}$ partition problem}, Journal of Discrete Algorithms, 6 (2008),
  pp.~299--312.

\bibitem{N09}
{\sc T.~Nagoya}, {\em New differential approximation algorithm for {K}-customer
  vehicle routing problem}, Information Processing Letters, 109 (2009),
  pp.~405--408.

\bibitem{P14}
{\sc K.~E. Paluch}, {\em Better approximation algorithms for maximum asymmetric
  traveling salesman and shortest superstring}, CoRR, abs/1401.3670 (2014).

\bibitem{PMM09}
{\sc K.~E. Paluch, M.~Mucha, and A.~Madry}, {\em A 7/9--approximation algorithm
  for the maximum traveling salesman problem}, in Approximation, Randomization,
  and Combinatorial Optimization. Algorithms and Techniques, APPROX-RANDOM
  2009, 2009, pp.~298--311.

\bibitem{PAS10}
{\sc H.~L. Petersen, C.~Archetti, and M.~G. Speranza}, {\em Exact solutions to
  the double travelling salesman problem with multiple stacks}, Networks, 56
  (2010), pp.~229--243.

\bibitem{PM09}
{\sc H.~L. Petersen and O.~Madsen}, {\em The double travelling salesman problem
  with multiple stacks -- formulation and heuristic solution approaches},
  European Journal of Operational Research, 198 (2009), pp.~139--147.

\bibitem{PLE71}
{\sc A.~Pnueli, A.~Lempel, and S.~Even}, {\em Transitive orientation of graphs
  and identification of permutation graphs}, Canadian Journal of Mathematics,
  23 (1971), pp.~160--175.

\bibitem{sampaio2017new}
{\sc A.~H. Sampaio and S.~Urrutia}, {\em New formulation and branch-and-cut
  algorithm for the pickup and delivery traveling salesman problem with
  multiple stacks}, International Transactions in Operational Research, 24
  (2017), pp.~77--98.

\bibitem{T10}
{\sc S.~Toulouse}, {\em {Approximability of the Multiple Stack TSP}},
  Electronic Notes in Discrete Mathematics, 36 (2010), pp.~813--820.

\bibitem{T12}
\leavevmode\vrule height 2pt depth -1.6pt width 23pt, {\em {Differential
  Approximation of the Multiple Stacks TSP}}, in Combinatorial Optimization -
  Second International Symposium, ISCO 2012, vol.~7422 of Lecture Notes in
  Computer Science, 2012, pp.~404--415.

\bibitem{TW09}
{\sc S.~Toulouse and R.~W. Calvo}, {\em {On the Complexity of the Multiple
  Stack TSP, kSTSP}}, in Theory and Applications of Models of Computation, 6th
  Annual Conference, TAMC 2009, LNCS 5532, 2009, pp.~360--369.

\bibitem{urrutia2015dynamic}
{\sc S.~Urrutia, A.~Milan{\'e}s, and A.~L{\o}kketangen}, {\em A dynamic
  programming based local search approach for the double traveling salesman
  problem with multiple stacks}, International Transactions in Operational
  Research, 22 (2015), pp.~61--75.

\bibitem{WS11}
{\sc D.~P. Williamson and D.~B. Shmoys}, {\em The Design of Approximation
  Algorithms}, Cambridge University Press, New York, 2011.

\end{thebibliography}
 
\appendix
\section{APPENDIX : Differential approximation of the $\mathsf{DTSPMS}$ on an odd number of vertices}

\subsection{The general idea of the proof}

Let $I$ be an instance of the $\mathsf{2\,DTSPMS}$ on a node set $V$ such that $|V|$ is odd. We assume {\em w.l.o.g.} that the goal on $I$ is to maximize.
In what follows, $T_*$ denotes a worst solution on $I_\Sigma$, {\em i.e.}, $T_*$ is a tour of minimum distance $d_\Sigma$. Furthermore, $(\mathcal{P}^*,T_P^*,T_D^*)$ denotes an optimal solution on $I$.
Given $x\in V$, we denote by $V_x$ the vertex set $V\backslash\{x\}$. Morever, given an index $\alpha\in\{P,D\}$, $M^x_\alpha$ refers to a maximal perfect matching on $V_x$ with respect to opt and $d_\alpha$. Finally, given two nodes $i,j\in V_x$,
 $\delta_\alpha^x(i,j)$ refers to the quantity $d_\alpha(i,x)+d_\alpha(x,j)-d_\alpha(i,j)$. By extension, given a tour $T$ on $V$, $\delta_\alpha^x(T)$ refers to $\delta_\alpha^x(i,j)$ for the two vertices $i$ and $j$ that are adjacent to $x$ in $T$.
 %
We make some observation on the extremal values:
\begin{lemma}\label{lem-odd-extr}
$M^x_P,M^x_D,x\in V$, $OPT(I)$, $WOR(I)$ satisfy:
\begin{align}
	\forall x\in V,\ d_P(M^x_P) +d_D(M^x_D) \geq 1/2 \times \left(OPT(I) -\delta^x_P(T_P^*) -\delta^x_D(T_D^*)\right)\label{eq-fact1}\\
	\sum_{x\in V}\left(\delta^x_P(T_P^*)+\delta^x_D(T_D^*) -\delta^x_P(T_*)-\delta^x_D(T_*)\right)	\preceq 4\left(OPT(I)-WOR(I)\right)\label{eq-fact2}
\end{align}
\end{lemma}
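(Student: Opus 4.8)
The plan is to prove the two inequalities of Lemma~\ref{lem-odd-extr} separately, as they rely on different arguments.

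For inequality~(\ref{eq-fact1}), the idea is to repeat the ``$|V|$ is odd'' argument already used in the proof of Theorem~\ref{thm-apx}, but relative to the fixed optimal tours rather than to a generic optimal tour. First I would fix $x\in V$ and consider the optimal tours $T_P^*$ and $T_D^*$. In $T_\alpha^*$, the vertex $x$ is adjacent to two vertices, say $i_\alpha$ and $j_\alpha$; removing $x$ and reconnecting $i_\alpha$ to $j_\alpha$ produces a Hamiltonian cycle on $V_x$ whose $d_\alpha$-distance is exactly $d_\alpha(T_\alpha^*)-\delta_\alpha^x(T_\alpha^*)$, since by definition $\delta_\alpha^x(i_\alpha,j_\alpha)=d_\alpha(i_\alpha,x)+d_\alpha(x,j_\alpha)-d_\alpha(i_\alpha,j_\alpha)$ is precisely the extra cost of routing through $x$. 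Because $|V_x|$ is even, that Hamiltonian cycle on $V_x$ splits into two perfect matchings on $V_x$, and the better of the two has $d_\alpha$-value at least half the cycle's value. Since $M_\alpha^x$ is an optimal perfect matching on $V_x$, it dominates this better half, giving
\begin{align}\nonumber
	d_\alpha(M_\alpha^x) &\succeq \tfrac12\left(d_\alpha(T_\alpha^*)-\delta_\alpha^x(T_\alpha^*)\right),	&\alpha\in\{P,D\}.
\end{align}
Summing over $\alpha\in\{P,D\}$ and using $d_P(T_P^*)+d_D(T_D^*)=OPT(I)$ yields~(\ref{eq-fact1}).

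For inequality~(\ref{eq-fact2}), I would exploit the fact that summing $\delta_\alpha^x$ over all $x\in V$ telescopes against a fixed tour. The key observation is that for a fixed tour $T$ on $V$, as $x$ ranges over $V$, the quantity $\delta_\alpha^x(T)$ only picks up the two edges of $T$ incident to $x$ against the shortcut edge; summing $\sum_{x\in V}\delta_\alpha^x(T)$ therefore rewrites as $\sum_{x\in V}\big(d_\alpha(\text{two edges of }T \text{ at }x)-d_\alpha(\text{shortcut})\big)$, where each edge of $T$ is counted for its two endpoints, giving $2\,d_\alpha(T)$ minus the total shortcut cost. The plan is thus to bound $\sum_{x}\delta_P^x(T_P^*)$ and $\sum_x\delta_D^x(T_D^*)$ above, and $\sum_x\delta_P^x(T_*)$ and $\sum_x\delta_D^x(T_*)$ below, recognizing that the shortcut tours on $V_x$ obtained from $T_P^*,T_D^*,T_*$ are themselves feasible $\mathsf{TSP}$ tours on $V_x$, hence comparable to $OPT$ and $WOR$ of the induced instances, which in turn relate to $OPT(I)$ and $WOR(I)$ through relations~(\ref{eq-TSP-STSP-opt}) and~(\ref{eq-TSP-STSP-wor}). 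The factor $4$ should emerge from the factor $2$ of the edge double-counting combined with the two indices $\alpha\in\{P,D\}$.

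The main obstacle I anticipate is the bookkeeping in~(\ref{eq-fact2}): one must carefully track how the shortcut tours on each $V_x$ relate to the global optimal and worst values, and in which direction each inequality points for the maximization goal. In particular, using $\delta_\alpha^x(T_*)$ for the worst tour $T_*$ on $I_\Sigma$ requires relating the shortcut of $T_*$ at $x$ to the worst solution value on $I$, which goes through the inequality $WOR(I_\Sigma)\succeq WOR(I)$ of~(\ref{eq-TSP-STSP-wor}) and the definition of $d_\Sigma$; getting the signs and the telescoping consistent so that the net bound is exactly $4(OPT(I)-WOR(I))$ is where the argument is most delicate, and I would handle it by writing each of the four sums explicitly in terms of $d_P,d_D$ of the three fixed tours before combining.
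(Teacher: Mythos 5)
Your argument for (\ref{eq-fact1}) is correct and coincides with the paper's proof: shortcut $T_\alpha^*$ past $x$, observe that the resulting tour on $V_x$ has $d_\alpha$-value exactly $d_\alpha(T_\alpha^*)-\delta_\alpha^x(T_\alpha^*)$, split this even-length cycle into two perfect matchings, and invoke the optimality of $M_\alpha^x$ before summing over $\alpha$.

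For (\ref{eq-fact2}), however, your plan has a genuine gap. The double-counting identity $\sum_{x\in V}\delta_\alpha^x(T)=2d_\alpha(T)-d_\alpha(T^2)$ is right, where $T^2$ denotes the set of shortcut edges (pairs of vertices at distance two along $T$). But the route you propose for exploiting it cannot be completed. First, the paper never compares the shortcut tours on the individual sets $V_x$ to optimal or worst values of ``induced instances'': relations (\ref{eq-TSP-STSP-opt}) and (\ref{eq-TSP-STSP-wor}) compare instances living on the \emph{same} vertex set $V$, and no bridge is available between values on $V_x$ and $OPT(I)$, $WOR(I)$. What is actually needed is the aggregate fact that, because $|V|$ is odd, $T^2$ is itself a single Hamiltonian cycle on $V$ --- this is precisely where the parity hypothesis enters, and it is absent from your sketch. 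Second, and more seriously, there is a direction problem. Writing (with the appendix's maximization convention) $\sum_{x}\bigl(\delta_P^x(T_P^*)+\delta_D^x(T_D^*)\bigr)=2\,OPT(I)-d_P\bigl((T_P^*)^2\bigr)-d_D\bigl((T_D^*)^2\bigr)$, one must bound $d_P\bigl((T_P^*)^2\bigr)+d_D\bigl((T_D^*)^2\bigr)$ \emph{below} by $2\,WOR(I)-OPT(I)$. Treating the two squared tours separately as TSP tours only yields $d_P\bigl((T_P^*)^2\bigr)+d_D\bigl((T_D^*)^2\bigr)\geq WOR(I_P)+WOR(I_D)$, and (\ref{eq-TSP-STSP-wor}) points the wrong way ($WOR(I_P)+WOR(I_D)\preceq WOR(I)$), so no bound in terms of $WOR(I)$ follows. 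The missing idea is a feasibility fact specific to the $\mathsf{DTSPMS}$: for any tour $T$ on $V$, the pair $(T,T^{2-})$ --- and hence also $(T^2,T^-)$ --- is a feasible pair of pickup and delivery tours, via the alternating loading plan that puts the odd-indexed vertices in row $1$ and the even-indexed ones in row $2$. Applying this to $T_P^*$ and $T_D^*$ and using symmetry of the distances gives $d_P\bigl((T_P^*)^2\bigr)+d_D(T_P^*)\geq WOR(I)$ and $d_P(T_D^*)+d_D\bigl((T_D^*)^2\bigr)\geq WOR(I)$, which, combined with $d_D(T_P^*)+d_P(T_D^*)\leq OPT(I)$ (feasibility of $(T_D^{*-},T_P^{*-})$), yields $\sum_{x}\bigl(\delta_P^x(T_P^*)+\delta_D^x(T_D^*)\bigr)\leq 3\,OPT(I)-2\,WOR(I)$; for the worst tour, Hamiltonicity of $T_*^2$ gives $d_\Sigma(T_*^2)\leq OPT(I_\Sigma)\leq OPT(I)$, hence $\sum_{x}\bigl(\delta_P^x(T_*)+\delta_D^x(T_*)\bigr)\geq 2\,WOR(I)-OPT(I)$. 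Subtracting the two bounds produces $4\bigl(OPT(I)-WOR(I)\bigr)$; the factor $4$ thus arises as $\bigl(3\,OPT-2\,WOR\bigr)-\bigl(2\,WOR-OPT\bigr)$, not from the product of the edge double-counting and the two indices $\alpha$ as you anticipated.
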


\begin{figure}[t]
\begin{center}
	\includegraphics{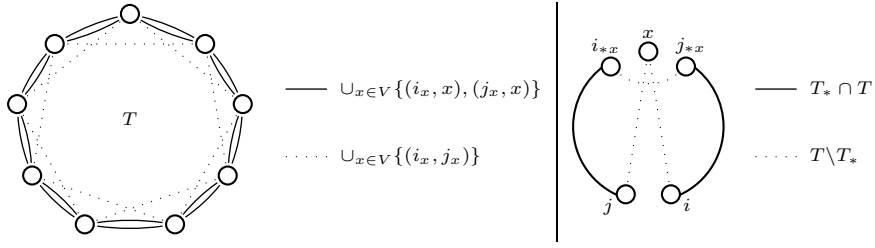}
	\caption{\label{fig-dapx-odd-lemmas}\small{On the left: illustration of inequality (\ref{eq-fact2-pv}) (proof of Lemma \ref{lem-odd-extr}). On the right: illustration of inequality (\ref{eq-fact3-pv2}) (proof of Theorem \ref{thm-dapx-odd})}}
\end{center}
\end{figure}

\begin{proof}
{\em Relation $(\ref{eq-fact1})$}. Given a vertex $x\in V$ and an index $\alpha\in\{P,D\}$, for the two nodes $i,j$ such that $(i,x),(x,j)\in T_\alpha^*$, $T_\alpha^*\backslash\{(i,x),(x,j)\}\cup\{(i,j)\}$ defines a tour on $V_x$. Since $M_\alpha^x$ is a maximal perfect matching on $V_x$, $d_\alpha(M_\alpha^x)$ is at least one half of the value of this tour. Thus we have $2 d_\alpha(M_\alpha^x)\geq d_\alpha(T_\alpha^*) -\delta_\alpha^x(T_\alpha^*), \alpha \in \{P, D\}$, while $\sum_{\alpha =P, D} d_\alpha(T_\alpha^*) =OPT(I)$.

{\em Relation $(\ref{eq-fact2})$}. 
Given a tour $T=(0,v_1,\ldots,v_n,0)$ on $V$, let $T^2$ be the set of arcs $(i,j)$ such that $i$ and $j$ are at distance $2$ in $T$, 
{\em i.e.}, 
$$\begin{array}{rl}
	T^2	&=\{(i,j)\ |\ \exists h,\ (i,h),(h,j)\in T\}
\end{array}$$
If $|V|$ is odd {\em iff} $n$ is even, then $T^2$ is the tour $(0,v_2,v_4,\ldots,v_n,1,3,\ldots,v_{n-1},0)$ on $V$. 
One one hand, for $\alpha\in\{P,D\}$ the quantities $\delta_\alpha^x(T), x\in V$ satisfy 
\begin{align}\label{eq-fact2-pv}
	\sum_{x\in V}\delta_\alpha^x(T) &=2d_\alpha(T)-d_\alpha(T^2)
\end{align}
(see Figure \ref{fig-dapx-odd-lemmas}). Hence, any tour $T$ on $V$ satisfies
$$\begin{array}{lll}
	\sum_{x\in V} \left(\delta^x_P(T) +\delta^x_D(T)\right)
		&=2d_\Sigma(T) -d_\Sigma(T^2)
		&\geq 2WOR(I) -OPT(I)
\end{array}$$
On the other hand, $(T,T^{2-})$ is a feasible pair of pickup and delivery tours: consider {\em e.g.} the loading plan $\left((v_1, v_3, \ldots, v_n),(v_2, v_4, \ldots, v_{n-1})\right)$.
Hence, any feasible pair $(T_P,T_D)$ of pickup and delivery tours satisfies:
$$\begin{array}{rl}
	\multicolumn{2}{l}{\sum_{x\in V}\left(\delta^x_P(T_P)+\delta^x_D(T_D)\right)}\\
	=			&2d_P(T_P) +2d_P(T_D) -d_P(T_P^2) -d_D(T_D^2)\\
	=			&2\left(d_P(T_P) +d_P(T_D)\right) -\left(d_P(T_P^2)+d_D(T_P)\right)	-\left(d_P(T_D)+d_D(T_D^2)\right)\\
				\multicolumn{2}{r}{+\left(d_D(T_P)+d_P(T_D)\right)}\\
	\preceq		&3OPT(I) -2WOR(I)
\end{array}$$
which concludes the lemma.
\end{proof}

\begin{algorithm}\label{algo-dapx-odd}
\caption{Appproximate loading plan for the $\mathsf{DTSPMS}$}
\KwIn{An instance $I=(n,2,\lceil n/2\rceil,d_P,d_D,\mathrm{opt})$ of the $\mathsf{DTSPMS}$ on vertex set $V$ such that $|V|$ is odd}
\KwOut{An approximate loading plan of $V\backslash\{0\}$}

\BlankLine
\For{$x\in V$}
{
	Compute a perfect matching $M^x_\alpha$ on $V\backslash\{x\}$ that is of optimum weight with respect to $d_\alpha$ and $\mathrm{opt}$\;

	\BlankLine
	\tcc{LOADING\_PLAN\_1, LOADING\_PLAN\_2, LOADING\_PLAN\_3, LOADING\_PLAN\_4 return a loading plan of $V\backslash\{0\}$ that admits a pickup tour that contains $M^x_P$ as well as a delivery tour that contains $M^x_D$}

	\If{$x\neq 0$}
	{
    	\tcc{The multi-edge set $M^x_P\cup M^x_D$ consists of $h_x +1$ cycles of even length on vertex set $W_s,s \in \{0 ,\ldots, h_x\}$. The depot vertex belongs to $W_0$.}

		\If{$|W_0|=2$} {
			$\mathcal{P}\longleftarrow$ LOADING\_PLAN\_2($V_x,x,M^x_P,M^x_D,d_P,d_D$)\;
		}
		\Else {
			$\mathcal{P}\longleftarrow$ LOADING\_PLAN\_3($V_x,x,M^x_P,M^x_D,d_P,d_D$)\;
		}
	}
	\Else {
    	\tcc{The multi-edge set $M^0_P\cup M^0_D$ consists of $h_0$ cycles of even length on vertex set $W_s,s \in \{1 ,\ldots, h_0\}$.}

    	\If{$x =0$ and $h_0 \geq 2$} {
    		$\mathcal{P}\longleftarrow$ loading\_plan\_4($V_0,0,M^0_P,M^0_D,d_P,d_D$)\;
    	}
    	\Else {
    		$\mathcal{P}\longleftarrow$ loading\_plan\_5($V_0,0,M^0_P,M^0_D,d_P,d_D$)\;
    	}
    }
}

\BlankLine
\Return $\arg\mathrm{opt}_{\{\mathcal{P}\,|\,x\in V\}} \left\{\sum_{\alpha=P,D} d_\alpha\left(T_\alpha(\mathcal{P})\right)\right\}$;
\end{algorithm}

We adapt the heuristic for the even case to the odd case. The adaptation mainly consists in computing a loading plan per vertex $x\in V$, instead of a single loading plan. 
The algorithm computes for any $x\in V$ a pair $(M^x_P,M^x_D)$ of optimal perfect matchings on $V_x$ and builds a loading plan $\mathcal{P}_x$ that is consistent with both $M^x_P$ and $M^x_D$. Let $APX_x=d_P(T_P^*(\mathcal{P}_x))+d_D(T_D^*(\mathcal{P}_x))$ denote the value of loading plan $\mathcal{P}_x$, $x\in V$; the algorithm then returns the loading plan among $\{\mathcal{P}_x\,|\,x\in V\}$ that achieves the best value $APX_x$. 
Hence, the value $APX$ of the solution returned by the algorithm clearly satisfies:
 %
\begin{align}
	APX \geq \frac{1}{|V|} \sum_{x\in V} APX_x
\label{eq-APX}\end{align}

Before providing the approximability result for the odd case, we need the following lemma. As the proof is long with multiple cases, it is given in a separated section.

\begin{lemma} \label{lem-Fx}
For any node $x\in V$, there exists a subset $F_x$ of $V_x\times V_x$ such that
\begin{enumerate}
	\item[(i)] For all $(i,j)\in F_x$, there exists a pair 
	    $(\mathcal{P}_x, M^x_P\cup N^x_P, M_D\cup N^x_D),
	        (\mathcal{P}_x', M^x_P\cup N'^x_P, M^x_D\cup N'^x_D)$ of feasible solutions on $I$ such that
			    $$\begin{array}{l}
			        \left(\left(N^x_P \cup N'^x_P\right) 
			            \backslash\left\{(x, i), (x, j)\right\} 
			            \cup\left\{(i, j)\right\},\right.\\
			    \left.\qquad\left(N^x_D \cup N'^x_D\right) 
			        \backslash\left\{(x, i), (x, j)\right\} 
			        \cup\left\{(i, j)\right\}
			    \right)
			 \end{array}$$
				is a feasible pair of pickup and delivery tours on $I$
	\item[(ii)] \label{it-Fx-T} $F_x$ intersects all Hamiltonian cycles on $V_x$
\end{enumerate}
\end{lemma}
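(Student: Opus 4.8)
The plan is to lift the splicing construction of Lemma~\ref{lem-dapx-even} from $V$ to $V_x$, treating $x$ as an extra vertex that is \emph{inserted} into a Hamiltonian cycle on $V_x$ built from the perfect matchings $M^x_P,M^x_D$. Since $|V_x|=n$ is even, the multiset $M^x_P\cup M^x_D$ decomposes into alternating even cycles $W_0,\dots,W_{h_x}$ spanning $V_x$ (a shared edge of the two matchings being a length-$2$ cycle), which is exactly the situation handled in Section~\ref{sec-apx}. First I would fix a balanced loading plan $\mathcal{P}_x$ of $V\setminus\{0\}$ consistent with both $M^x_P$ and $M^x_D$, obtained by cutting each cycle $W_s$ into two arcs distributed over the two rows, and let $\mathcal{P}'_x$ be its variant obtained by swapping the two top items, exactly as in Lemma~\ref{lem-dapx-even}. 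The role of the completions $A,A'(\mathcal{P}),A'(\mathcal{P}')$ is now played by completions $N^x_\alpha,N'^x_\alpha$ that additionally route through $x$: concretely, $x$ is spliced between the two ``connection'' vertices $i,j$ sitting at the tops of the two rows, the direct link $(i,j)$ of the even case being replaced by the two edges $(i,x),(x,j)$.

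With this set-up I would define $F_x$ as the set of pairs $(i,j)$ arising as these two connection vertices, as the construction ranges over its degrees of freedom, namely where each alternating cycle $W_s$ is cut and in which order the components are concatenated into the two rows. For a fixed $(i,j)\in F_x$, part~(i) then reduces to the even-case argument: $M^x_\alpha\cup N^x_\alpha$ and $M^x_\alpha\cup N'^x_\alpha$ are Hamiltonian cycles on $V$ consistent with $\mathcal{P}_x$ and $\mathcal{P}'_x$ respectively, which is Lemma~\ref{lem-dapx-even}(i)--(ii) with $x$ inserted between $i$ and $j$; and deleting the two edges $(x,i),(x,j)$ while adding back $(i,j)$ turns $N^x_\alpha\cup N'^x_\alpha$ into the analogue of the cycle $A\cup A'(\mathcal{P})$ of Lemma~\ref{lem-dapx-even}(iii), i.e. a Hamiltonian cycle on $V_x$ forming a feasible pickup/delivery pair. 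Verifying that $x$ can be inserted at the row tops without violating the LIFO rule, across the sub-cases according to whether $x=0$ and to the size of the depot component $W_0$ (the four branches of Algorithm~\ref{algo-dapx-odd}), is routine but constitutes the bulk of the bookkeeping.

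The delicate point is part~(ii): $F_x$ must meet every Hamiltonian cycle of $V_x$. The cleanest sufficient condition is that $K_{V_x}\setminus F_x$ be non-Hamiltonian, and I would obtain it by exhibiting a pivot vertex $w$ — for $x\neq 0$ the depot $0$, and for $x=0$ a fixed endpoint of one component played against the others — such that every pair $(w,v)$ is realizable as a connection pair, so that $F_x$ contains the whole star at $w$; any Hamiltonian cycle of $V_x$ uses exactly two edges incident to $w$ and hence meets $F_x$. The substance of this step is to show that the cutting and concatenation freedom is rich enough to place $w$ at a row top while the opposite row top is an arbitrary $v$, which is precisely where the perfectness of the matchings (so that $M^x_P\cup M^x_D$ spans $V_x$) is used; I expect to establish it by an explicit rotation of the cut inside the component containing $w$, combined with reordering the remaining components to force $v$ to the other row top. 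This realizability claim — and, should the full star prove unattainable in some degenerate configuration, the fallback of exhibiting a substructure in $F_x$ whose complement is disconnected — is the main obstacle of the proof.
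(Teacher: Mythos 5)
Your skeleton for part (i) matches the paper's: decompose $M^x_P\cup M^x_D$ into alternating even cycles, build loading plans consistent with both matchings, and splice $x$ into the completions. But your mechanism for part (ii) --- realizing the \emph{full star} at a pivot vertex $w$ so that any Hamiltonian cycle on $V_x$ must meet $F_x$ through one of its two edges at $w$ --- fails precisely in the cases that carry all the difficulty. The connection pairs $(i,j)$ that the splicing construction can realize are not free: they are dictated by which edges of the completions (the analogues of $A(\mathcal{P})$, $A'(\mathcal{P})$, and in the paper the new sets $B(\mathcal{Q})$, $B'(\mathcal{Q})$, $N^{\mathcal{P}}(i)$, $N^{\mathcal{Q}}(i)$) can absorb the insertion of $x$, and these edges inherit the parity structure of the alternating cycles. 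Concretely, in the hardest case $x=0$ and $h=1$ (a single alternating Hamiltonian cycle $v_1,\ldots,v_m$ on $V_0$), there are no ``other components'' to play your pivot against, and the realizable pairs are constrained by congruences: the paper's Claim \ref{claim-odd-NN} only obtains pairs with $|j-i|$ prime with $m$, or $i\not\equiv j \bmod 2$, or $i\not\equiv j\bmod 3$, under a case analysis on $m/2 \bmod 6$ that requires two distinct families of loading plans $\mathcal{P}(i)$ and $\mathcal{Q}(i)$ and completions built from distance-$3$ edges. No vertex has its full star realizable there, and your ``rotation of the cut'' freedom yields only these congruence-restricted pairs. Likewise, for $x\neq 0$ with $|W_0|\geq 4$ and for $x=0$ with $h\geq 2$, the paper does not use stars at all but complete cuts ($V(\Gamma)\times(V_x\backslash V(\Gamma))$, respectively $\cup_{s<t}W_s\times W_t$); note also that a star at $w$ restricted to cross-component edges, which is what ``played against the others'' suggests, does \emph{not} hit every Hamiltonian cycle, since a tour may traverse $w$ with both incident edges inside $w$'s own component.

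Your fallback --- exhibit a subset of $F_x$ whose complement graph is disconnected --- is in fact the paper's actual argument in every case beyond $|W_0|=2$ (cuts between $V(\Gamma)$ and its complement, between components, or between residue classes modulo $2$, $3$ or $6$), so it is not a contingency for degenerate configurations but the heart of the proof. What is missing from your proposal is the constructive content that makes this fallback work: the new loading-plan families and completions whose unions splice into Hamiltonian cycles of $V_x$ exactly on the cut pairs (Fact \ref{lem-dapx-odd} and Claim \ref{claim-odd-NN}), together with the verification, case by case on $m \bmod 6$, that enough cut pairs are realizable for the complement of $F_x$ to be disconnected. Without that, part (ii) is asserted rather than proved, and this is the bulk --- not the bookkeeping --- of the paper's argument.
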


\begin{theorem}\label{thm-dapx-odd}
	Algorithm \ref{algo-dapx-odd} is a $1/2$--differential approximation for the Symmetric 2DTSPMS, {\em i.e.} for any instance $I$ of the Symmetric $\mathsf{2\,DTSPMS}$ where $|V|$ is odd, Algorithm \ref{algo-dapx-odd} returns a solution with value $APX \geq  OPT(I)/2 +WOR(I)/2$.
\end{theorem}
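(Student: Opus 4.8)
The plan is to lower-bound the average value $\frac{1}{|V|}\sum_{x\in V}APX_x$, since by (\ref{eq-APX}) this average already lower-bounds $APX$. Assume $\mathrm{opt}=\max$ without loss of generality, so that $\succeq$ reads $\geq$ and $\preceq$ reads $\leq$. For each vertex $x$ I would produce a bound on $APX_x$ by invoking Lemma \ref{lem-Fx}(i) for a carefully chosen edge $(i_x,j_x)\in F_x$: it supplies one feasible solution consistent with $\mathcal{P}_x$, carrying $M^x_P,M^x_D$ and completions $N^x_P,N^x_D$, and one consistent with $\mathcal{P}_x'$, carrying completions $N'^x_P,N'^x_D$. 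Since the algorithm keeps, for each $x$, a best pair of tours relative to $\mathcal{P}_x$ and $\mathcal{P}_x'$, the value $APX_x$ dominates the average of the values of these two solutions, exactly as the even case of Theorem \ref{thm-dapx-even} averages the two loading plans $\mathcal{P},\mathcal{P}'$.

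Averaging the two solutions, the matching part contributes $d_P(M^x_P)+d_D(M^x_D)$ in full, while the completion part contributes $\tfrac12\big(d_P(N^x_P)+d_P(N'^x_P)+d_D(N^x_D)+d_D(N'^x_D)\big)$. The edge bookkeeping underlying Lemma \ref{lem-Fx} --- deleting from the union the two arcs incident to $x$ and inserting the shortcut $(i_x,j_x)$ --- rewrites this last sum as $\tfrac12\big(d_P(B^{(1)}_x)+d_D(B^{(2)}_x)\big)+\tfrac12\big(\delta^x_P(i_x,j_x)+\delta^x_D(i_x,j_x)\big)$, where $B_x=(B^{(1)}_x,B^{(2)}_x)$ is the bypass pair of Lemma \ref{lem-Fx}(i). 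Because $B_x$ is a feasible pair of pickup and delivery tours on $V$, its value $d_P(B^{(1)}_x)+d_D(B^{(2)}_x)$ is $\geq WOR(I)$. Combining this with the matching estimate (\ref{eq-fact1}) of Lemma \ref{lem-odd-extr} yields, for every $x$, the per-vertex inequality $APX_x\ \geq\ \tfrac12\big(OPT(I)+WOR(I)\big)+\tfrac12\big(\delta^x_P(i_x,j_x)+\delta^x_D(i_x,j_x)-\delta^x_P(T_P^*)-\delta^x_D(T_D^*)\big)$.

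It then remains to select the shortcut edges $(i_x,j_x)\in F_x$ so that the aggregated correction is non-negative, i.e. $\sum_{x\in V}\big(\delta^x_P(i_x,j_x)+\delta^x_D(i_x,j_x)\big)\ \geq\ \sum_{x\in V}\big(\delta^x_P(T_P^*)+\delta^x_D(T_D^*)\big)$; summing the per-vertex bound and dividing by $|V|$ would then deliver precisely $APX\geq OPT(I)/2+WOR(I)/2$. This is where property (ii) of Lemma \ref{lem-Fx} is essential: since $F_x$ meets every Hamiltonian cycle on $V_x$, I can force $(i_x,j_x)$ onto any prescribed tour, in particular onto the cycle $T_*^{(x)}$ obtained from the worst tour $T_*$ of $I_\Sigma$ by shortcutting $x$. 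The aggregate correction is then to be controlled through the distance-$2$ identity (\ref{eq-fact2-pv}), $\sum_{x}\delta^x_\alpha(T)=2d_\alpha(T)-d_\alpha(T^2)$, applied to the optimal tours and to $T_*$, together with the extremal estimate (\ref{eq-fact2}) comparing these two families of corrections. As in the even case, the construction splits according to whether the depot component satisfies $m_0=2$ or $m_0\geq 4$ (and the sub-case $x=0$ with $h_0\geq 2$ of Algorithm \ref{algo-dapx-odd}), which dictates which of the two completions plays the role of the $\mathcal{P}_x$- versus the $\mathcal{P}_x'$-tour.

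The hard part will be this last piece of bookkeeping, and it is the technical heart of the argument: the naive choice of $(i_x,j_x)$ as the pair of neighbours of $x$ in $T_*$ gives only $\delta^x_P(i_x,j_x)+\delta^x_D(i_x,j_x)=\delta^x_P(T_*)+\delta^x_D(T_*)$, after which (\ref{eq-fact2}) leaves a residual of order $(OPT(I)-WOR(I))/|V|$ rather than zero. Upgrading this to the exact inequality requires using the full strength of property (ii) --- selecting, among the edges of $F_x$ lying on $T_*^{(x)}$, shortcuts whose detours $\delta^x_P(i_x,j_x)+\delta^x_D(i_x,j_x)$ are large enough to absorb the optimal-tour corrections in aggregate --- and verifying through (\ref{eq-fact2-pv}) and (\ref{eq-fact2}) that the resulting sum of corrections is genuinely non-negative. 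I expect everything else (the reduction via (\ref{eq-APX}), the averaging of the two feasible solutions, and the insertion of (\ref{eq-fact1})) to be routine, so the entire difficulty concentrates in this exact cancellation of the $\delta$-corrections.
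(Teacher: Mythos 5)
Your skeleton is exactly the paper's: lower-bound $APX$ by the average of the $APX_x$ via (\ref{eq-APX}), bound each $APX_x$ by averaging the two feasible solutions supplied by Lemma \ref{lem-Fx}(i), use feasibility of the bypass pair to bring in $WOR(I)$, insert (\ref{eq-fact1}), and control the aggregated $\delta$-corrections through (\ref{eq-fact2}). But your proposal stalls on two points. First, the per-vertex selection is not closed. Your ``naive choice'' of the neighbours $(i_{*x},j_{*x})$ of $x$ in $T_*$ is not even available, since that pair need not belong to $F_x$, and Lemma \ref{lem-Fx}(i) applies only to edges of $F_x$. The paper closes this with an exchange argument you never state, inequality (\ref{eq-fact3-pv2}): because $T_*$ is a worst tour for $d_\Sigma$, reinserting $x$ on \emph{any} edge $(i,j)$ of the shortcut tour $T_{*x}$ yields a tour whose $d_\Sigma$-value dominates $d_\Sigma(T_*)$, whence $\delta^x_P(i,j)+\delta^x_D(i,j)\geq \delta^x_P(T_*)+\delta^x_D(T_*)$ for every $(i,j)\in T_{*x}$. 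Combined with property (ii) of Lemma \ref{lem-Fx} (some edge of $F_x$ lies on $T_{*x}$) and with the choice of $(i_x,j_x)$ as the $\delta$-maximizer over all of $F_x$, this gives $\delta^x_P(i_x,j_x)+\delta^x_D(i_x,j_x)\geq \delta^x_P(T_*)+\delta^x_D(T_*)$, i.e.\ (\ref{eq-fact3-CS1}) and hence (\ref{eq-fact3}). No finer per-$x$ selection ``onto a prescribed tour'' is needed, nor possible: property (ii) only asserts existence of an intersection, it does not let you place a shortcut of your choosing.

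Second, and more importantly, the ``exact cancellation of the $\delta$-corrections'' that you declare to be the technical heart does not exist, in the paper or elsewhere, and your own back-of-envelope computation showing a residual of order $(OPT(I)-WOR(I))/|V|$ is in fact the \emph{end} of the proof, not an obstacle. Summing (\ref{eq-fact3}) with (\ref{eq-fact1}) and applying (\ref{eq-fact2}) (whose slack $4(OPT(I)-WOR(I))$ is additive and cannot be removed by a cleverer choice of shortcuts), the paper's final display is $APX\geq (1/2-2/|V|)\,OPT(I)+(1/2+2/|V|)\,WOR(I)$, i.e.\ a differential ratio $1/2-2/|V|=1/2-o(1)$, which is precisely what Theorem \ref{thm-dapx-even} claims; the clean inequality $APX\geq OPT(I)/2+WOR(I)/2$ in the appendix statement overstates what its own proof establishes. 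So your proposal is the right argument left unfinished at exactly the point where it was already complete: supply (\ref{eq-fact3-pv2}), take the $\delta$-maximizing edge of $F_x$, and accept the $O(1/|V|)$ loss instead of chasing a cancellation that (\ref{eq-fact2}) cannot deliver.
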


\begin{proof}
We first show that for all $x\in V$,
\begin{align}\label{eq-fact3}
	APX_x	&\geq d_P(M^x_P) +d_D(M^x_D) +\frac{1}{2}\left(WOR(I) +\delta^x_P(T_*) +\delta^x_D(T_*)\right)
\end{align}

Then we successively deduce from relations (\ref{eq-APX}),(\ref{eq-fact3}),(\ref{eq-fact1}) and (\ref{eq-fact2}) that the approximate value satisfies
$$\begin{array}{rl}
	APX 	&\geq \frac{1}{|V|}\sum_{x\in V} APX_x\\
			&\geq \frac{1}{|V|}\sum_{x\in V} \left(d_P(M^x_P) +d_D(M^x_D) +\frac{1}{2}\left(WOR(I) +\delta^x_P(T_*) +\delta^x_D(T_*)\right)\right)\\
			&\geq \frac{1}{2|V|}\sum_{x\in V} \left(OPT(I) +WOR(I)  -\sum_{\alpha\in\{P,D\}}\left(\delta^x_\alpha(T_\alpha^*) -\delta^x_\alpha(T_*)\right)\right)\\
			&\geq \left(\frac{1}{2}-\frac{2}{|V|}\right) OPT(I) +\left(\frac{1}{2}+\frac{2}{|V|}\right)WOR(I)
\end{array}$$
which ends the proof of the $1/2$-differential ratio.\\

Now, let us prove relation (\ref{eq-fact3}). Consider the edge set $F_x$ of Lemma \ref{lem-Fx} and some edge $(i_x,j_x) \in F_x$ optimizing $\delta_P^x(i,j)+\delta_D^x (i,j)$ over this set. 
According to Lemma \ref{lem-Fx}, there exist two edge sets $N_x,N_x'$ such that $\{i_x,x\},\{x,j_x\}\in N_x\cup N'_x$ and $(\mathcal{P}_x,M^x_P\cup N_x,M^x_D\cup N_x)$, $(\mathcal{P}'_x,M^x_P\cup N'_x,M^x_D\cup N'_x)$ are feasible solutions on $I$. Since Algorithm \ref{algo-dapx-odd} considers both solutions $(\mathcal{P}_x,T_P^*(\mathcal{P}_x),T_D^*(\mathcal{P}_x))$ and $(\mathcal{P}'_x,T_P^*(\mathcal{P}'_x),T_D^*(\mathcal{P}'_x))$, we deduce: 
$$\begin{array}{rl}
	APX_x 	&\geq		\frac{1}{2} \left(2d_P(M^x_P) +2d_D(M^x_D) +d_\Sigma(N_x) +d_\Sigma(N'_x)\right)
\end{array}$$
Moreover, since Lemma \ref{lem-Fx} additionally indicates that $(N_x\cup N'_x)\backslash\{\{x,i_x\},\{x,j_x\}\} \cup\{\{i_x,j_x\}\}$ is a Hamiltonian cycle on $V$, we get that the value of this tour with respect to $d_\Sigma$ is better than $WOR(I)$. Hence,
\begin{align}\label{eq-fact3-pv1}
	APX_x 	&\geq	d_P(M^x_P) +d_D(M^x_D) +\frac{1}{2}WOR(I)+\frac{1}{2}\left(\delta_P(i_x,j_x)+\delta_D(i_x,j_x)\right)
\end{align}

Let $i_{*x}$ and $j_{*x}$ respectively denote the predecessor and the successor of $x$ in $T_*$ and let $T_{*x}$ refer to the tour $T_*\backslash \{(i_{*x},x),(x,j_{*x})\}\cup\{(i_{*x},j_{*x})\}$ on $V_x$. 
On the one hand, since $T_*$ is a tour on $V$ of worst value with respect to $d_\Sigma$ and the optimization goal, $d_\Sigma(T) \geq d_\Sigma(T_*)$ for all tour $T$ on $V$ obtained from $T_*$ by first removing edges $(i_{*x},x),(x,j_{*x})$ as well as some other edge $(i,j)\in T_*$, and then adding the edges $(i,x),(x,j),(i_{*x},j_{*x})$
(see Figure \ref{fig-dapx-odd-lemmas}). Equivalently,
\begin{align}  \label{eq-fact3-pv2}
	\delta^x_P(i,j) +\delta^x_D(i,j) &\geq \delta^x_P(i_{*x},j_{*x}) +\delta^x_D(i_{*x},j_{*x}),	&(i,j)\in T_{*x}
\end{align}
On the other hand, since by (\ref{it-Fx-T}) $F_x$ intersects any Hamiltonian cycle on $V_x$, the tour $T_{*x}$ on $V_x$ intersects $F_x$ 
on some arc $(i,j)$. The optimality of $(i_x,j_x)$ over $F_x$ ensures that $(i,j)$ and $(i_x,j_x)$ satisfy:
\begin{equation}  \label{eq-fact3-pv3}
	\delta^x_P(i_x,j_x) +\delta^x_D(i_x,j_x) \geq \delta^x_P(i,j) +\delta^x_D(i,j)
\end{equation}
We deduce from the two previous inequalities that edge $(i_x,j_x)$ satisfies
\begin{equation}\label{eq-fact3-CS1}
	\delta^x_P(i_x,j_x) +\delta^x_D(i_x,j_x) \geq \delta^x_P(T_*) +\delta^x_D(T_*)
\end{equation}
Together with inequality (\ref{eq-fact3-pv1}), we obtain expression (\ref{eq-fact3}).
\end{proof}

\subsection{Proof of Lemma \ref{lem-Fx}}

In what follows, given $x\in V$, one considers two perfect matchings $M^x_P,M^x_D$ on $V_x$ and the connected components $W_0, W_1,\ldots,W_h$ of the perfect $2$--matching $M^x_P\cup M^x_D$ on $V_x$, where $W_0$ refers to the component that contains the depot vertex provided that $x\neq 0$. Each component $W_s$ induces on $(V_x,M^x_P\cup M^x_D)$ a cycle of even length. 
We describe these cycles by $\{0,v_{0,1},\ldots,v_{0,2m_0+1},0\}$ if $x\neq 0$ and $s=0$, by $\{v_{s,1},\ldots,v_{s,2m_s},v_{s,1}\}$ otherwise.

\subsubsection{Case $x\neq 0$ and $|W_0| =2$}

In this case, $W_0$ induces the cycle $\{0, v_{0, 1}, 0\}$, and $h\geq 1$. 
Since any tour on $V_x$ links vertex 0 to some vertex in $V_x\backslash \{0, v_{0, 1}\}$, we define $F_x$ as
$$\begin{array}{rl}
	F_x	&=\{0\} \times V_x\backslash \{0, v_{0, 1}\}
\end{array}$$
and pick some edge $e_x\in F_x$ that maximizes $\delta^x_P(e)+\delta^x_D(e)$. We assume {\em w.l.o.g.} that $e_x$ is the edge $(0, v_{h, m_h})$. We define $\mathcal{P}_x,N^x_P,N^x_D$ and $\mathcal{P}'_x, N'^x_P, N'^x_D$ as follows:
\begin{itemize}
    \item let $\mathcal{P}_x=\mathcal{P}'_x$ be the loading plan obtained from $\mathcal{P}(M^x_P,M^x_D)$ by loading $x$ at the end of row $2$;
	\item define $N^x_P =N^x_D$ as the edge set obtained from $A(\mathcal{P})$ by substituing for the edge $(v_{h,m_h}, 0)$ the chain $(v_{h,m_h}, x, 0)$;
	\item define $N'^x_P =N'^x_D$ as the edge set obtained from $A'(\mathcal{P})$ by substituing for the edge $(v_{h,m_h+1},0)$ the chain $(v_{h,m_h+1},x,0)$.
\end{itemize}
The fact that $\mathcal{P}_x,N^x_P,N^x_D$, $\mathcal{P}'_x, N'^x_P, N'^x_D$ and $e_x$ satisfy condition (i) of Lemma \ref{lem-Fx} is straightforward from Lemma \ref{lem-dapx-even}. 

\subsubsection{Case $x\neq 0$ and $|W_0|\geq 4$}

In this case, $W_0$ induces on $M^x_P \cup M^x_D$ the cycle $(0,v_{0,1},\ldots,v_{0,2m_0+1},0)$ where $2m_0+1\geq 3$. 
We introduce a new family $\mathcal{Q} =\mathcal{Q}(M^x_P, M^x_D)$ of loading plans given two matchings $M^x_P, M^x_D$ on $V_x$ for some $x\neq 0$, as well as new families $B(\mathcal{Q})$ and $B'(\mathcal{Q})$ of matchings given a loading plan $\mathcal{Q}$. 

$\mathcal{Q}(M^x_P, M^x_D)$ is obtained from $\mathcal{P}(M^x_P,M^x_D)$ by exchanging for each even position $p$ in $\{1 ,\ldots, m_0\}$ the vertex loaded at position $p$ in row 1 with the vertex loaded at position $p$ in row 2. Observe that vertices of $V_x\backslash W_0$ are loaded in $\mathcal{Q}(M^x_P, M^x_D)$ just as the same as in $\mathcal{P}(M^x_P, M^x_D)$.

We describe rows 1 and 2 of $\mathcal{Q}$ by respectively $(i_1,\ldots,i_{\nu +1})$ and $(j_1,\ldots,j_\nu)$. Furthermore, we introduce the cycle $\Gamma =(i_1, i_2 ,\ldots, i_{m_0}, j_{m_0}, j_{m_0 -1} ,\ldots, j_1, i_1)$. We then build two perfect matchings $B(\mathcal{Q})$ and $B'(\mathcal{Q})$ on $V_x$ as follows:
\begin{itemize}
    \item half of the edges of $\Gamma$, including $(i_1,j_1)$, into $B(\mathcal{Q})$, and the other half into $B'(\mathcal{Q})$;
    \item add into $B(\mathcal{Q})$ edges $(i_p, j_p), p\in\{m_0 +1 ,\ldots, (n -3)/2\}$ and $(i_{(n -1)/2}, 0)$;
    \item add into $B'(\mathcal{Q})$ edges $(i_{p +2}, j_p), p\in\{m_0 +1 ,\ldots, (n -5)/2\}$, $(i_{m_0}, i_{m_0 +1})$ and $(j_{(n -3)/2}, 0)$.
\end{itemize}

Using similar arguments as in Lemma \ref{lem-dapx-even}, it is nit too hard to see that the following facts hold:
\begin{fact}\label{lem-dapx-odd}
\begin{enumerate} 
	\item[(i)] $M^x_P \cup B(\mathcal{Q})$ and $M^x_P \cup B'(\mathcal{Q})$ are feasible pickup tours with respect to $\mathcal{Q}$ on $V_x$; 
	\item[(ii)] $M^x_D \cup B(\mathcal{Q})$ and $M^x_D \cup B'(\mathcal{Q})$ are feasible delivery tours with respect to $\mathcal{Q}$ on $V_x$; 
	\item[(iii)] $B(\mathcal{Q})\cup B'(\mathcal{Q})$ is the union of $\Gamma$ and some other cycle $\Gamma'$ over $V_x\backslash W_0 \cup \{0, v_{0, 2m_0 +1}\}$.
\end{enumerate} 
\end{fact}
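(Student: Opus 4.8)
The plan is to establish each of the three items by the same reasoning used for Lemma~\ref{lem-dapx-even}, with the cycle $\Gamma$ playing here the role that the ``vertical'' and ``diagonal'' completions $A,A'(\mathcal{P})$ played there. Throughout I would exploit that every object in sight is the union of two perfect matchings on $V_x$, hence automatically $2$--regular; so for parts (i) and (ii) the only structural thing to prove is that each union $M^x_\alpha\cup B(\mathcal{Q})$ and $M^x_\alpha\cup B'(\mathcal{Q})$ is \emph{connected} (a single Hamiltonian cycle rather than a disjoint union of several), and for part (iii) that $B(\mathcal{Q})\cup B'(\mathcal{Q})$ splits into \emph{exactly} two cycles.

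For (i) and (ii) I would separate connectivity from consistency. Consistency with $\mathcal{Q}$ is the easy half: a pickup (resp.\ delivery) tour is consistent with $\mathcal{Q}$ iff it visits each row in increasing (resp.\ decreasing) order of position, and this is forced because every completion edge is either an intra-row edge joining two \emph{consecutive} positions (the edges of $\Gamma$ internal to a row and the seam edge $(i_{m_0},i_{m_0+1})$) or a cross-row edge, which imposes no intra-row order; combined with the fact that, by construction of $\mathcal{Q}$, each matching $M^x_\alpha$ links a vertex at some position to a vertex at an adjacent position, one reads off that the tour sweeps the positions monotonically within each row. Connectivity is the substantive half, and here the even-position swap defining $\mathcal{Q}$ is exactly the device that prevents a short sub-cycle from pinching off inside the block $W_0$: after the swap the restrictions of $M^x_P$ and of $M^x_D$ to the first $m_0$ positions become two interleaved staircases, and the half of $\Gamma$ routed into $B(\mathcal{Q})$ (resp.\ $B'(\mathcal{Q})$) threads these staircases into the single strand that runs through the remaining positions and the depot. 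I would verify this by walking the cycle from the depot edge, stepping one position along $M^x_\alpha$ and across along $B(\mathcal{Q})$ (resp.\ $B'(\mathcal{Q})$), mirroring Figures~\ref{fig-dapx-N+P}--\ref{fig-dapx-N+Np}.

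For part (iii) I would argue that $B(\mathcal{Q})\cup B'(\mathcal{Q})$, being a union of two perfect matchings, is a disjoint union of even cycles, and then identify them. The edges of $\Gamma$ were distributed between $B(\mathcal{Q})$ and $B'(\mathcal{Q})$ as two complementary matchings of $\Gamma$, so their reunion is exactly $\Gamma$ and accounts for the $2m_0$ vertices of $W_0$ stored in the first $m_0$ positions. The leftover edges --- the vertical rungs kept in $B(\mathcal{Q})$, the shifted rungs and the seam edge in $B'(\mathcal{Q})$, and the two depot edges --- alternate between the two matchings along the complementary vertex set $(V_x\setminus W_0)\cup\{0,v_{0,2m_0+1}\}$, and following that alternation one checks they close into a single cycle $\Gamma'$, just as $A\cup A'(\mathcal{P})$ formed one cycle in the even case.

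The step I expect to be the real obstacle is the connectivity bookkeeping, both in (i)/(ii) and in the ``exactly two cycles'' claim of (iii): one has to make sure nothing closes prematurely at the seam between the $\Gamma$--block and the rest --- that is, around positions $m_0$ and $m_0+1$, the seam edge, the left-over vertex $v_{0,2m_0+1}$, and the depot --- and this must be confirmed across the parity of $m_0$ and across the two orientations in which a matching $M^x_\alpha$ can enter $W_0$ (whether $(0,i_1)\in M^x_\alpha$ or $(0,j_1)\in M^x_\alpha$). Pinning down the precise half of $\Gamma$ assigned to each of $B(\mathcal{Q}),B'(\mathcal{Q})$ and the exact index shifts of the diagonal rungs so that all of these case distinctions cooperate is the only place where the argument departs from a verbatim copy of Lemma~\ref{lem-dapx-even}, and it is where I would spend the bulk of the effort.
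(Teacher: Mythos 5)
Your proposal follows exactly the route the paper itself takes: the paper offers no written proof of Fact~\ref{lem-dapx-odd}, asserting only that it holds ``using similar arguments as in Lemma~\ref{lem-dapx-even}'' and referring the reader to Figure~\ref{fig-dapx-Q+B}. Your sketch --- reducing (i)/(ii) to $2$--regularity plus connectivity plus row-monotonicity of the alternating matching/completion walk, crediting the even-position swap (the interleaved staircases) for preventing premature sub-cycles, and for (iii) splitting $B(\mathcal{Q})\cup B'(\mathcal{Q})$ into $\Gamma$ and a single leftover cycle $\Gamma'$, with the parity of $m_0$ and the two ways $M^x_\alpha$ can attach to the depot as the cases to check --- is precisely that adaptation, spelled out in more detail than the paper provides.
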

We omit the proof, but invite the reader to refer to Figure \ref{fig-dapx-Q+B}.

\begin{figure}[t]
\begin{center}
	\includegraphics{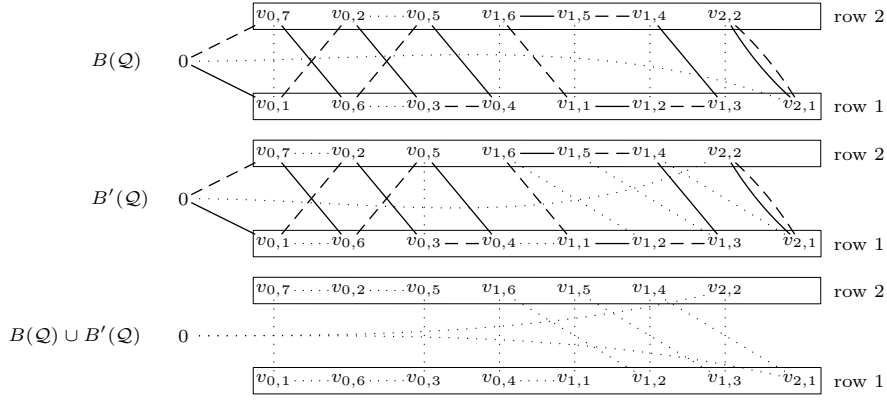}
	\caption{\label{fig-dapx-Q+B}\small{Loading plan $\mathcal{Q}=\mathcal{Q}(M^x_P,M^x_D)$ of $V_x\backslash\{0\}$ and completions $B(\mathcal{Q}),B'(\mathcal{Q})$ (in dotted lines) given two perfect matchings $M^x_P$ (in plain lines) and $M^x_D$ (in dashed lines) on $V_x$}}
\end{center}
\end{figure}

Since any tout on $V_x$ connects a vertex of $V(\Gamma)$ to a vertex of $V_x\backslash V(\Gamma)$, we define $F_x$ as 
$$\begin{array}{rl}
	F_x	&=V(\Gamma) \times \left(V_x\backslash V(\Gamma)\right)
\end{array}$$
and consider an edge $e_x$ that maximizes $\delta^x_P(e)+\delta^x_D(e)$ over $F_x$. By construction, $e_x$ is incident to some vertex $v_{0, j}\in \{v_{0, 1} ,\ldots, v_{0, 2m_0}\}$. We assume {\em w.l.o.g.} $j \leq m_0$. 

We build a first loading plan $\mathcal{P}_x$ and matchings $N^x_P$ and $N^x_D$ as follows:
\begin{itemize}
    \item set $\mathcal{P}_x=\mathcal{Q}(M^x_P,M^x_D)$;
    \item insert $x$ in row 2 at position $p =2$ if $j=1$, at position $p =j$ otherwise;
    \item define $N^x_P =N^x_D$ as the edge set obtained from $B(\mathcal{Q})$ if $j$ is odd, from $B'(\mathcal{Q})$ otherwise, by substituing for the edge $(j_p, j_{p +1})$ the chain $(j_p, x, j_{p +1})$.
\end{itemize}

We build a second loading plan $\mathcal{P}'_x$ and matchings $N'^x_P$ and $N'^x_D$ as follows, depending on $e_x$. Starting with $\mathcal{P}'_x=\mathcal{Q}(M^x_P,M^x_D)$, if $e_x =(v_{0, j}, 0)$, then:
\begin{itemize}
	\item insert $x$ at position $n/2$ in row $2$;
    \item if $p$ is odd, then define $N'^x_P =N'^x_D$ as the edge set obtained from $B'(\mathcal{Q})$ by substituing for the edge $(i_{n/2},0)$ the chain $(i_{n/2},x,0)$; 
	    otherwise, $N'^x_P$ and $N'^x_D$ are obtained from $B(\mathcal{Q})$ by substituing for the edge $(j_{n/2-1},0)$ the chain $(j_{n/2-1},x,0)$.
\end{itemize}
If $e_x=(v_{0,j},v_{0,m_0+1})$, then:
\begin{itemize}
	\item insert $x$ at position $m_0+1$ in row $2$;
    \item if $p$ is odd, then define $N'^x_P =N'^x_D$ as the edge set obtained from $B'(\mathcal{Q})$ by substituing for the edge $(v_{0,m_0+1},x,v_{1,1})$ the chain $(v_{0,m_0+1},x,v_{1,1})$; 
    otherwise, $N'^x_P$ and $N'^x_D$ are obtained from $B(\mathcal{Q})$ by substituing for the edge $(v_{0,m_0+1},x,v_{1,2m_1})$ the chain $(v_{0,m_0+1},x,v_{1,2m_1})$.
\end{itemize}
It remains us to consider the case when $e_x$ is incident to a vertex in $V_x\backslash W_0$. We assume {\em w.l.o.g.} that $e_x =(v_{0,j},v_{h, m_h})$, then:
\begin{itemize}
	\item insert $x$ at position $n/2$ in row $2$;
	\item if $p$ is odd, then define $N'^x_P =N'^x_D$ as the edge set obtained from $B'(\mathcal{Q})$ by substituing for the edge $(i_{n/2},0)$ the chain $(i_{n/2},x,0)$;
	    otherwise, define $N'^x_P =N'^x_D$ as the edge set obtained from $B(\mathcal{Q})$ by substituing for the edge $(j_{n/2-1},0)$ the chain $(j_{n/2-1},x,0)$.
\end{itemize}

The fact that $\mathcal{P}_x,\mathcal{P}'_x,e_x$ and the considered matchings satisfy condition (i) of Lemma \ref{lem-Fx} is straightforward from Fact  \ref{lem-dapx-odd}.

\begin{figure}[t]
\begin{center}
	\includegraphics{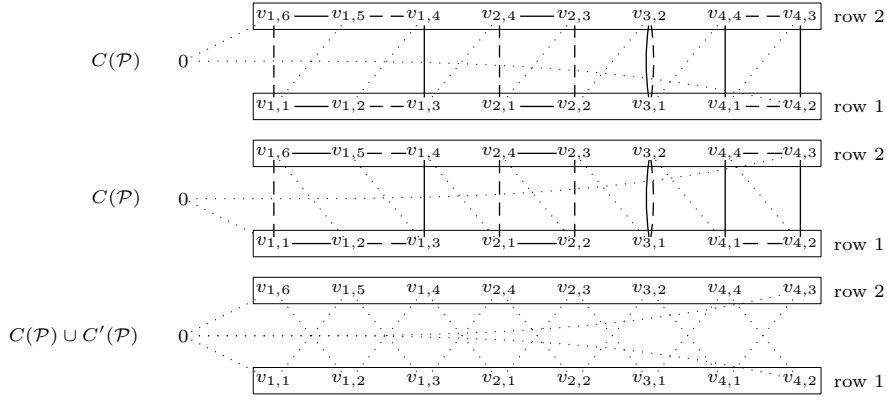}
	\caption{\label{fig-dapx-odd0-P+N}\small{Loading plan $\mathcal{P}=\mathcal{P}(M^0_P,M^0_D)$ of $V\backslash\{0\}$ and completions $C(\mathcal{P}),C'(\mathcal{P})$ (in dotted lines) given two perfect matchings $M^0_P$ (in plain lines) and $M^0_D$ (in dashed lines) on $V_0$}}
\end{center}
\end{figure}

\subsubsection{Case $x=0$ and $h \geq 2$}
Likewise the previous case when $m_0 =0$, we consider for $F_0$ the edge set 
$$\begin{array}{rl}
	F_0	&=\cup_{1\leq s<t\leq h} W_s\times W_t
\end{array}$$
Let $e_0$ be an edge in $F_0$ that maximizes $\delta^0_P(e)+\delta^0_D(e)$. 
We may assume {\em w.l.o.g.} that $e_0$ is the edge $(v_{1,1},v_{h,m_h+1})$ if $|V_0|/2$  is odd, and $(v_{1,1},v_{h,m_h})$ otherwise.

We consider the approximate loading plan $\mathcal{P}_0 =\mathcal{P}(M^0_P,M^0_D)$  Algorithm \ref{algo-apx} returns on $V_0$. Furthermore, similarly to completions $A(\mathcal{P})$ and $A'(\mathcal{P})$, we associate with a loading plan $\mathcal{P}=\left((i_1,\ldots,i_{(n -1)/2}), (j_1,\ldots,j_{(n -1)/2})\right)$ of $V_0$ the two perfect matchings $C(\mathcal{P})$ and $C(\mathcal{P})$ on $V_0$ defined by:
$$\begin{array}{rl}
	C(\mathcal{P})	&=\{(i_p, j_{p +1})\,|\,p =1 ,\ldots, (n -3)/2\} \cup\{(i_{(n-1)/2}, 0)\}\\
	C'(\mathcal{P})	&=\{(j_p, i_{p +1})\,|\,p =1 ,\ldots, (n -3)/2\} \cup\{(j_{(n-1)/2}, 0)\}
\end{array}$$
Observe that the edge sets $C(\mathcal{P}_0) \cup M_P$, $C'(\mathcal{P}_0) \cup M_P$, $C(\mathcal{P}_0) \cup M_D$ and $C'(\mathcal{P}_0) \cup M_D$ all induce on $V$ feasible pickup and delivery tours with respect to $\mathcal{P}$. 
Moreover, if $(n -1)/2$ is odd, then $C(\mathcal{P}_0)\cup C'(\mathcal{P}_0)$ induces on $V_0$ the Hamiltonian cycle
$$\begin{array}{l}
	(i_1,j_2,i_3,j_4,\ldots,i_{(n -1)/2},j_1,i_2,j_3,i_4,\ldots,j_{(n -1)/2},i_1)
\end{array}$$ 
Otherwise, $C(\mathcal{P}_0)\cup C'(\mathcal{P}_0)$ is the union of the two cycles 
$$\begin{array}{ll}
					&(i_1,j_2,i_3,j_4,\ldots,i_{(n -1)/2},j_{(n -1)/2},i_1)\\
	\textrm{and}	&(j_1,i_2,j_3,i_4,\ldots,j_{(n -1)/2},j_{(n -1)/2},j_1).
\end{array}$$
We deduce that $\mathcal{P}_0$, $C(\mathcal{P})$, $C'(\mathcal{P})$ and $e_0$ satisfy (see Figure \ref{fig-dapx-odd0-P+N} for some illustration):
\begin{itemize}
	\item $(\mathcal{P}_0,M^0_P\cup C(\mathcal{P}_0),M^0_D\cup C(\mathcal{P}_0))$ and $(\mathcal{P}_0,M^0_P\cup C'(\mathcal{P}_0),M^0_D\cup C'(\mathcal{P}_0))$ are feasible solutions on $I$;
	\item $C(\mathcal{P}_0)$, $C'(\mathcal{P}_0)$ and $e_0$ satisfy condition (i) of Lemma \ref{lem-Fx}.
\end{itemize}

\begin{figure}[t]
\begin{center}
	\includegraphics{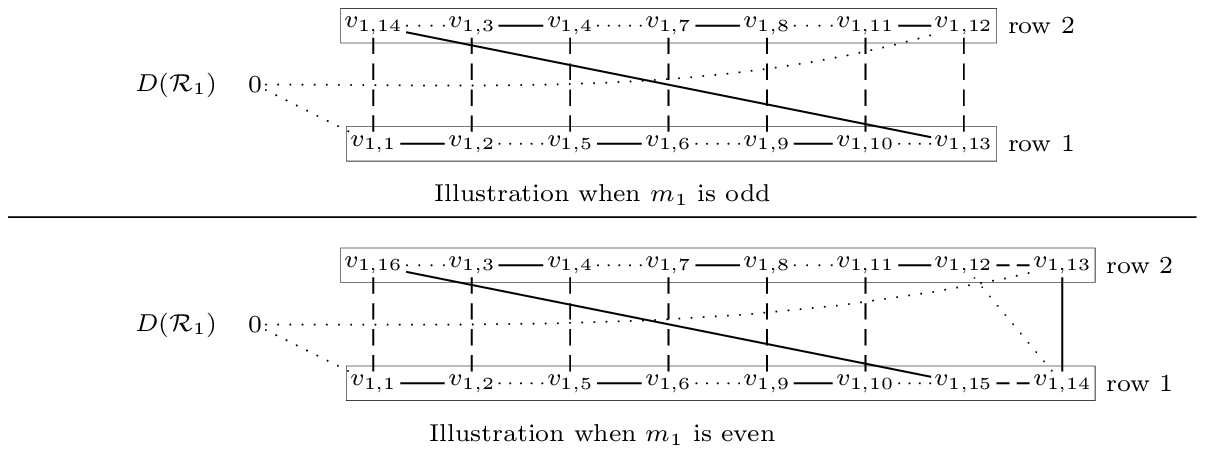}
	\caption{\label{fig-dapx-odd0-P+N-h1}\small{Alternate loading plan $\mathcal{R}_i=\mathcal{R}_i(M^0_P,M^0_D)$ of $V\backslash\{0\}$ an completion $D(\mathcal{R}_i)$ (in dotted lines) given two perfect matchings $M^0_P,M^0_D$ on $V_0$ such that $M^0_P\cup M^0_D$ is a Hamiltonian cycle on $V_0$}}
\end{center}
\end{figure}

\subsubsection{Case $x=0$ and $h =1$}
 %
Let $\{v_1,\ldots,v_m\}$ denote the Hamiltonian cycle $M^0_P\cup M^0_D$ on $V_0$. We consider two families $\mathcal{P}(i),i\in[m]$ and $\mathcal{Q}(i),i\in[m]$ of loading plans on $V_0$, together with their associated completions $N^{\mathcal{P}}_i$ and $N^{\mathcal{Q}}_i$, $i\in[m]$. 
 %
The first family consists of the basic loading plan on $V_0$ and its associated completion $N_2$, but fixing vertex $i$ on coordinates $(1,1)$; namely (indexes are taken $\bmod$ $m$):
\begin{align}
	\mathcal{P}(i)		&=	\left((v_i,\ldots,v_{i+m/2-1}),(v_{i-1},\ldots,v_{i-m/2})\right),					&i\in[m]\label{eq-odd-Pi}\\
	N^{\mathcal{P}}(i)	&=	\cup_{r=1}^{c-2}\left\{\{v_{i-r},v_{i+r}\}\right\}\cup\left\{\{v_{i+m/2},v_i\}\right\},	&i\in[m]\label{eq-odd-NPi}
\end{align}
 %
The loading plans $\mathcal{Q}(i)$ of the second family basically consists, starting with vertex $v_i$, in loading two consecutive vertices of the cycle $\{v_1,\ldots,v_{m},v_1\}$ into alternatively row $1$ and row $2$. Precisely, given $i\in[m]$, let $E_3(i)$ denote the following perfect matching on $V_0$:
\begin{align}
	E_3(i) &=\{\{v_j,v_{j+3}\}\ |\ 1\leq j \leq n-1,\ j\bmod 2\neq i\bmod 2\}\label{eq-E3}
\end{align}
Then, if $m/2$ is odd ({\em iff} $m\bmod 4=2$), the loading plan $\mathcal{Q}(i)$ and its associated completion are defined as (indexes are taken $\bmod$ $m$):
\begin{align}
	\mathcal{Q}(i)		&=
		\left\{\begin{array}{lcl}
			(v_i,\hspace*{0.35cm}	v_{i+1}, v_{i+4}, v_{i+5}	&,\ldots,	&\hspace*{0.8cm}	v_{i-6},v_{i-5},v_{i-2}),\\
			(v_{i-1},				v_{i+2}, v_{i+3},			&,\ldots,	&v_{i-8},			v_{i-7},v_{i-4},v_{i-3})
		\end{array}\right\}\label{eq-odd-Qi-odd}\\
	N^{\mathcal{Q}}(i)	&=	E_3(i)\label{eq-odd-NQi-odd}
\end{align}
Otherwise (thus $m\bmod 4=0$), $\mathcal{Q}(i)$ and $N^{\mathcal{Q}}(i)$ are defined as:
\begin{align}
	\mathcal{Q}(i)		&=
		\left\{\begin{array}{l}
			(v_i,\hspace*{0.35cm}	v_{i+1}, v_{i+4},v_{i+5}			\ ,\ldots, \hspace*{1.06cm}	v_{i-8},v_{i-7},v_{i-2},v_{i-3}),\\
			(v_{i-1},				v_{i+2}, v_{i+3},\hspace*{0.76cm}	,\ldots,\ v_{i-10},			v_{i-9},v_{i-6},v_{i-5},v_{i-4})
		\end{array}\right\}\label{eq-odd-Qi-even}\\
	N^{\mathcal{Q}}(i)	&=E_3(i)	\backslash \left\{ \{v_{i-3},v_i\}, \{v_{i-5},v_{i-2}\}, \{v_{i-7},v_{i-4}\} \right\}\label{eq-odd-NQi-even}\\
							&\hspace*{1.1cm}	\cup \left\{ \{v_{i-5},v_{i-3}\}, \{v_{i-7},v_{i-2}\}, \{v_{i-4},v_i\} \right\}\nonumber
\end{align}
The loading plans $\mathcal{Q}(i)$ and their completion $N^{\mathcal{Q}}(i)$ are depicted in Figures \ref{fig-dapx-odd0-P+N-h1}.

\medskip
We know from the previous analysis that the triple $(\mathcal{P}(i),M^0_P\cup N^{\mathcal{P}}(i),M^0_D\cup N^{\mathcal{P}}(i))$ is a feasible solution for the $\mathsf{DTSPMS_0}$ on $V_0$, $i\in[m]$.
Furthermore, similar arguments enable to establish that $(\mathcal{Q}(i),M^0_P\cup N^{\mathcal{Q}}(i),M^0_D\cup N^{\mathcal{Q}}(i))$ also is a feasible solution of the $\mathsf{DTSPMS_0}$ on $V_0$, $i\in[m]$.
As a consequence, if we define, given $i\in[m]$, $\mathcal{P}_0(i),\mathcal{Q}_0(i),N_0^{\mathcal{P}}(i),N_0^{\mathcal{Q}}(i)$ as
\begin{itemize}
	\item $\mathcal{P}_0(i)=\mathcal{P}(i)$ and $\mathcal{Q}_0(i)=\mathcal{Q}(i)$ if $x=0$, $\mathcal{P}_0(i)$ and $\mathcal{Q}_0(i)$ are obtained from respectively $\mathcal{P}(i)$ and $\mathcal{Q}(i)$ by inserting $x$ at rank $c$ in row $1$ otherwise;
	\item $N^{\mathcal{P}}_0(i)$ is obtained from $N^{\mathcal{P}}(i)$ by inserting $x$ on the edge $\{v_{i+m/2},v_i\}$, $N^{\mathcal{Q}}_0(i)$ is obtained from $N^{\mathcal{Q}}(i)$ by inserting $x$ on the edge $\{v_{i-3},v_i\}$ if $m/2$ is odd, $\{v_{i-4},v_i\}$ otherwise;
\end{itemize}
then we deduce from the feasibility of the solutions that are considered on $V_0$ that $(\mathcal{P}_0(i),M^0_P\cup N^{\mathcal{P}}_0(i),M^0_D\cup N^{\mathcal{P}}_0(i))$ and $(\mathcal{Q}_0(i),M^0_P\cup N^{\mathcal{Q}}_0(i),M^0_D\cup N^{\mathcal{Q}}_0(i))$ are feasible solutions on $I$, $i\in[m]$.
We thus are interested in triples $(\mathcal{P}_0(i),\mathcal{P}_0(j),(i',j'))$, $(\mathcal{P}_0(i),\mathcal{Q}_0(j),(i',j'))$ or $(\mathcal{Q}_0(i),\mathcal{Q}_0(j),(i',j'))$ that satisfy condition (i) of Lemma \ref{lem-Fx}.
We establish:
\begin{claim}\label{claim-odd-NN}
Let $i\neq j$ in $[m]$. In any of the following cases, $N,N'$ and $\{i,j\}$ satisfy condition (i) of Lemma \ref{lem-Fx}:
$$\begin{array}{c|l|cc}
			&\multicolumn{1}{|c|}{m,\ i,j}								&N							&N'\\\hline
	(1) 	&|j-i|\textrm{ is prime with }m								&N_0^{\mathcal{P}}(i)		&N_0^{\mathcal{P}}(j)\\\hline
	(2) 	&m/2\equiv 3\bmod 6\textrm{ and }i\not\equiv j\bmod 3		&N_0^{\mathcal{P}}(i)		&N_0^{\mathcal{Q}}(j)\\\hline
	(3) 	&m/2\bmod 6\in\{1,5\}\textrm{ and }i\not\equiv j\bmod 2		&N_0^{\mathcal{Q}}(i)		&N_0^{\mathcal{Q}}(j)\\
	(4) 	&m/2\equiv 0\bmod 6\textrm{ and }i\not\equiv j\bmod 2\\
	(5) 	&m/2\equiv 4\bmod 6, i\not\equiv j\bmod 2\\
			&\textrm{and }(j-i)\not\equiv\pm 1 \bmod m\\
	(6) 	&m/2\equiv 2\bmod 6\textrm{ and }(j-i)\equiv \pm 1\bmod m\\\hline
	(7) 	&m/2\equiv 2\bmod 6\textrm{ and }i\not\equiv j\bmod 2		&N_0^{\mathcal{Q}}(i+4)		&N_0^{\mathcal{Q}}(j)\\
			&\textrm{and }((j-i)\bmod m)\bmod 6\in\{1,5\}\\\hline
	(8) 	&m/2\equiv 2\bmod 6\textrm{ and }i\not\equiv j\bmod 2		&N_0^{\mathcal{Q}}(i)		&N_0^{\mathcal{Q}}(j+4)\\
			&\textrm{and }((j-i)\bmod m)\bmod 6\in\{3,5\}\\\hline
\end{array}$$
\end{claim}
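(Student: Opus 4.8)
The plan is to reduce every row of the table to a single connectivity check. The two feasibility requirements built into condition (i) of Lemma~\ref{lem-Fx} are free here, since the feasibility of the individual solutions $(\mathcal{P}_0(i),M^0_P\cup N^{\mathcal{P}}_0(i),M^0_D\cup N^{\mathcal{P}}_0(i))$ and $(\mathcal{Q}_0(i),M^0_P\cup N^{\mathcal{Q}}_0(i),M^0_D\cup N^{\mathcal{Q}}_0(i))$ was already recorded just before the claim. Moreover, in this case $N^0_P=N^0_D$ and $N'^0_P=N'^0_D$, so both coordinates of the pair produced by the surgery coincide with the single undirected edge set $T=(N\cup N')\setminus\{\{0,v_i\},\{0,v_j\}\}\cup\{\{v_i,v_j\}\}$. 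As $(T,T^-)$ is a feasible pair of pickup and delivery tours for every Hamiltonian cycle $T$ on $V$ (the observation recalled in the introduction), condition (i) will hold as soon as I prove, in each entry of the table, that $T$ is a single Hamiltonian cycle on $V$.

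To run this check uniformly I would encode everything on the index cycle $\mathbb{Z}_m$ underlying $\{v_1,\ldots,v_m\}=M^0_P\cup M^0_D$. Each completion is a pairing of the indices of $\mathbb{Z}_m$ with the depot $0$ spliced onto one distinguished edge: $N^{\mathcal{P}}(i)$ of~(\ref{eq-odd-NPi}) is the reflection $k\mapsto 2i-k$ about $i$, while $N^{\mathcal{Q}}(i)$ of~(\ref{eq-odd-NQi-odd})--(\ref{eq-odd-NQi-even}) is, up to the three local corrections present in the $m\bmod 4=0$ subcase, the step-three pairing $E_3(i)$ of~(\ref{eq-E3}). The cycles of the $2$-regular multigraph $N\cup N'$ are exactly the orbits of the permutation $\sigma$ obtained by composing the two underlying pairings, and the surgery (bypassing $0$ and inserting $\{v_i,v_j\}$) splits one orbit open and reconnects the pieces. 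So for each row it suffices to identify $\sigma$ and count its orbits.

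For the reflection--reflection row this is immediate: composing the reflections about $i$ and about $j$ yields the translation $\sigma:k\mapsto k+2(j-i)$ of $\mathbb{Z}_m$, whose orbits, once the depot bypass and the added edge $\{v_i,v_j\}$ are accounted for, merge into a single cycle precisely when $\gcd(|j-i|,m)=1$; this is row (1). For rows (2)--(6) one pairing is of $E_3$-type, so $\sigma$ acquires steps of size three, and the residue of $m/2$ modulo $6$ governs how the three corrective edges of~(\ref{eq-odd-NQi-even}) and the two splice points fall relative to one another; a short orbit count, guided by Figure~\ref{fig-dapx-odd0-P+N-h1}, then shows that connectivity is equivalent to $i\not\equiv j\bmod 3$ when $m/2\equiv 3\bmod 6$ and to the parity condition $i\not\equiv j\bmod 2$ in the remaining residues, exactly as tabulated. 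Rows (7) and (8) are the residual $m/2\equiv 2\bmod 6$ configurations in which the naive choice leaves two cycles; replacing $N^{\mathcal{Q}}_0(i)$ by $N^{\mathcal{Q}}_0(i+4)$ (respectively $N^{\mathcal{Q}}_0(j)$ by $N^{\mathcal{Q}}_0(j+4)$) shifts one splice point by three $E_3$-steps and reconnects them, the admissible residues $((j-i)\bmod m)\bmod 6\in\{1,5\}$ (respectively $\{3,5\}$) being exactly those for which this repair closes up into one cycle.

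The main obstacle is bookkeeping rather than any new idea: the three corrective edges of~(\ref{eq-odd-NQi-even}) and the depot splice both live near the index centring $N^{\mathcal{Q}}$, so the clean translation picture of $\sigma$ is locally perturbed, and it is precisely this perturbation that forces the case split on $m/2\bmod 6$ and the $+4$ shift in rows (7)--(8). I would tame it by fixing once and for all the local drawing of $N^{\mathcal{Q}}$ around its centre, as in Figure~\ref{fig-dapx-odd0-P+N-h1}, checking connectivity locally there, and then propagating it globally through the translation structure of $\sigma$; each modular hypothesis in the table is then read off as exactly the condition under which the local repair yields a single Hamiltonian cycle on $V$.
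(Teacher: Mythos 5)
Your reduction of condition (i) of Lemma~\ref{lem-Fx} to a single Hamiltonicity check is exactly the paper's preliminary observation, and your treatment of row (1) --- composing the two reflections into the translation $k \mapsto k+2(j-i)$ and counting orbits --- is also the paper's argument. Even there, though, you gloss over the step where the hypothesis is actually used: the paper must track how the parities of $m/2$ and $j-i$ decide whether $N^{\mathcal{P}}_0(i)\cup N^{\mathcal{P}}_0(j)$ is one cycle or two, and then check in each case that deleting the two spliced edges and inserting $\{v_i,v_j\}$ and $\{v_{i+m/2},v_{j+m/2}\}$ reconnects the pieces into a single cycle rather than splitting them into two; ``merge into a single cycle precisely when $\gcd(|j-i|,m)=1$'' asserts this reconnection rather than proving it.

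The genuine gap is rows (2)--(8). Your proposal only states the program --- ``a short orbit count, guided by Figure~\ref{fig-dapx-odd0-P+N-h1}, then shows that connectivity is equivalent to \dots exactly as tabulated'' --- but that program \emph{is} the claim; nothing is verified. It is also not a short count: when $m\bmod 4=0$ the completion $N^{\mathcal{Q}}(i)$ is not $E_3(i)$ but $E_3(i)$ with three corrective edges (\ref{eq-odd-NQi-even}), and the paper's proof has to (a) write out $E_3$ and the corrected set $E'$ explicitly as one, two or three paths $C_0,C_1,C_2$ according to $m\bmod 6$, (b) locate, for each residue of $j\bmod 6$, on which of these paths and with which orientation the three edges removed by the second correction fall, and (c) check case by case that the reinsertion yields either one cycle, or two cycles whose two special edges lie in distinct cycles, so that the final swap closes up a Hamiltonian cycle on $V_0$. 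Moreover, your summary of the outcome --- connectivity ``equivalent to the parity condition $i\not\equiv j\bmod 2$ in the remaining residues'' --- contradicts the very table you are proving: for $m/2\equiv 4\bmod 6$ parity must be supplemented by $(j-i)\not\equiv\pm 1\bmod m$ (row (5)), and for $m/2\equiv 2\bmod 6$ parity alone never suffices, which is exactly why rows (6)--(8) exist, with their extra conditions on $((j-i)\bmod m)\bmod 6$ and the shifted completions $N_0^{\mathcal{Q}}(i+4)$, $N_0^{\mathcal{Q}}(j+4)$. So the delicate part of the statement --- which residues can be repaired, and by which pair of completions --- is precisely what your argument leaves undone.
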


\begin{proof}
All along the argument, indexes in $[m]$ are taken $\bmod$ $m$.
Preliminary note that $N,N'$ and $\{i,j\}$ satisfy condition (i) of Lemma \ref{lem-Fx} {\em iff}
$\left(N\cup N'\right)\backslash\{\{i,x\},\{x,j\}\}\cup\{\{i,j\}\}$ is a Hamiltonian cycle on $V$ {\em iff}
$\left(N\cup N'\right)\backslash\{\{i,x\},\{x,j\},\{i',x\},\{x,j'\}\}\cup\{\{i,j\},\{i',j'\}\}$ is a Hamiltonian cycle on $V_0$ for the two vertices $i'\neq i$ and $j'\neq j$ such that $\{i',x\}\in N$ and $\{j',x\}\in N'$.

\medskip
{\noindent} $\bullet$ $(1)$: preliminary note that $N^{\mathcal{P}}(1)=N_2$ and $N^{\mathcal{P}}(m)=N_1$.
More generally, assume {\em w.l.o.g.} that $i=m\equiv 0\bmod m$ and let $j$ be some integer in $[m]$. In $N^{\mathcal{P}}(0)$, a vertex $v_h\in V_0$ is adjacent to $v_{-h}$ if $h\notin\{0,m/2\}$, to $v_{h+m/2}$ otherwise. In $N^{\mathcal{P}}(j)$, $v_h=v_{j+(h-j)}$ is adjacent to $v_{j-(h-j)}=v_{2j-h}$ if $h\notin\{j,j+m/2\}$, to $v_{h+m/2}$ otherwise. Hence, starting from $v_0,v_{m/2},v_j$ and $v_{j+m/2}$, $N^{\mathcal{P}}(0)\cup N^{\mathcal{P}}(j)$ generates the sequences:
$$\begin{array}{ll}
	v_0,v_{2j},v_{-2j},v_{4j},v_{-4j},	\ldots	&v_{m/2},v_{2j-m/2},v_{-2j+m/2},v_{4j-m/2},v_{-4j+m/2},\ldots\\
	v_j,v_{-j},v_{3j},v_{-3j},v_{5j},	\ldots	&v_{j+m/2},v_{-j-m/2},v_{3j+m/2},v_{-3j+m/2},v_{5j-m/2},\ldots
\end{array}$$
Since $j$ is prime with $m/2$, $2rj\equiv 0\bmod m$ for some $r\in\mathbb{N}^*$ {\em iff} $rj\equiv 0\bmod(m/2)$ {\em iff} $r$ is a multiple of $m/2$. Now, we may have either $\left((0\equiv m/2) \not\equiv (j\equiv j+m/2)\right) \bmod 2$, or $\left((0\equiv j) \not\equiv (m/2\equiv j+m/2)\right) \bmod 2$, or $\left((0\equiv j+m/2) \not\equiv (m/2\equiv j)\right) \bmod 2$. If the first or the third case occur ({\em iff} $j$ is odd), then $(m/2)j\equiv m/2\bmod m$ whereas if the second case occurs ({\em iff} $j$ is even), then $(m/2)j\equiv 0\bmod m$. $N^{\mathcal{P}}(0)\cup N^{\mathcal{P}}(j)$ therefore takes the following expression depending on the partity of $m/2$ and $j$:
$$\begin{array}{c|c|c}
	N^{\mathcal{P}}(0)\cup N^{\mathcal{P}}(j)	&m/2	&j\\\hline
	\begin{array}{l}
		\{v_0,v_{2j},v_{-2j},v_{4j},\ldots,v_{(m/2)j}=v_{m/2},v_0\}\\
		\cup\ \{v_j,v_{-j},v_{3j},\ldots,v_{(m/2-1)j},v_{-(m/2-1)j}=v_{j+m/2},v_j\}
	\end{array}	&\textrm{even}	&\textrm{odd}\\\hline
	\begin{array}{l}
		\{v_0,v_{2j},v_{-2j},v_{4j},\ldots,v_{(m/2-1)j},v_{-(m/2-1)j}=v_j,\\
			\ v_{j+m/2},v_{-j-m/2},v_{3j+m/2},\ldots,v_{(m/2)j+m/2}=v_{m/2},v_0\}
	\end{array}	&\textrm{odd}	&\textrm{even}\\\hline
	\begin{array}{l}
		\{v_0,v_{2j},v_{-2j},v_{4j},\ldots,v_{(m/2-1)j},v_{-(m/2-1)j}=v_{j+m/2},\\
			\ v_j,v_{-j},v_{3j},\ldots,v_{(m/2)j}=v_{m/2},v_0\}.
	\end{array}	&\textrm{odd}	&\textrm{odd}
\end{array}$$
Hence, if $j$ is odd, then the set $\left(N^{\mathcal{P}}(0)\cup N^{\mathcal{P}}(j)\right)\backslash\{\{v_0,v_{m/2}\},\{v_j,v_{j+m/2}\}\}\cup\{\{v_0,v_j\},\{v_{m/2},v_{j+m/2}\}\}$ is a Hamiltonian cycle on $V_0$.

\medskip
{\noindent} $\bullet$ $(2)$: assume {\em w.l.o.g.} that $i=m$ and thus, $N^{\mathcal{P}}(0)=\{\{v_h,v_{-h}\},h=1,\ldots,m/2-1\}\cup\{\{v_0,v_{m/2}\}\}$. Let $j\in[m]$ such that $j\not\equiv i\bmod 3$. In $E(j)$, $v_h$ is adjacent to $v_{h+3}$ if $h\not\equiv j\bmod 2$ and to $v_{h-3}$ otherwise.
Note that, since $m\bmod 2=m\bmod 3=0$, $h\equiv -h \bmod 2$ and $h\equiv -h \bmod 3$, $h\in[m]$; by contrast, $h\pm 3\not\equiv h \bmod 2, h\in[m]$. $N^{\mathcal{P}}(0)\cup E(j)$ is the union of the two cycles 
$$\begin{array}{l}
	\left\{\begin{array}{ll}
			\{v_0, v_{-3}, v_3, v_6, v_{-6}, \ldots, v_{-m/2}=v_{m/2},v_0\}	&\textrm{if }j\textrm{ is even},\\
			\{v_0, v_3, v_{-3}, v_{-6}, v_6, \ldots, v_{m/2},v_0\}			&\textrm{if }j\textrm{ otherwise}
		\end{array}\right.\\
	\textrm{and }\{v_j, v_{-j}, v_{-j-3}, v_{j+3}, v_{j+6}, \ldots, v_{-j-3(m/3-1)},v_{j+3(m/3-1)}=v_{j-3}, v_j\}.
\end{array}$$
The edge set $\left(N^{\mathcal{P}}(0)\cup N^{\mathcal{Q}}(j)\right)\backslash\{\{v_0,v_{m/2}\},\{v_{j-3},v_j\}\}\cup\{\{v_0,v_j\},\{v_{m/2},v_{j-3}\}\}$ therefore is a Hamiltonian cycle on $V_0$.

\medskip
{\noindent} $\bullet$ $(3)$: The set $E_3$ defined as $E_3=\{\{v_h,v_{h+3}\}\ |\ 1\leq h \leq m\}\equiv (\mathbb{Z}/m\mathbb{Z},\pm 3)$ is a Hamiltonian cycle on $[m]$ {\em iff} $m\not\equiv 0\bmod 3$ {\em iff} $m/2\not\equiv 0\bmod 3$. When $m/2$ is odd and $i\not\equiv j\bmod 2$, then $N^{\mathcal{Q}}(i)\cup N^{\mathcal{Q}}(j)=E_3$ and thus, $N^{\mathcal{Q}}(i)\cup N^{\mathcal{Q}}(j)\backslash\{\{v_{i-3},v_i\},\{v_{j-3},v_j\}\}\cup\{\{v_{i-3},v_{j-3}\},\{v_i,v_j\}\}$ therefore is a Hamiltonian cycle on $V$. 

\medskip
{\noindent} $\bullet$ $(4),(5),(6),(7),(8)$: assume {\em w.l.o.g.} that $i=m\equiv 0\bmod m$. Let thus $j$ be some odd index in $[m]$. First consider the set 
$$\begin{array}{lll}
	E'	&=E_3	&\backslash \left\{ \{v_{-7},v_{-4}\}, \{v_{-5},v_{-2}\}, \{v_{-3},v_0\}	\right\}\\
		&		&\cup		\left\{ \{v_{-7},v_{-2}\}, \{v_{-5},v_{-3}\}, \{v_{-4},v_0\}	\right\}
\end{array}$$
The following Table provides an explicit description of the two edge sets $E_3$ and $E'$ and also identifies three paths $C_0,C_1,C_2$ in $E'$,  depending on $m\bmod 6$:
$$\begin{array}{l}\hline
	\multicolumn{1}{|c|}{m\equiv 0 \bmod 6}\\\hline
	E_3:\ 	\{v_0,v_3,\ldots,v_{-3},v_0\}, \{v_1,v_4,\ldots,v_{-5},v_{-2},v_1\}, \{v_{-1},v_2,\ldots,v_{-7},v_{-4},v_{-1}\}\\
	E':\ 	\{\underbrace{v_0,v_3,\ldots,v_{-3}}_{C_0},
				\underbrace{v_{-5},\ldots,v_4,v_1,v_{-2}}_{C_1},
				\underbrace{v_{-7},\ldots,v_2,v_{-1},v_{-4}}_{C_2},v_0\}\\\hline
	\multicolumn{1}{|c|}{m\equiv 2 \bmod 6}\\\hline
	E_3:\ 	\{v_0,v_3,\ldots,v_{-5},v_{-2},v_1,v_4,\ldots,v_{-7},v_{-4},v_{-1},v_2,v_5,\ldots,v_{-3},v_0\}\\
	E':\ 	\{\underbrace{v_0,v_3,\ldots,v_{-5}}_{C_0},
				\underbrace{v_{-3},\ldots,v_2,v_{-1},v_{-4}}_{C_2},v_0\},
		 \{\underbrace{v_{-2},v_1,v_4,\ldots,v_{-7}}_{C_1},v_{-2}\}\\\hline
	\multicolumn{1}{|c|}{m\equiv 4 \bmod 6}\\\hline
	E_3:\ 	\{v_0,v_3,\ldots,v_{-7},v_{-4},v_{-1},v_2,v_5,\ldots,v_{-5},v_{-2},v_1,v_4,\ldots,v_{-3},v_0\}\\
	E':\ 	\{\underbrace{v_0,v_3,\ldots,v_{-7}}_{C_0},
				\underbrace{v_{-2},v_1,v_4,\ldots,v_{-3}}_{C_1},
				\underbrace{v_{-5},\ldots,v_5,v_2,v_{-1},v_{-4}}_{C_2},v_0\}
\end{array}$$
Now consider $N^{\mathcal{Q}}(0)\cup N^{\mathcal{Q}}(j)$, that is, the edge set 
$$\begin{array}{ll}
	E'	&\backslash	\left\{ \{v_{j-7},v_{j-4}\}, \{v_{j-5},v_{j-2}\}, \{v_{j-3},v_j\}	\right\}\\
		&\cup		\left\{ \{v_{j-7},v_{j-2}\}, \{v_{j-5},v_{j-3}\}, \{v_{j-4},v_j\}	\right\}
\end{array}$$
Although the three edges $\{v_{j-7},v_{j-4}\},\{v_{j-5},v_{j-2}\},\{v_{j-3},v_j\}$ always lie on the paths $C_0,C_1,C_2$, their location and orientation depend on $m\bmod 6$ and $j$; the following Table locates these arcs on $C_0,C_1,C_2$ depending on $m\bmod 6$ and $j$:
$$\begin{array}{l|c|c|c}
															&C_0			&C_1				&C_2\\\hline
	\multicolumn{4}{|c|}{m\equiv 0 \bmod 6}\\\hline
	j\equiv 3\bmod 6 \Leftrightarrow j\in\{3,9,\ldots,-3\}	&(j-3,j)		&(j-2,j-5)			&(j-4,j-7)\\[3pt]
	j\equiv 1\bmod 6 \Leftrightarrow j\in\{1,7,\ldots,-5\}	&(j-7,j-4)		&(j,j-3)			&(j-2,j-5)\\\hline
	\multicolumn{4}{|c|}{m\equiv 2 \bmod 6}\\\hline
	j\equiv 3\bmod 6 \Leftrightarrow j\in\{3,9,\ldots,-5\}	&(j-3,j)		&(j-5,j-2)			&(j-4,j-7)\\
	j\equiv 1\bmod 6,j\neq -1,1								&(j-7,j-4)		&(j-3,j)			&(j-2,j-5)\\
	\ \ \ \ \ \ \ \ \Leftrightarrow	j\in\{7,13,\ldots,-7\}				&				&					&\\\hline
	\multicolumn{4}{|c|}{m\equiv 4 \bmod 6}\\\hline
	j\equiv 3\bmod 6, j\neq -1								&(j-3,j)		&(j-5,j-2)			&(j-4,j-7)\\
	\ \ \ \ \ \ \ \ \Leftrightarrow j\in\{3,9,\ldots,-7\}	&				&					&\\
	j=1														&				&(j-3,j),			&(j-2,j-5)\\
															&				&(j-7,j-4)			&
\end{array}$$
Considering for each of these six cases the way edges $\{v_{j-7},v_{j-2}\}$, $\{v_{j-5},v_{j-3}\}$, $\{v_{j-4},v_j\}$ reconnect the subchains generated by the removal of $\{v_{j-7},v_{j-4}\}$, $\{v_{j-5},v_{j-2}\}$, $\{v_{j-3},v_j\}$ from $E'$, we eventually obverse that $N^{\mathcal{Q}}(0)\cup N^{\mathcal{Q}}(j)$ consists of:
\begin{itemize}
	\item a cycle of the shape $\{v_0,\ldots,v_{j-4},v_j,\ldots,v_{-4},v_0\}$ when $m\equiv 0\bmod 6$ and $j\equiv 3\bmod 6$, or $m\equiv 2\bmod 6$ and $j\neq\pm 1\equiv 1\bmod 6$, the union of two cycles such that the two edges $\{v_0,v_{-4}\}$ and $\{v_j,v_{j-4}\}$ do not belong to the same cycle when $m\equiv 0\bmod 6$ and $j\equiv 1\bmod 6$, or $m\equiv 2\bmod 6$ and $j\equiv 3\bmod 6$, or $m\equiv 4\bmod 6$ and $j=1$; in both the two cases, $N^{\mathcal{Q}}(0)\cup N^{\mathcal{Q}}(j)\backslash\{\{v_{-4},v_0\},\{v_{j-4},v_j\}\}\cup\{\{v_{-4},v_{j-4}\},\{v_0,v_j\}\}$ is a Hamiltonian cycle on $V_0$.
	\item a cycle of the shape $\{v_0,\ldots,v_j,v_{j-4},\ldots,v_{-4},v_0\}$ when $m\equiv 4\bmod 6$ and $j\neq 1\equiv 1\bmod 6$; in this case, $N^{\mathcal{Q}}(0)\cup N^{\mathcal{Q}}(j)\backslash\{\{v_{-4},v_0\},\{v_{j-4},v_j\}\}\cup\{\{v_{-4},v_j\},\{v_0,v_{j-4}\}\}$ is a Hamiltonian cycle on $V$.
\end{itemize}
In order to conclude, finally observe that, given two indexes $i,j\in[m]$, 
	when $m\equiv 0\bmod6$,		$(j-i)\equiv 5\bmod 6$ {\em iff} $(i-j)\equiv 1\bmod 6$; 
	when $m\equiv 2\bmod 6$,	$(j-i)\equiv 5\bmod 6$ {\em iff} $(i-j)\equiv 3\bmod 6$; eventually,
	when $m\equiv 4\bmod 6$, 	$(j-i)\equiv 1\bmod 6$ {\em iff} $(i-j)\equiv 3\bmod 6$ and:
\begin{itemize}
	\item if $(j-i)\equiv 1\bmod 6$ and $j-i\neq 1$, 	then $(j-(i+4))\equiv 3\bmod 6$; otherwise, $(i+4)-j= 3$.
	\item if $(j-i)\equiv 3\bmod 6$ and $j-i\neq -1$, 	then $((j+4)-i)\equiv 1\bmod 6$; otherwise, $(j+4)-i= 3$.
	\item if $(j-i)\equiv 5\bmod 6$, then $(j-(i+4))\equiv 1\bmod 6$ and $((j+4)-i)\equiv 3\bmod 6$.
\end{itemize}
\end{proof}

\comment{ 
\begin{algorithm}\label{algo-dapx-odd-m1_is_1}
\caption{LOADING\_PLAN\_2}
\KwIn{A vertex set $V =\{0 ,\ldots, n\}$ where $n >0$, 
        a vertex $x\in V\backslash\{0\}$, 
        two perfect matchings $M^x_P$ and $M^x_D$ on $V\backslash\{x\}$
        that share some edge $(0, v_{0, 1})$}
\KwOut{A balanced $2$-rows loading plan $\mathcal{P}_x$ of $V\backslash\{0\}$}

\BlankLine
Find a vertex $y \in V\backslash\{0, x, v_{0, 1}\}$ that maximizes $\delta^x_P(0, y)+\delta^x_D(0, y)$\; 

\BlankLine
Load vertices of $V\backslash\{0, x\}$ according to the cycles of $M^x_P \cup M^x_D$ just as described in Algorithm \ref{algo-apx}, but ensuring that vertex $v$ is loaded at the end of row 1\;

\BlankLine
Load $x$ at the end of row 2\;

\BlankLine
\Return the obtained loading plan\;
\end{algorithm}

\begin{algorithm}\label{algo-dapx-odd_pack2}
\caption{LOADING\_PLAN\_3($V_x,x,M_P,M_D,d_P,d_D$)}

\BlankLine
\tcc{The multi-edge set $M_P\cup M_D$ consists of $h_x\geq 1$ cycles $(v^q_1,\ldots,v^q_{m_s},v^q_1)$ of even length, $q\in[h_x]$, $0=v^1_1$, $m_1\geq 4$}

$F\longleftarrow \left(W_1\backslash\{0,v_{1,m_1+1}\}\right)\times \left(V_x\backslash W_1\cup \{0,v_{1,m_1+1}\}\right)$\;
$e\longleftarrow \arg\mathrm{opt} \left\{\delta_{x,P}(e)+\delta_{x,D}(e)\,|\,e\in F\right\}$\;

\BlankLine
\tcp{Basic loading plan $\mathcal{P}$ on $V_x\backslash\{0\}$, depending on $v$}
$(P_1,P_2)\longleftarrow \left((),()\right)$;

\For{$q=2$ to $h_x$}
{
	\If{$v\notin V^q$}
	{
		Add $(v^q_1,\ldots,v^q_{m_s/2})$ at the end of $P_1$\;
		Add $(v^q_{m_s},\ldots,v^q_{m_s/2+1})$ at the end of $P_2$\;
	}
	\ElseIf{$v=v^q_j$}
	{
		Insert $(v^q_j,\ldots,v^q_{j+m_s/2-1})$ at the beginning of $P_1$\;
		Insert $(v^q_{j-1},\ldots,v^q_{j-m_s/2})$ at the beginning of $P_2$\;
	}
}
Insert $(v^1_2,\ldots,v^1_{m_1/2})$ at the beginning of $P_1$\;
Insert $(v^1_{m_1},\ldots,v^1_{m_1/2+1})$ at the beginning of $P_2$\;

\BlankLine
\tcp{Alternate loading plan $\mathcal{Q}$ on $V_x\backslash\{0\}$, depending on $u$}
\If{$u\in \{P_2[2r-1]: 1\leq 2r-1\leq m_1/2\}\cup\{P_1[2r]: 2\leq 2r\leq m_1/2\}$}
{
	Exchange $P_1[r]$ and $P_2[r]$ for any {\em odd} rank $r\in\{1,\ldots,m_1/2-1\}$\;
}
\Else
{
	Exchange $P_1[r]$ and $P_2[r]$ for any {\em even} rank $r\in\{1,\ldots,m_1/2-1\}$\;
}

\BlankLine
\tcp{Loading plans $\mathcal{Q}'_x$ and $\mathcal{Q}$ on $V_x\backslash\{0\}\cup\{x\}$, depending on $u,v$}
$(P'_1,P'_2)\longleftarrow (P_1,P_2)$\;
\If{$v=0$}
{
	Insert $x$ at rank $c$ in $P'_1$\;
}
\Else
{
	Insert $x$ at rank $m_1/2$ in $P'_1$\;
}

\BlankLine
Insert $x$ at the rank of $u$ in $P_1$\;

\BlankLine
\Return $\arg\mathrm{opt}
		\left\{\sum_{\alpha=P,D}d_\alpha\left(T(P_1,P_2)\right),\sum_{\alpha=P,D}d_\alpha\left(T(P'_1,P'_2)\right)\right\}$;
\end{algorithm}

\begin{algorithm}\label{algo-dapx-odd_pack3}
\caption{LOADING\_PLAN\_1($V_0,0,M_P,M_D,d_P,d_D$)}

\tcc{The multi-edge set $M_P\cup M_D$ consists of $h\geq 2$ cycles $(v_{s, 1},\ldots,v_{s, m_s},v_{s, 1})$ of even length, $s\in\{1 ,\ldots, h\}$}

\BlankLine
\tcp{Computation of $e_0$}
$(q_1,j_1,q_2,j_2)\longleftarrow\arg\mathrm{opt}_{q<q\in [p_0],j\in[m_s],j'\in[m_{q'}]}\left\{\sum_{\alpha=P,D}\delta_\alpha^0(v^q_j,v^{q'}_{j'})\right\}$\;

\BlankLine
\tcp{Load of the component that contains $i_0$}
$P_1\longleftarrow (v^{q_1}_{j_1},\ldots,v^{q_1}_{j_1+m_{q_1}/2-1})$; $P_2\longleftarrow (v^{q_1}_{j_1-1},\ldots,v^{q_1}_{j_1-m_{q_1}/2})$\;

\BlankLine
\tcp{Load of the components that contain neither $i_0$ nor $j_0$}
\For{$q=1$ to $p_0$, $q\neq q_1,q_2$}
{
	Add $(v^q_1,\ldots,v^q_{m_s/2})$ at the end of $P_1$\;
	Add $(v^q_{m_s},\ldots,v^q_{m_s/2+1})$ at the end of $P_2$;
}

\BlankLine
\tcp{Load of the component that contains $j_0$}
\If{$|V_0|/2$ is odd}
{
	Add $(v^{q_2}_{j_2-m_{q_2}/2+1},\ldots,v^{q_2}_{j_2})$ at the end of $P_1$\;
	Add $(v^{q_2}_{j_2+m_{q_2}/2},\ldots,v^{q_2}_{j_2+1})$ at the end of $P_2$\;
}
\Else
{
	Add $(v^{q_2}_{j_2-m_s/2},\ldots,v^{q_2}_{j_2-1})$ at the end of $P_1$\;
	Add $(v^{q_2}_{j_2+m_s/2-1},\ldots,v^{q_2}_{j_2})$ at the end of $P_2$\;
}

\BlankLine
\Return $(P_1,P_2)$;
\end{algorithm}

\begin{algorithm}\label{algo-dapx-odd_pack4}
\caption{LOADING\_PLAN\_4($V_0,0,M_P,M_D,d_P,d_D$)}

\tcc{The multi-edge set $M_P\cup M_D$ consists of a Hamiltonian cycle $C=(v_1,\ldots,v_m,v_1)$ on $V_0$}

\BlankLine
\If{$m/2\equiv 3 \bmod 6$}
{
	$\{i,j\}\longleftarrow\arg\mathrm{opt}\left\{\sum_{\alpha=P,D}\delta_\alpha^0(v_i,v_j)\,|\,1\leq i,j\leq m, i\not\equiv j\bmod 3\right\}$\;

	\BlankLine
	$\mathcal{P}\longleftarrow \mathcal{P}(i)$;	$\mathcal{P}'\longleftarrow \mathcal{Q}(j)$\;
}
\Else
{
	$\{i,j\}\longleftarrow\arg\mathrm{opt}\left\{\sum_{\alpha=P,D}\delta_\alpha^0(v_i,v_j)\,|\,1\leq i,j\leq m, i\not\equiv j\bmod 2\right\}$\;

	\BlankLine
	\If{$(j-i)\equiv \pm 1\bmod |V_0|$}
	{
		$\mathcal{P}\longleftarrow \mathcal{P}(i)$; $\mathcal{P}'\longleftarrow \mathcal{P}(j)$\;
	}
	\ElseIf{$((m/2)\bmod 6)\in\{1,5,0,1,4\}$}
	{
		$\mathcal{P}\longleftarrow \mathcal{Q}(i)$; $\mathcal{P}'\longleftarrow \mathcal{Q}(j)$\;
	}
	\ElseIf{$m/2\equiv 2\bmod 6$ and $\left((j-i)\bmod m\right)\bmod 6 \in\{1,5\}$}
	{
		$\mathcal{P}\longleftarrow \mathcal{Q}(i+4)$; $\mathcal{P}'\longleftarrow \mathcal{Q}(j)$\;
	}
	\ElseIf{$m/2\equiv 2\bmod 6$ and $\left((j-i)\bmod m\right)\equiv 3 \bmod 6$}
	{
		$\mathcal{P}\longleftarrow \mathcal{Q}(i)$; $\mathcal{P}'\longleftarrow \mathcal{Q}(j+4)$\;
	}
}

\BlankLine
\Return $\arg\mathrm{opt}_{\mathcal{P},\mathcal{P}'}\left\{\sum_{\alpha=P,D}d_\alpha(T(\mathcal{P})),d_\alpha(T(\mathcal{P}'))\right\}$;
\end{algorithm}

We derive from Claim \ref{claim-odd-NN} Algorithm \ref{algo-dapx-odd_pack4}. In order to establish that the loading plan it returns satisfies $(\ref{eq-fact3})$, one just has to argue that the tour $T_{*0}$ on $V_0$ necessarily links to vertices $i,j$ such that $i\not\equiv j\bmod b$ for any divisor $b\neq 1,m$ of $m$.

\subsubsection{Synthesis}

\begin{fact}\label{fact-stsp-dapx}
Using Algorithms \ref{algo-2stsp0-even}, \ref{algo-dapx-even} and \ref{algo-dapx-odd} depending on the input instance, one obtains within polynomial time a differential approximation guarantee of $\frac{1}{2}-\frac{1}{2|V|}$ for the $\mathsf{Max\,DTSPMS}$. This ratio is asymptotically tight.
\end{fact}

\begin{proof}
	It remains us to establish the tighness of the analysis. Likewise we did for the standard analysis, given two integers $o\neq w$, we consider instances $(I^{o,w}_s)_{q\geq 1}=(n=4q-1,k=2,c=2q,d_P,d_D)$ of the $\mathsf{DTSPMS}$ where $d_P$ and $d_D$ take value $o$ on the Hamiltonian cycle $\{0,1,\ldots,4q,0\}$ and $w$ every where else. 
We consider $o<w$ if the gooal is to minimize, $o>w$ if the goal is to maximize. Hence, on $I^{o,w}_s$, $OPT=2(4q)o$ whereas $WOR=2(4q)w$. Moreover,  Algorithm \ref{algo-dapx-even} may consider $M_P=M_D=\{\{2i-1,2i\}\,|\,i=1,\ldots,2q\}$ and builds from $M_P\cup M_D$ the following loading plan $\mathcal{P}$:
$$\begin{array}{llr}
	(	&4,\ldots, 4i\ \ \ \ \ ,\ldots, 4q-4,		&3, 4q-1, \ldots, 4i-1	,\ldots,	7)\\
	(1,	&5,\ldots, 4i+1 ,\ldots, 4q-3,	&2, 4q-2, \ldots, 4i-2	,\ldots,	6)
\end{array}$$
The edges of distance $o$ for $d_P$ and $d_D$ that are consistent with $\mathcal{P}$ coincide with $M_P=M_D$ and thus, the approximate solution takes value $APX=(4q)o +(4q)w$. Algorithm \ref{algo-2stsp0-even} therefore reaches on $I^{o,w}_s$ the differential ratio of:\\
	\hspace*{4cm}$\displaystyle\frac{(4q)o +(4q)w-2(4q)w}{2(4q)o-2(4q)w}=\frac{1}{2}$
\end{proof}
} 


\comment{ 
    Thus $e_x=\{v^1_j,v^q_{j'}\}$ for and index $j\in[m_1]\backslash\{1,m_1/2+1\}$ and two indexes $q,j'$ such that either $q\in[h_x]\backslash\{1\}$ and $j'\in[m_s]$, or $q=1$ and $j'\in\{1,m_1/2+1\}$. 
    Second, we introduce an alternate loading plan $\mathcal{Q}=(Q_1,Q_2)$ of $V_x\backslash\{0\}$. If $\mathcal{P}$ refers to the loading plan that Algorithm \ref{algo-apx} returns on $V_x$, then $\mathcal{Q}$ is obtained from $\mathcal{P}$ by exchanging for any even rank $r$ such that $1\leq r\leq m_1/2-1$ the two items of rank $r$ between rows $1$ and $2$. Thus $\mathcal{Q}$ coincides with $\mathcal{P}$ on $V_x\backslash V^1$ and on $V^1$, $\mathcal{Q}$ loads the sequences
    $$\begin{array}{lllllcllll}
    					&(v^1_2,		&v^1_{m_1-1}, 	&v^1_4		 &v^1_{m_1-3}	&,\ldots,	&v^1_{m_1/2}	&				&)	&\textrm{ in row }1\\
    	\textrm{and }	&(v^1_{m_1},	&v^3,			&v^1_{m_1-2} &v^1_5,		&,\ldots,	&v^1_{m_1/2+2},	&v^1_{m_1/2+1}	&)	&\textrm{ in row }2
    \end{array}$$
    if $m_1/2$ is even, it loads 
    $$\begin{array}{lllllcllll}
    					&(v^1_2,		&v^1_{m_1-1}, 	&v^1_4		 &v^1_{m_1-3}	&,\ldots,	&v^1_{m_1/2+2}	&				&)	&\textrm{ in row }1\\
    	\textrm{and }	&(v^1_{m_1},	&v^3,			&v^1_{m_1-2} &v^1_5,		&,\ldots,	&v^1_{m_1/2},	&v^1_{m_1/2+1}	&)	&\textrm{ in row }2
    \end{array}$$
    otherwise. Let $\Gamma$ be the cycle on $V_1\backslash\{0,v^1_{m_1/2+1}\}$ defined as follows according to the coordinates in $\mathcal{Q}$:
    \begin{align}\label{eq-stsp-Gamma}
    	\Gamma &=\{(1,1),\ldots,(1,m_1/2-1),(2,m_1/2-1),\ldots,(2,1),(1,1)\}
    \end{align}
    $\Gamma$ is the union of two perfect matchings $\Gamma_1,\Gamma_2$ on $V_1\backslash\{0,v^1_{m_1/2+1}\}$. Let $N_1,N_2$ refer to the two perfect matchings on $V_x$ that are associated to $M^x_P\cup M^x_D$ and $\mathcal{P}$, then we associate to $\mathcal{Q}$ the matchings $R_1$ and $R_2$ defined as follows:
    \begin{align}\label{eq-stsp-R}
    	\begin{array}{lclll}
    		R_1	&=	&\Gamma_1	&\cup \{ \{v^1_{m_1/2+1},v^2_{m_2}\} \}	&\cup \left(N_1\cap {\left(V_x\backslash V_1\cup\{0\}\right)}^2\right)\\
    		R_2	&=	&\Gamma_2	&\cup \{ \{v^1_{m_1/2+1},v^2_1\} \}		&\cup \left(N_2\cap {\left(V_x\backslash V_1\cup\{0\}\right)}^2\right)
    	\end{array}
    \end{align}
    The loading plan $\mathcal{Q}$ and the matchings $R_1,R_2$ are depicted in Figure \ref{fig-dapx-odd-P+N}. Using similar argument than for the former cases, one may again easily observe that:
    \begin{enumerate}
    	\item $(\mathcal{Q},M_P\cup R_1,M_D\cup R_1)$ and $(\mathcal{Q},M_P\cup R_2,M_D\cup R_2)$ are feasible solutions of the $\mathsf{DTSPMS}$ on $V_x$;
    	\item $R_1\cup R_2$ is the union of the cycle $\Gamma$ on $V_1\backslash\{0,v^1_{m_1/2+1}\}$ and some Hamiltonian cycle $\Gamma'$ on $V_x\backslash V_1\cup\{0,v^1_{m_1/2+1}\}$.
    \end{enumerate}
    Now assume {\em w.l.o.g.} that $\mathcal{Q}$ loads vertex $v^1_j$ in row $1$ (otherwise: reverse indexes on $V^1$, or equivalently exchange in $\mathcal{Q}$ the sequences $((1,1),\ldots,(1,m_1/2-1))$ and $((2,1),\ldots,(2,m_1/2-1))$ between rows $1$ and $2$), and let $r$ be some rank in $\{1,\ldots,m_1/2-2\}$ such that $v^1_j\in\{(1,r),(1,r+1)\}$. Then additionally assume {\em w.l.o.g.} that $\{(1,r),(1,r+1)\}\in R_1$ (otherwise: exchange $\Gamma_1$ and $\Gamma_2$ in $R_1$ and $R_2$), and let $u$ denote the vertex in $V^1$ such that $\{(1,r),(1,r+1)\}=\{v^1_j,u\}$.
    Finally assume {\em w.l.o.g.} that, if $e_x$ links $v^1_j$ to some vertex $v^q_{j'}$ for an index $q\neq 1$, then $e_x=\{v^1_j,v^2_1\}$ (otherwise: first exchange indexes of the two components $V^2$ and $V^q$, then index vertices in $V^2$ in such a way that vertex $v^q_{j'}$ takes index $1$). We define two loading plans $\mathcal{Q}_x,\mathcal{Q}'_x$ of $V\backslash\{0\}$, as well as their associated completions $R_{1,x},R_{2,x}$, obtained from $\mathcal{Q},R_1,R_2$ by:
    \begin{itemize}
    	\item for $\mathcal{Q}_x$ and $R_{1,x}$, inserting $x$ in row $1$ of $\mathcal{Q}$ at rank $r$ and on the edge $\{v^1_j,u\}$ of $R_1$;
    	\item for $\mathcal{Q}'_x$ and $R_{2,x}$, inserting $x$ in row $1$ of $\mathcal{Q}$ at rank $c$ and on the edge $\{v^p_{m_p/2+1},0\}$ of $R_2$ if $e_x=\{v^1_j,0\}$, inserting $x$ in row $1$ of $\mathcal{Q}$ at rank $m_1/2$ and on the edge $\{v^1_{m_1/2+1},v^2_1\}$ of $R_2$ if $e_x=\{v^1_j,v^2_1\}$ or $e_x=\{v^1_j,v^1_{m_1/2+1}\}$.
    \end{itemize}
    We deduce from the fact that $(\mathcal{Q},M_P\cup R_1,M_D\cup R_1)$ and $(\mathcal{Q},M_P\cup R_2,M_D\cup R_2)$ are feasible solutions on $V_x$ that $(\mathcal{Q},M_P\cup R_{1,x},M_D\cup R_{1,x})$ and $(\mathcal{Q}'_x,M_P\cup R_{2,x},M_D\cup R_{2,x})$ are feasible solutions on $I$. Let $u'=v^p_{m_p/2+1}$ if $v^q_{j'}=0$, $v^1_{m_1/2+1}$ if $v^q_{j'}=v^2_1$ and $v^2_1$ if $v^q_{j'}=v^1_{m_1/2+1}$; we also deduce from the fact that the perfect $2$--matching $R_1\cup R_2$ on $V_x$ is the union of two cycles $\Gamma,\Gamma'$ such that $v^1_j,u\in\Gamma$ and $v^q_{j'},u'\in\Gamma'$ that the set $\left(R_{1,x}\cup R_{2,x}\right)\backslash\{\{x,v^1_j\},\{x,v^q_{j'}\}\}\cup\{e_x\}$, which coincides with the set $\left(R_1\cup R_2\right)\backslash\{\{v^1_j,u\},\{v^q_{j'},u'\}\}\cup\{e_x,\{x,u\},\{x,u'\}\}$, is a hamiltonian cycle on $V$.
} 

\comment{ 
    When $h=1$, the set $\{V^q\times V^{q'}\ |\ 1\leq q<q'\leq p_0\}$ is empty. Moreover, we may not deduce a solution of the $\mathsf{DTSPMS}$ on $V$ from a solution on $V\backslash\{0\}$ by inserting vertex $0$ on the edges of $N_P\cup N_D$ that link a vertex loaded at rank $c$ in some row to a vertex loaded at rank $1$ in some row. We thus again reconsider both $F_0$ and the loading strategies.

    Hence, if one inserts $x$ at rank $(n -1)/2$ in row $1$ if $x$ is not the depot vertex, and if one inserts $x$ on the two edges $\{v^p_{m_p/2},v^1_{m_1}\}$ and $\{v^p_{m_p/2+1},v^1_1\}$ of the two perfect matchings $N_1$ and $N_2$, then we obtain a loading plan $\mathcal{P}$ of $V$ and two completions $N_0,N'_0$ such that:
} 

\end{document}